\theoremstyle{definition}
\newtheorem{definition}{\normalfont\bfseries Definition}
\newtheorem{theorem}{\normalfont\bfseries Theorem}
\newtheorem{remark}{\normalfont\bfseries Remark}
\newcommand{\A}{a}
\newcommand{\B}{b}
\newcommand{\aaa}{a}
\newcommand{\classK}{\alpha}
\def\BibTeX{{\rm B\kern-.05em{\sc i\kern-.025em b}\kern-.08em
    T\kern-.1667em\lower.7ex\hbox{E}\kern-.125emX}}
\begin{document}

\title{Safety-Critical Traffic Control \\ by Connected Automated Vehicles}

\author[1,2,*]{Huan Yu\thanks{* corresponding author\\ Email: huanyu@ust.hk}}
\author[1]{Chenguang Zhao}
\author[3]{Tamas G. Molnar}

\affil[1]{Thrust of Intelligent Transportation, The Hong Kong University of Science and Technology (Guangzhou), Nansha, Guangzhou, 511400, Guangdong, China.}
\affil[2]{Department of Civil and Environmental Engineering, The Hong Kong University of Science and Technology, Clear Water Bay, Hong Kong SAR, China.}
\affil[3]{Department of Mechanical and Civil Engineering,
California Institute of Technology, Pasadena, CA 91106, USA}

\renewcommand\Authands{ and }
\maketitle

\begin{abstract}
Connected automated vehicles (CAVs) have shown great potential in improving traffic throughput and stability. Although various longitudinal control strategies have been developed for CAVs to achieve string stability in mixed-autonomy traffic systems, the potential impact of these controllers on safety has not yet been fully addressed. This paper proposes {\em safety-critical traffic control (STC)} by CAVs---a strategy that allows a CAV to stabilize the traffic behind it, while maintaining safety relative to both the preceding vehicle and the following connected human-driven vehicles (HDVs). Specifically, we utilize control barrier functions (CBFs) to impart collision-free behavior with formal safety guarantees to the closed-loop system. The safety of both the CAV and HDVs is incorporated into the framework through a quadratic program-based controller, that minimizes deviation from a nominal stabilizing traffic controller subject to CBF-based safety constraints. Considering that some state information of the following HDVs may be unavailable to the CAV, we employ state observer-based CBFs for STC. Finally, we conduct extensive numerical simulations---that include vehicle trajectories from real data---to demonstrate the efficacy of the proposed approach in achieving string stable and, at the same time, provably safe traffic.

\end{abstract}

\begin{IEEEkeywords}
Connected Automated Vehicle, Mixed Traffic, Traffic Control, Safety-Critical Control, Control Barrier Function, State Observer
\end{IEEEkeywords}

\section{Introduction}

The efficiency of transportation is largely affected by the smoothness of the traffic flow.
As such, highway traffic often suffers from traffic oscillations, in which the vehicles driving on the road undergo repeated deceleration-acceleration motions.
These oscillations may even amplify along the road, indicating the so-called {\em string instability} of traffic, that ultimately leads to stop-and-go traffic congestion.
Congestion, on one hand, negatively impacts travel time and fuel consumption.
On the other hand, they also pose additional risk of conflict or collision between leader and follower vehicles, thereby adversely affecting traffic safety.
Therefore, significant research efforts have been invested into mitigating the traffic oscillations by stabilizing traffic, and the area of {\em traffic control} has emerged as a solution.

This work focuses on providing a safety-critical solution to traffic control through regulating the motions of connected automated vehicles (CAVs) traveling in the traffic flow.
Our goal is to control the CAVs such that they facilitate string stability for the traffic behind while also guaranteeing that a safe distance is always kept between the neighboring vehicles.
To highlight the need for and the benefits of the proposed approach,  we first give a brief overview on the relevant literature on stabilizing traffic controllers and traffic safety.

\subsection{Traffic Control by Connected Automated Vehicles}

Research work on traffic control relied on conventional {\em road-based traffic control} systems, such as ramp-metering or varying speed limits, that regulate traffic on spatially-fixed road segments in a centralized manner~\cite{horowitz2005design,yu2019traffic, zhang2019pi, zhang2016combined}.
Then, the rapid development of CAV technology has drawn extensive attention to {\em vehicle-based traffic control} of transportation systems through information sharing and coordination between CAVs~\cite{vcivcic2018traffic, vcivcic2021coordinating, molnar2021delayed, stern2018dissipation, talebpour2016influence,zheng2015stability, zheng2016distributed}.
Vehicle-based traffic controllers rely on regulating the driving behavior of CAVs, through which the smoothness and throughput of traffic is positively impacted.
These distributed, vehicle-based traffic control techniques are the main focus of our work.

There exist several linear and nonlinear feedback control strategies for regulating the motions of CAVs, that rely on various topologies in vehicle-to-vehicle (V2V) communication, different levels of automation on CAVs, and various formation geometries of multiple CAVs~\cite{li2017dynamical}.
Early studies focused on {\em adaptive cruise control (ACC)}~\cite{marsden2001towards} that automatically adjusts the speed of automated vehicles to maintain a safe distance from the preceding vehicle. Extended from ACC, {\em cooperative adaptive cruise control (CACC)}~\cite{milanes2013cooperative, alam2014guaranteeing} was developed to control platoons of CAVs by leveraging V2V connectivity.
CACC frameworks include, for example,  linear feedback controllers that mitigate disturbances in the platoon~\cite{milanes2013cooperative} and model predictive controllers that also guarantee safe distances between adjacent vehicles~\cite{massera2017safe}.
Furthermore, control strategies were also proposed for a single CAV instead of an entire platoon, such as the framework of {\em connected cruise control (CCC)}~\cite{orosz2016connected}.

Despite the promising future envisioned for fully connected and automated traffic systems, a long transition period is inevitable in which CAVs and human-driven vehicles (HDVs) coexist.
Many recent research works, therefore, have addressed mixed-autonomy traffic where only a portion of the vehicles are CAVs~\cite{cui2017stabilizing,jovanovic2016controller,zheng2020smoothing,wang2020controllability,wang2021leading,wu2021flow,zhu2018analysis,stern2018dissipation}.
In particular, some of these works have been focusing on how CAVs can influence the overall traffic.
In~\cite{cui2017stabilizing} a single automated vehicle was used to stabilize the traffic flow.
The ability of CAVs to dampen stop-and-go waves in mixed traffic was also demonstrated by experiments in~\cite{Ge2018, stern2018dissipation} and by theoretical analysis in~\cite{huang2020scalable,li2014stop,yu2018stabilization, Avedisov2022}.
The controllability and reachability of mixed-autonomy traffic were studied by~\cite{zheng2020smoothing}, which proved that a single CAV can stabilize traffic in the ring-road setting.

In this work, we rely on a specific traffic control strategy by CAVs, that was proposed as {\em adaptive traffic control (ATC)} in~\cite{Molnar2020cdc,molnar2023virtual} and as {\em leading cruise control (LCC)} in~\cite{wang2021leading, wang2021data}.
Both of these approaches control the CAV such that it simultaneously adapts its motion to follow the preceding HDVs and to lead the following HDVs---ultimately allowing the mitigation of traffic oscillations.
While these approaches provide the desired string stable behavior, they lack formal guarantees about maintaining safe distances between the vehicles, which our present work seeks to address.

\subsection{Safety of Mixed-autonomy Traffic}
Safety is of paramount importance for traffic systems. Although many studies address the potential improvement of efficiency~\cite{talebpour2016influence,milanes2013cooperative,shang2021impacts,kesting2010enhanced}, energy-saving~\cite{vahidi2018energy,rios2018impact} and stability~\cite{cui2017stabilizing,zheng2020smoothing,wang2020controllability} of the mixed traffic by adopting CAVs, only a few existing works have attempted to evaluate the impact of CAVs on traffic safety.
These works include studies about safety under different CAV penetration rates~\cite{ye2019evaluating, li2017evaluation,rahman2018longitudinal} and in signalized intersections~\cite{morando2018studying,arvin2020safety}.

The traffic safety is usually guaranteed by reducing the risk of rear-end collisions.
Therefore, a number of safety indicators, also known as surrogate safety measures, have appeared to evaluate the rear-end collision risk by establishing the relationship between the longitudinal safety and car-following state~\cite{vogel2003comparison,li2017evaluation,rahman2018longitudinal}.
Some of these safety measures are {\em time-based indicators}, such as time headway (TH) and time to collision (TTC), which are constructed based on the time that elapses before a collision could occur~\cite{vogel2003comparison}.
Others measures are {\em distance-based indicators}, such as distance headway (DH) and minimum stopping distance headway (SDH), which evaluate safety based on the spacing gap between the vehicles~\cite{Treiberbook, ro2020new}. 
These metrics are essential to construct safety-critical longitudinal controllers for CAVs.

Safety-critical control is usually approached by constraint-handling control methods such as classical optimal control, model predictive control, barrier Lyapunov function (BLF)~\cite{zhu2019barrier} and control barrier function (CBF)~\cite{ames2019control,krstic2021inverse,almubarak2021hjb,xiao2022high}. 
In this work, we rely on the framework of CBFs to maintain safety, due to its ability to take advantage of existing nominal controllers such as those that stabilize traffic.
Importantly, CBFs have been applied on a wide range of safety-critical systems, including adaptive cruise control~\cite{ames2014control} and its experimental implementation on heavy-duty trucks~\cite{alan2022cbfs},
automated vehicle experiments in multi-lane traffic~\cite{gunter2022experimental},
obstacle avoidance with automated vehicles~\cite{chen2018obstacle},
multi-agent systems representing automated vehicles~\cite{jankovic2022multiagent},
traffic merging~\cite{xiao2019decentralized}, and
roundabout crossing~\cite{abduljabbar2021cbfbased}.

Although the stabilization of traffic by CAVs and the safety-critical control of CAVs have been studied separately in the literature, the integration of these two  have not yet been accomplished, to the best of our knowledge.
Some works have conducted analysis in this area, such as~\cite{li2022trade} that quantified the trade-off between stability and safety, and~\cite{shi2021empirical,makridis2019response,gunter2019modeling} that pointed out that CAVs usually leave comparable or even longer headway than HDVs for safety concerns.
Yet, a control framework that directly addresses the objective of traffic stabilization while maintaining provable safety guarantees has not yet been established.
Our present work intends to fill this gap, via proposing safety-critical traffic control by CAVs.

\subsection{Contributions}

In this paper, we propose a {\em safety-critical traffic control (STC)} strategy for mixed-autonomy traffic systems, in which a single CAV leads several HDVs such that it both guarantees safety and achieves string stability.

The proposed STC framework is illustrated in Fig.~\ref{fig:framework} for single-lane mixed traffic including a head HDV, a CAV and several following HDVs.
The CAV, who detects the head HDV by range sensors or connectivity and is connected to the following HDVs, has access to the velocities $v_i$ of the vehicles and the spacings $s_i$ between them---or at least some of these data.
This information is used in a three-layer hierarchical control structure.
In the top layer, a {\em traffic operator} can be used to command the desired steady states (spacing and velocity) for the mixed traffic.
In the middle layer, a nominal {\em stabilizing controller} regulates the motion of the CAV in order to drive the traffic towards the desired steady states and fulfill the operator's command.
In the bottom layer, a {\em safety filter} modifies the nominal stabilizing control input using control barrier functions (CBFs) to guarantee safety, and then outputs the final safety-critical controller for the CAV.
This framework not only enables flexibility to choose different designs at each layer, but also provides privacy by keeping the details of each design to its own layer only.  

The STC framework is established through the following contributions.
\begin{itemize}
    \item A quadratic program-based controller is constructed for CAVs, that incorporates a nominal stabilizing traffic controller and safety-constraints derived from control barrier functions---the end result being safety and string stability simultaneously.
    \item Safety guarantees are provided both for the CAV as hard constraint and for the following HDVs as soft constraints.
    These guarantees are based on various safety measures (TH, TTC and SDH) that are analyzed and compared.
    \item STC is extended to the case of output feedback by the help of state observers to address scenarios in which the CAV does not have access to all the states of the HDVs.
    To this end, a state observer-based CBF approach is formulated.
    \item Extensive numerical simulations are conducted to show the benefits of the proposed STC, identify its limitations, and analyze the effect of parameters.
    The simulations also incorporate real trajectory data for the head HDV.
\end{itemize}

\begin{figure}[t!]
    \centering
    \includegraphics[width=0.75\linewidth]{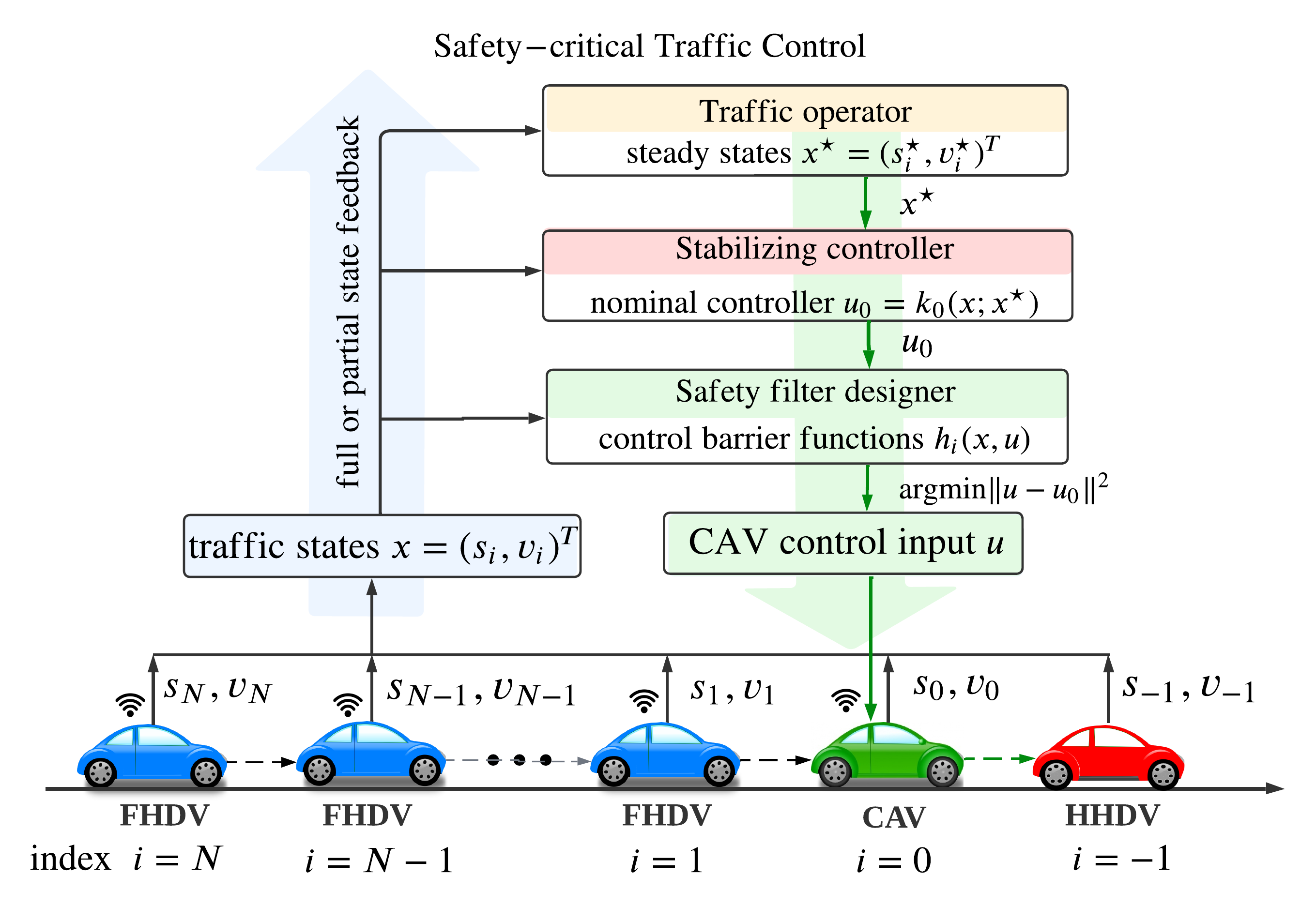}
    \caption{The proposed framework of {\em safety-critical traffic control (STC)}, wherein the traffic flow is stabilized in a provable safe fashion by regulating the motions of connected automated vehicles (CAVs) while leveraging information from vehicle-to-vehicle connectivity with neighboring connected human-driven vehicles (HDVs).
    The STC framework has hierarchical structure with a {\em traffic operator} at the top, that can be used to design the desired steady state of the traffic flow, a nominal {\em stabilizing controller} in the middle, that drives the CAV to mitigate traffic congestion without considering safety, and a {\em safety filter} at the bottom, that modifies the nominal control input to establish formal safety guarantees for collision avoidance between vehicles.}
    \label{fig:framework}
\end{figure}

The rest of this paper is organized as follows. Section~\ref{sec:smoothing} presents the modeling of mixed-autonomy traffic, and revisits a nominal control approach for CAVs to stabilize traffic.
Section~\ref{sec:safety} introduces the framework of STC, to modify the nominal controller and endow it with formal safety-guarantees.
To this end, safe spacing policies are described, the theory of control barrier functions is revisited, and state observers are utilized in case of output feedback.
Section~\ref{sec:simulation} provides extensive numerical simulations (that involve vehicle trajectories from real data) to demonstrate the efficacy of STC.
Section~\ref{sec:conclusions} concludes the paper.

\section{Smoothing Mixed-autonomy Traffic by Connected Automated Vehicles}
\label{sec:smoothing}

We first model the longitudinal dynamics of mixed-autonomy traffic.
We consider a single-lane scenario that involves a head vehicle, followed by a single connected automated vehicle (CAV) and $N$ subsequent human-driven vehicles (HDVs) that are connected to the CAV; see Fig.~\ref{fig:framework}.
We revisit the notion of string stability and design a nominal controller for the CAV that can stabilize the traffic behind it.
This controller will be endowed with provable safety guarantees in the next section.

\subsection{Modeling of Mixed-autonomy Traffic}
The mixed-autonomy traffic model constitutes two parts: a car-following model for HDVs and a model for the longitudinal control of the CAV.
The car-following dynamics of the HDVs (indexed ${i \in \{1, \ldots, N\}}$) are described as
\begin{align}
\begin{split}
\dot s_i &= v_{i-1} - v_{i}, \\
\dot{v}_{i} &=F_i\left(s_{i}, \dot{s}_{i},  v_{i}\right),
\end{split}
\label{eq:CF}
\end{align}
where $s_i$ represents the spacing (distance headway) between HDV $i$ and its predecessor vehicle $i-1$.
Function $F_i$ describes how HDV $i$ controls its acceleration for car-following as a function of its own velocity $v_i$, the spacing $s_i$ and the derivative $\dot s_i$ of the spacing (i.e., the relative speed). 
For the CAV (with index $i=0$), the longitudinal dynamics are governed by
\begin{align}
\begin{split}
    \dot{{s}}_{0} &= v_{-1}- {v}_{0}, \\
    \dot{{v}}_{0} &=u,
\end{split}
\label{eq:CAV_dyn}
\end{align}
where the acceleration of the CAV is the control input $u$ to be designed, and $v_{-1}$ is the velocity of the head vehicle (${i = -1}$).

The state of the mixed-autonomy traffic model \eqref{eq:CF}-\eqref{eq:CAV_dyn} is ${n=2N+2}$ dimensional, and it consists of the spacing and speed of the CAV and the following HDVs, given by ${x \in \mathbb{R}^n}$,
\begin{align}
x =\begin{bmatrix}
s_{0} & v_{0} & \ldots & s_{N} & v_{N}
\end{bmatrix}^\top.
\label{eq:state sv}
\end{align}
This leads to a general nonlinear mixed traffic model \begin{equation}
    \dot{x}=f(x,r) + g(x) u,  \label{system_nonlin}
\end{equation}
\begin{equation}
    f(x,r) = \begin{bmatrix}
    f_{0}(x,r) \\
    f_{1}(x) \\
    \vdots \\
    f_{N}(x)
    \end{bmatrix}, \quad
    g(x) = \begin{bmatrix}
    g_{0} \\ 0_{2 \times 1} \\ \vdots \\ 0_{2 \times 1}
    \end{bmatrix}, \quad
    f_{0}(x,r)=\begin{bmatrix}
    v_{-1}-v_{0} \\
    0
    \end{bmatrix}, \quad
    f_{i}(x)=\begin{bmatrix}
    v_{i-1}-v_{i} \\
    F_i(s_i,v_{i-1}-v_{i},v_i)
    \end{bmatrix}, \quad
    g_{0}=\begin{bmatrix}
    0 \\
    1
    \end{bmatrix},
    \label{eq:fgfunctions}
\end{equation}
where ${r=v_{-1}}$ is a time-varying reference signal, $0_{2 \times 1}$ is the two dimensional zero vector, and
${f: \mathbb{R}^n \times \mathbb{R} \to \mathbb{R}^n}$,
${g: \mathbb{R}^n \to \mathbb{R}^n}$.
Note that, alternatively, one could also formulate this model using the longitudinal positions $p_i$ of the vehicles' rear bumper, that are related to the spacings $s_i$ through the vehicle lengths $l_i$ as $s_{i} = p_{i-1} - p_{i} - l_{i}$.

For simplicity, we design a nominal controller for the CAV based on linearization of the dynamics.
Thus, we first consider the steady state of the system, given by the equilibrium speed $v^\star$ as $v_i(t) \equiv v^\star$ and the equilibrium spacing $s_i^\star$ as $s_i(t) \equiv s_i^\star$, ${i \in \{0, \ldots, N\}}$.
Note that the equilibrium spacing $s_i^\star$ of the HDVs (${i\in \{1, \ldots, N\}}$) satisfy
\begin{equation}
    F_i \left(s_i^{\star}, 0, v^{\star}\right)=0,
\end{equation}
and these may be different from the equilibrium spacing $s_0^\star$ of the CAV that is determined by its controller.
Potentially, these equilibrium states can also be assigned by a traffic operator, as described in Fig.~\ref{fig:framework}.
Using the perturbations
${\tilde{s}_{i}=s_{i}-s_i^{\star}}$ and
${\tilde{v}_{i}=v_{i}-v^{\star}}$,
the linearized dynamics can be obtained from \eqref{eq:CF}-\eqref{eq:CAV_dyn} as
\begin{align}
\begin{split}
    \dot{\tilde{s}}_{i} &= \tilde{v}_{i-1} - \tilde{v}_{i}, \\
    \dot{\tilde{v}}_{i} &= \aaa_{i1} \tilde{s}_{i} - \aaa_{i2} \tilde{v}_{i} + \aaa_{i3} \tilde{v}_{i-1},
\end{split}
\label{eq:CF_lin}
\end{align}
where $\aaa_{i1}=\frac{\partial F_i}{\partial s_i}, \aaa_{i2}=\frac{\partial F_i}{\partial \dot{s}_i}-\frac{\partial F_i}{\partial v_i}, \aaa_{i3}=\frac{\partial F}{\partial \dot{s}_i}$ are evaluated at the steady states, and
\begin{align}
\begin{split}
    \dot{\tilde{s}}_{0} &= \tilde{v}_{-1} - \tilde{v}_{0}, \\
    \dot{\tilde{v}}_{0} &= u.
\end{split}
\label{eq:CAV_lin}
\end{align}

The state of the linearized mixed-autonomy traffic model \eqref{eq:CF_lin}-\eqref{eq:CAV_lin} is
\begin{align}
x =\begin{bmatrix}
\tilde{s}_{0} & \tilde{v}_{0} & \ldots & \tilde{s}_{N} & \tilde{v}_{N}
\end{bmatrix}^\top,
\label{eq:state sv tilde}
\end{align}
hence \eqref{eq:CF_lin}-\eqref{eq:CAV_lin} can be written in the compact form
\begin{equation}
    \dot{x}=A x + B u + D r,
    \label{system}
\end{equation}
\begin{equation}
    A = \begin{bmatrix}
    P_{0} & & & \\
    Q_{1} & P_{1} & & \\
    & \ddots & \ddots & \\
    & & Q_{N} & P_{N}
    \end{bmatrix}\!, \;
    B = \begin{bmatrix}
    b_{0} \\ 0_{2 \times 1} \\ \vdots \\ 0_{2 \times 1}
    \end{bmatrix}\!, \;
    D = \begin{bmatrix}
    d_{0} \\ 0_{2 \times 1} \\ \vdots \\ 0_{2 \times 1}
    \end{bmatrix}\!, \;
    P_{0}=\begin{bmatrix}
    0 & -1 \\
    0 & 0
    \end{bmatrix}\!, \;
    P_{i}=\begin{bmatrix}
    0 & -1 \\
    \aaa_{i1} & -\aaa_{i2}
    \end{bmatrix}\!, \;
    Q_{i}=\begin{bmatrix}
    0 & 1 \\
    0 & \aaa_{i3}
    \end{bmatrix}\!, \;
    b_{0}=\begin{bmatrix}
    0 \\
    1
    \end{bmatrix}\!, \;
    d_{0}=\begin{bmatrix}
    1 \\
    0
    \end{bmatrix}\!,
    \label{eq:ABmatrices}
\end{equation}
where
$r = \tilde{v}_{-1}$ and
${A\in\mathbb{R}^{n \times n}}$,
${B, D \in \mathbb{R}^{n}}$.

\subsection{Nominal Controller Design and Head-to-tail String Stability}

Now we design a nominal control input $u_0$ that allows the CAV to follow the head vehicle while leading $N$ HDVs such that the resulting traffic flow is smooth---specficially, {\em string stable}.
In the next section, we modify $u_0$ to an input $u$ that also has formal safety guarantees for collision avoidance (without restriction on the choice of the nominal controller $u_0$).

Connectivity allows the CAV to obtain the speeds $v_i$ and positions $p_i$ of the HDVs based on GPS and basic safety messages.
This information characterizes the state of the human-driven traffic behind the CAV, hence the CAV may use these data to control and stabilize the traffic flow, ultimately mitigating traffic congestions.
This led to the idea of \emph{adaptive traffic control (ATC)} in~\cite{Molnar2020cdc, molnar2023virtual}, where the CAV responds to the state $v_i$, $p_i$ of the HDVs (obtained from connectivity) while considering the state $v_{-1}$, $p_{-1}$ of the head vehicle (provided by range sensors or connectivity), using a control law of the form
\begin{equation}
    u_0 = \bar{F}_0(p_{-1}, v_{-1}, p_0, v_0, p_1, v_1, \ldots, p_N, v_N).
    \label{eq:ATC}
\end{equation}
Note that if the CAV is not connected to an HDV and does not have access to its state, the corresponding state $p_i$, $v_i$ is omitted from the response.
Examples for ATC~\eqref{eq:ATC} and their analysis in the presence of response delays can be found in~\cite{molnar2023virtual}.

In special cases, the positions $p_i$ can be converted into the spacings $s_i$, resulting in a controller:
\begin{equation}
    u_0 = F_0(v_{-1}, s_0, v_0, s_1, v_1, \ldots, s_N, v_N).
\end{equation}
This conversion requires that the CAV is connected to all following HDVs and all longitudinal positions $p_i$ and vehicle lengths are known to the CAV.
Alternatively, if only some of the HDVs are connected, they can potentially communicate their spacing (i.e., realize cooperative perception).
Furthermore, if the CAV even has access to the perturbations $\tilde{s}_i$, $\tilde{v}_i$ (not just $s_i$, $v_i$), then a linear controller can be formulated as
\begin{equation}
    u_0 = \aaa_{1} \tilde{s}_{0} - \aaa_{2} \tilde{v}_{0} + \aaa_{3} \tilde{v}_{-1} +
    \!\!\sum_{i=1}^{N}\!\!
    \left(\mu_{i} \tilde{s}_{i} + k_{i} \tilde{v}_{i} \right),
    \label{eq:nominal controller}
\end{equation}
where $\mu_{i}, k_{i}$ are the feedback gains corresponding to the state of vehicle $i$.
These gains can be designed based on the linear model~\eqref{system}.
Controller~\eqref{eq:nominal controller} is called \emph{leading cruise control (LCC)}, and it was proposed in~\cite{wang2021leading}.

The above traffic control strategies can be compactly written as
\begin{equation}
    u_0 = k_0(x,r),
    \label{eq:nominal_nonlin}
\end{equation}
where the state $x$ and reference signal $r$ can be associated either with the nonlinear model~\eqref{system_nonlin} or with the linear model~\eqref{system}.
For simplicity, the examples of this paper are presented for the linear model~\eqref{system} and the linear LCC strategy~\eqref{eq:nominal controller} as nominal controller, under the assumption that the parameters of the HDVs are identical:
${\aaa_{i1}=\aaa_{1}}$,
${\aaa_{i2}=\aaa_{2}}$,
${\aaa_{i3}=\aaa_{3}}$,
${\forall i \in \{1, \ldots, N\}}$.
The general framework of STC, however, could accomodate any other nominal stabilizing controller of the form~(\ref{eq:nominal_nonlin}).

In order to design the gains $\mu_{i}, k_{i}$ of the nominal controller~\eqref{eq:nominal controller}, first we notice that the pair $(A,B)$ is controllable for $\aaa_1 - \aaa_2\aaa_3 + \aaa_3^2 \neq 0$, which was proven in~\cite{wang2021leading}.
Under this condition, we can design the control gains such that
speed perturbations decay along the chain of vehicles and traffic congestions are mitigated, i.e., the closed-loop system is {\em string stable}.
While there exist various definitions of string stability~\cite{feng2019string}, suited both for nonlinear and linear dynamics, we focus on one particular notion, \emph{head-to-tail string stablity}\cite{zhang2016motif}, that is convenient for the analysis of linear systems.

Head-to-tail string stability is frequently used to evaluate the CAV's ability to attenuate velocity fluctuations, thereby mitigating congestions, smoothing and stabilizing traffic.
Given the Laplace transforms $\widetilde{V}_{-1}(s)$,  $\tilde{V}_{N}(s)$ of the velocity perturbations $\tilde{v}_{-1}(t)$, $\tilde{v}_{N}(t)$ of the head and tail vehicles, the head-to-tail transfer function \cite{zhang2016motif} is defined as
    \begin{equation}
    G(s)=\frac{\widetilde{V}_{N}(s)}{\widetilde{V}_{-1}(s)}.
    \end{equation}
This function characterizes how much speed perturbations amplify along the mixed chain of $N+2$ vehicles.
Head-to-tail string stability holds if and only if \cite{zhang2016motif}
    \begin{equation}
    |G({\rm j} \omega)|^{2}<1, \quad \forall \omega>0,
    \label{eq:string stable condition}
    \end{equation}
where ${\rm j}^2=-1$ and $|.|$ denotes the modulus.
This condition means that the speed fluctuations of the tail vehicle are smaller in steady state than those of the head vehicle for any angular frequency $\omega$.

For the linearized system~\eqref{system} with controller~\eqref{eq:nominal controller},~\cite{wang2021leading} derived the head-to-tail transfer function in the form
\begin{equation}\label{eq:transfer function gammas}
    G(s) = \frac{\phi(s)}{\psi(s)-\sum_{i=1}^{N} \left( \mu_i \left(\frac{\phi(s)}{\psi(s)} -1\right) +k_i s \right) \left(\frac{\psi(s)}{\phi(s)} \right)^i } \left(\frac{\phi(s)}{\psi(s)} \right)^N,
\end{equation}
with 
\begin{equation}
    \phi(s) = \aaa_3  s + \aaa_1, \quad \psi(s) = s^2 + \aaa_2 s + \aaa_1.
\end{equation}
This expression enables one to design $\mu_i$ and $k_i$ such that~(\ref{eq:string stable condition}) holds and head-to-tail string stability is attained; see~\cite{wang2021leading}.

\subsection{Observer Design}
\label{section observability and observer}

Notice that controllers of the form~\eqref{eq:nominal_nonlin} rely on the full state $x$ of the system.
As stated in \cite{wang2022deep}, however, it may be difficult to measure all state variables in $x$.
Instead, an output $y \in \mathbb{R}^{n_y}$ may be available, given by a function $c:\mathbb{R}^n \to \mathbb{R}^{n_y}$ of the state:
\begin{equation}
    y=c(x).
    \label{eq:output_nonlin}
\end{equation}
For example, due to the heterogeneous behaviors of HDVs, their equilibrium spacing $s_i^\star$ may not be captured accurately by the car-following model \eqref{eq:CF}, which makes it infeasible to measure the spacing errors $\tilde{s}_i$. On the other hand, the equilibrium spacing of the CAV is designed, thus both its spacing error and velocity error could potentially be measured.
In this setting, the output of the system can be written as:
\begin{equation}
\begin{aligned}
    y &=
    \begin{bmatrix}
    \tilde{s}_{0} & \tilde{v}_{0} & \tilde{v}_{1} & \tilde{v}_{2} & \ldots & \tilde{v}_{N}
    \end{bmatrix}^{\top},
\end{aligned}
\label{eq:y}
\end{equation}
that is associated with the linear model~\eqref{system} by
\begin{equation}
    y = C x,
    \label{eq:obs output y}
\end{equation}
\begin{equation}\label{eq:obs matrix C}
    \begin{array}{c}
    C=\begin{bmatrix}
    c_{0} & & & \\
    & c_{1} & & \\
    & & \ddots & \\
    & & & c_{N}
    \end{bmatrix}, \quad
    c_{0}=\begin{bmatrix}
    1 & 0 \\
    0 & 1
    \end{bmatrix}, \quad c_{i}=\begin{bmatrix}
    0 & 1
    \end{bmatrix}.
    \end{array}
\end{equation}
Note that other outputs could also be considered with a different matrix $C$ or function $c$.
This includes the case of partial connectivity where the states of non-connected HDVs are unknown, which we plan to investigate in the future.

In order to overcome that only the output $y$ is available instead of the full state $x$, we use state observers to establish an estimate $\hat{x}$ of the state using the output.
State observers are usually constructed as dynamical systems that describe the evolution of the estimated state $\hat{x}$ with corresponding estimated output $\hat{y}=\tilde{c}(\hat{x})$.
For example, control-affine observers are of the form
\begin{equation}
    \dot{\hat{x}} = \tilde{f}(\hat{x},r,y) + \tilde{g}(\hat{x},r,y) u
    \label{eq:observer_nonlin}
\end{equation}
where the dynamics $\tilde{f}$, $\tilde{g}$ are designed so that $\hat{x}$ converges to the true state $x$.
That is, the observer estimation error
\begin{equation}
e=\hat{x}-x
\label{error}
\end{equation}
needs to converge to zero with time.

For simplicity, we focus on observing the internal states of the linear model~\eqref{system} using the linear output~\eqref{eq:obs output y}.
First, we notice that the pair $(A,C)$ is observable for $\aaa_1 - \aaa_2\aaa_3 + \aaa_3^2 \neq 0$, which was proven in~\cite{wang2021leading}.
Under this condition, we design a Luenberger observer~\cite{luenberger1971introduction, callier2012linear}:
\begin{equation}
    \dot{\hat{x}} = A \hat{x} + B u + L(y-\hat{y}) + D r,
    \label{eq:obs}
\end{equation}
where
${\hat{y} = C \hat{x}}$,
${\tilde{f}(\hat{x},r,y) = A \hat{x} + D r + L(y - \hat{y})}$,
${\tilde{g}(\hat{x},r,y) = B}$,
while $L \in \mathbb{R}^{n \times n_y}$ is the constant observer gain to be designed.
Thus, we express the time derivative of~\eqref{error} as
\begin{equation}
    \dot{e}=(A-L C) e,
    \label{error_system}
\end{equation}
where $L$ is chosen using pole placement so that matrix $A-LC$ is Hurwitz and the error dynamics~\eqref{error_system} are exponentially stable.

Then, the observer error of is bounded as follows:
\begin{equation}
\|e(t)\| \leq M(t), \quad \forall t \geq 0,
\label{M(t)}
\end{equation}
where ${\|.\|}$ is Eucledian norm, while ${M(t) \geq 0}$ is a known time-varying function whose values are non-negative and that depends on $x_{0}$, $\hat{x}_{0}$.
Specifically, for the Luenberger observer~\eqref{eq:obs} we have the bound
\begin{equation}
\|e(t)\| \leq M_0 {\rm e}^{-\lambda t}, \quad \forall t \geq 0,
\label{eq:observer_error_bound}
\end{equation}
where $M_0 \geq 0$ depends on ${e_0 = \hat{x}_0 - x_0}$, whereas ${\lambda > 0}$ is associated with the real part of the rightmost eigenvalue of ${A - L C}$.
Note that one could also design observers of the form~\eqref{eq:observer_nonlin} for the nonlinear system~\eqref{system_nonlin} with the general output~\eqref{eq:output_nonlin}, such that they satisfy~\eqref{M(t)} with some $M(t)$; see for example~\cite{wang2021observer}.
However, nonlinear observer design is out of scope of this paper, and instead we focus on safety-critical control.

\section{Safety-critical Traffic Control by Connected Automated Vehicles}
\label{sec:safety}

While a well-designed controller of form \eqref{eq:nominal_nonlin} may achieve string stable behavior for the mixed vehicle chain, it may not guarantee that vehicles maintain safe distances between each other.
Na\"ive stabilization of the mixed traffic could lead to unsafe driving conditions for individual vehicles. Therefore, now we propose the strategy of \emph{safety-critical traffic control (STC)}, in which we synthesize a safety-critical controller for the CAV using control barrier functions and a nominal input that stabilizes traffic.
Firts, two main safety requirements are identified, and various safe spacing policies are discussed.
Then, control barrier functions are used to endow the controlled traffic system with safety for a wide range of nominal controllers.
Finally, the case of output feedback is addressed.
The end result is string stability with formal safety guarantees.

\subsection{Safety Requirements and Safe Spacing Policies}
\label{sec:subsec:safety requirement}
We address two main requirements that are needed to avoid safety-critical failures for the STC system.
The first requirement is the {\em CAV's safety}.
This means that the CAV ($i=0$) maintains a safe distance to the head vehicle ($i=-1$), i.e., specific constraints on the spacing $s_0(t)$ must be satisfied for all time, regardless of the HDVs' driving behavior.
This guarantees the avoidance of collision between the CAV and the head vehicle, including situations, for example, when the head vehicle brakes abruptly due to an unexpected cut-in vehicle or crossing pedestrians.
The second requirement is the {\em HDVs' safety}.
This means that all of the $N$ HDVs ($i\in\{1, \ldots, N\}$) behind the CAV maintain safe distances, i.e., $s_{i}(t)$ satisfy safety constraints, provided that the HDVs drive according to a selected car-following model.
This helps avoid collision for the HDVs due to human mistakes caused by driving fatigue or distractions.

To satisfy these requirements, we provide a set of \emph{safe spacing policies} that constrain the spacings $s_i$ ahead of the CAV and HDVs.
The {\em time headway (TH)} policy guarantees a minimum constant time headway ${\tau>0}$  based on the spacing and speed:
\begin{equation}
    s_i \geq \tau v_i.
    \label{eq:safety:TH}
\end{equation}
Similarly, the {\em time-to-collision (TTC)} policy maintains a minimum time-to-collision ${\tau>0}$ based on the relative speed:
\begin{equation}
    s_i \geq \tau (v_i-v_{i-1}).
    \label{eq:safety:TTC}
\end{equation}
The {\em stopping distance headway (SDH)} policy guarantees a minimum stopping distance while incorporating the follower vehicle's braking limit $a_{i,\min}$ in addition to the time-to-collision~\cite{Treiberbook}:
\begin{equation}
    s_i \geq \tau (v_i-v_{i-1}) + \frac{(v_i-v_{i-1})^2}{2 \vert a_{i,\min}\vert }.
    \label{eq:safety:SDH}
\end{equation}
Finally, the {\em distance headway (DH)} policy guarantees a minimum constant spacing ${s_{i,\min}}$ between vehicles: $s_i \geq s_{i,\min}$.
However, this choice disregards the speed of the vehicles, hence we rather focus on the first three spacing policies.
Similarly, for the linearized system \eqref{system}, we can write the safe spacing policies as
\begin{align}
    \tilde s_i & \geq \tau (\tilde v_i + v^\star) - s_i^\star, \label{lth}\\
      \tilde  s_i& \geq\tau( \tilde v_i- \tilde v_{i-1}) - s_i^\star,\label{lttc}\\
    \tilde  s_i&\geq\tau (\tilde v_i-\tilde v_{i-1}) + \frac{(\tilde v_i-\tilde v_{i-1})^2}{2|a_{i,\min}|} - s_i^\star.\label{lsdh}
\end{align}
Next, we will introduce control barrier functions to guarantee the satisfaction of these safe spacing policies in the closed-loop system.    
Furthermore, we will compare the effectiveness of these policies in safety enforcement in mixed traffic control.

The above safe spacing criteria can be stated equivalently as $x \in \mathcal{C}_i$, i.e., the state $x$ remains inside a safe set $\mathcal{C}_{i}$, given by
\begin{equation}
    \mathcal{C}_{i}=\left\{x \in \mathbb{R}^{n}: h_{i}(x) \geq 0\right\},
    \label{eq:safeset_LCC}
\end{equation}
where the different candidates for function $h_i(x)$ are defined by
\begin{align}
    h_{i}^{\rm TH}(x) & =  s_i -\tau  v_i, \label{eq:h_TH} \\
    h_{i}^{\rm TTC}(x) & =  s_i - \tau (v_i-  v_{i-1}), \label{eq:h_TTC} \\
    h_{i}^{\rm SDH}(x) & =  s_i- \tau ( v_i- v_{i-1}) - \frac{( v_i- v_{i-1})^2}{2|a_{i,\min}|}, \label{eq:h_SDH}
\end{align}
and by the corresponding expressions with tilde.

In order to maintain safety, we seek to achieve that if the traffic system is safe initially, then the CAV's controller prevents safety violations for all time.
That is, for any initial condition ${x_0\in \mathcal{C}_{i}}$  the state of the closed-loop system must satisfy ${x(t)\in \mathcal{C}_{i}}$ for all ${t\geq0}$ and for all ${i\in \{0,\ldots,N\}}$.
In other words, the controller must render the set $\mathcal{C} = \bigcap_{i=0}^{N} \mathcal{C}_{i}$ forward invariant under the closed-loop dynamics.
We achieve this goal by using control barrier functions.

\begin{figure}[t!]
    \centering
    \includegraphics[width=18cm]{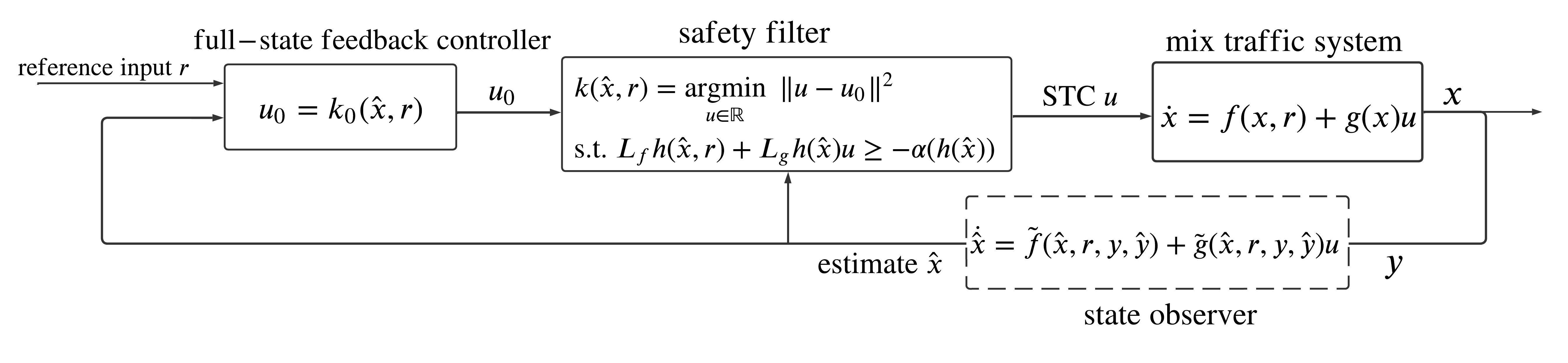}
    \caption{Block diagram of full-state feedback safety-critical traffic control (STC) and observer-based output feedback STC.
    STC modifies a nominal controller that stabilizes mixed traffic but is not necessarily safe to a safety-critical controller by using a safety filter that includes control barrier functions (CBFs).
    The dashed block represent the observer design for state estimation---this block can be omitted in case of full state feedback but is included in case of output feedback.
    }
    \label{diagram}
\end{figure}

\subsection{Guaranteed Safety via Control Barrier Functions and Quadratic Programs}
\label{cbf_subsection}
To design a safety-critical controller, we revisit the theory of control barrier functions.
We consider affine control systems:
\begin{equation}
    \dot x=f(x,r)+g(x)u,
    \label{affine}
\end{equation}
where $x \in \mathbb{R}^{n}$, $r \in \mathbb{R}^{l}$ and $u \in \mathbb{R}^{m}$, while $f: \mathbb{R}^{n} \times \mathbb{R}^{l} \to \mathbb{R}^{n}$ and $g: \mathbb{R}^{n} \to \mathbb{R}^{n\times m}$ are locally Lipschitz continuous.
This may represent the nonlinear mixed traffic system~\eqref{system_nonlin}, or the linear system \eqref{system} by the substitution $f(x,r)=A x + D r$, $g(x) = B$.
We seek to keep the state $x$ inside a safe set $\mathcal{C} \subset \mathbb{R}^{n}$, that is defined by a continuously differentiable function $h:\mathbb{R}^n \to \mathbb{R}$ as
\begin{equation}
    \mathcal{C}=\left\{x \in \mathbb{R}^{n}: h(x) \geq 0\right\},
    \label{eq:safeset}
\end{equation}
cf.~\eqref{eq:safeset_LCC}, where $h$ satisfies ${h(x) = 0 \implies \frac{\partial h}{\partial x} (x) \neq 0}$.
We achieve this goal via control barrier functions.
\begin{definition}[control barrier function\cite{ames2014control}]
Function $h$ is a \emph{control barrier function (CBF)} for system \eqref{affine} if there exists an extended class-$\mathcal{K}_{\infty}$ function\footnote{A continuous function $\classK: \mathbb{R} \to \mathbb{R}$ belongs to extended class-$\mathcal{K}_{\infty}$ if it is strictly monotonically increasing, $\classK(0)=0$ and $\lim_{r \to \pm \infty} \classK(r) = \pm \infty$.}  $\classK$ such that for all ${x\in \mathbb{R}^n}$:
\label{cbf}
\begin{equation}
    \sup_{u \in \mathbb{R}^m}\big[\underbrace{L_{f} h(x,r)+L_{g} h(x) u}_{\dot{h}(x,r,u)}\big] >-\classK(h(x)),
    \label{eq:CBF_condition}
\end{equation}
where
${L_{f} h(x,r) = \frac{\partial h}{\partial x}(x) f(x,r)}$
and
${L_{g} h(x) = \frac{\partial h}{\partial x}(x) g(x)}$.
\end{definition}

The expression in \eqref{eq:CBF_condition} inside the brackets is the first time derivative of $h$ along the system~\eqref{affine}.
Note that condition \eqref{eq:CBF_condition}  sufficiently holds if $L_{g} h(x) \neq 0$ for all ${x \in \mathbb{R}^n}$.
In such a case, the first derivative of $h$ is affected by the input $u$ regardless of the state $x$---this is termed as $h$ has \emph{relative degree one}. 
Given a CBF, the following result on safety was established by \cite{ames2014control}.

\begin{theorem}[guaranteed safety via CBF\cite{ames2014control}] \label{thm:CBF}
\textit{
If $h$ is a CBF for system \eqref{affine}, then any locally Lipschitz continuous controller ${k : \mathbb{R}^n \times \mathbb{R}^l \to \mathbb{R}}$, ${u = k(x,r)}$, that satisfies
\begin{equation}
    L_{f} h(x,r)+L_{g} h(x) u \geq -\classK(h(x)),
    \label{eq:safety_condition}
\end{equation}
renders the set $\mathcal{C}$ in~\eqref{eq:safeset} forward invariant (safe) under the closed-loop dynamics, such that
${x_0 \in \mathcal{C} \implies x(t) \in \mathcal{C}}$, ${\forall t \geq 0}$.
}
\end{theorem}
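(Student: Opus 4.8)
The plan is to reduce the forward-invariance claim to a one-dimensional differential inequality for $h$ evaluated along closed-loop trajectories, and then invoke a comparison principle. First I would substitute the controller $u = k(x,r)$ into \eqref{affine} to obtain the closed-loop vector field $f_{\rm cl}(x,r) = f(x,r) + g(x) k(x,r)$. Since $f$ and $g$ are locally Lipschitz by assumption and $k$ is locally Lipschitz by hypothesis, $f_{\rm cl}$ is locally Lipschitz in $x$; standard ODE theory then guarantees that for every initial condition $x_0 \in \mathcal{C}$ there exists a unique solution $x(t)$ on some maximal interval $[0, t_{\max})$. This is the step that makes the whole argument meaningful, and it is why local Lipschitz continuity of the controller appears explicitly in the statement.

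Next I would differentiate $h$ along this solution. By the chain rule, $\frac{{\rm d}}{{\rm d}t} h(x(t)) = \frac{\partial h}{\partial x}(x(t)) \dot x(t) = L_f h(x(t), r) + L_g h(x(t)) k(x(t), r)$, which is exactly the left-hand side of the safety condition \eqref{eq:safety_condition}. Hence, along every closed-loop trajectory, $h$ obeys the scalar differential inequality $\dot h(x(t)) \geq -\classK(h(x(t)))$ for all $t \in [0, t_{\max})$. Note that the hypothesis that $h$ is a CBF, i.e.\ that \eqref{eq:CBF_condition} holds, is precisely what guarantees the class of admissible controllers satisfying \eqref{eq:safety_condition} is nonempty, so the statement is not vacuous.

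To conclude, I would compare $h(x(t))$ with the solution $y(t)$ of the scalar initial-value problem $\dot y = -\classK(y)$, $y(0) = h(x_0) \geq 0$. Since $\classK$ is an extended class-$\mathcal{K}_\infty$ function with $\classK(0) = 0$, the constant $y \equiv 0$ is an equilibrium of this scalar ODE; because a solution starting at a nonnegative value cannot cross this equilibrium downward, $y(t) \geq 0$ for all $t \geq 0$. The comparison lemma then yields $h(x(t)) \geq y(t) \geq 0$, so $x(t) \in \mathcal{C}$ for all $t \in [0,t_{\max})$ --- that is, $\mathcal{C}$ is forward invariant on the maximal interval of existence. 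Under the standard assumption that solutions are forward complete (guaranteed, for instance, whenever the trajectory remains bounded), this gives $x(t) \in \mathcal{C}$ for all $t \geq 0$, as claimed.

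The main obstacle I anticipate is the regularity needed to apply the comparison principle cleanly: an extended class-$\mathcal{K}_\infty$ function is only assumed continuous and strictly increasing, not locally Lipschitz, so uniqueness of solutions to $\dot y = -\classK(y)$ may fail and the textbook comparison lemma does not apply verbatim. I would handle this by not relying on uniqueness --- it suffices to show that any solution of the comparison ODE with nonnegative initial data stays nonnegative, which follows from $\classK(0)=0$ together with strict monotonicity (at any point where $y=0$ one has $\dot y = 0$, while $\classK(y)>0$ for $y>0$ prevents $y$ from becoming negative). Alternatively, the same conclusion can be reached via Nagumo's theorem: on the boundary $\partial\mathcal{C} = \{h=0\}$ the safety condition \eqref{eq:safety_condition} reduces to $\dot h \geq -\classK(0) = 0$, the subtangentiality condition forcing the closed-loop field to point into $\mathcal{C}$, where the assumption $h(x)=0 \Rightarrow \frac{\partial h}{\partial x}(x) \neq 0$ ensures the boundary is well-behaved.
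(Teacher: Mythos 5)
The paper offers no proof of this theorem---it is quoted directly from the cited reference---and your argument is precisely the standard proof given there: reduce to the scalar differential inequality $\dot h \geq -\classK(h)$ along closed-loop trajectories and conclude $h(x(t))\geq 0$ by comparison with $\dot y = -\classK(y)$, $y(0)=h(x_0)\geq 0$. Your proof is correct, and you rightly flag the two points the informal version glosses over (forward completeness, and the fact that $\classK$ need not be locally Lipschitz so the textbook comparison lemma does not apply verbatim); your patch works, and can even be made comparison-free by noting that if $h(x(t))$ ever became negative there would be a last zero-crossing $t_1$ after which $\dot h \geq -\classK(h) > 0$, contradicting the decrease.
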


As such, controllers that satisfy~\eqref{eq:safety_condition} provide provable safety guarantees, while condition~\eqref{eq:CBF_condition} ensures that such controllers exist.
We use~\eqref{eq:safety_condition} for safety-critical control design.
Suppose that there exists a feedback controller, ${u_0 = k_0(x,r)}$, that is stabilizing but may not necessarily be safe.
Then, the input $u_0$ can be modified in a minimally invasive fashion to an input $u$ that satisfies \eqref{eq:safety_condition} by using the following quadratic program (QP) based controller design~\cite{ames2019control}:
\begin{align}
\begin{split}
    k(x,r) = \underset{u \in \mathbb{R}}{\operatorname{argmin}}& \;  \|u-u_0\|^{2}\\
   \text{s.t.} &\;  L_{f} h(x,r)+L_{g} h(x) u \geq -\classK(h(x)).
\end{split}
\label{eq:CBF-QP}
\end{align}
Controller $u = k(x,r)$ is safety-critical, and ensures safety with minimum deviation from the nominal controller $u_0=k_0(x,r)$. 
Note that~\eqref{eq:CBF-QP} is often called as {\em safety filter}.
The role of the safety filter in STC is illustrated in Fig.~\ref{diagram} and discussed next.

\subsection{Safety-critical Traffic Control}

Now we implement CBFs on the mixed traffic system and synthesize a STC strategy via a quadratic program.
Specifically, we rely on the CBF candidates $h_i$ in~\eqref{eq:h_TH}-\eqref{eq:h_SDH}.
To apply the CBF theory, however, first we must show that these candidates are indeed CBFs and satisfy~\eqref{eq:CBF_condition}.
To investigate condition~\eqref{eq:CBF_condition}, we consider the TH policy~\eqref{eq:h_TH} as example, although the conclusions below hold for the other spacing policies in~\eqref{eq:h_TTC}-\eqref{eq:h_SDH}, too.
Function $h_0$, that is associated with the CAV's safety, satisfies ${L_g h_0(x) = - \tau \neq 0}$.
Hence it has relative degree one and it is a valid CBF that satisfies~\eqref{eq:CBF_condition}.
This results from the fact that the CAV's safety is directly affected by the CAV's control input.
However, the remaining functions $h_i$ (${i \in \{1, \ldots, N\}}$) are associated with the HDVs' safety, and they are not directly affected by the CAV's controller.
Instead, ${L_g h_i(x) = 0}$ and~\eqref{eq:CBF_condition} does not hold for all $x \in \mathbb{R}^n$, hence $h_i$ are not valid CBFs.
In fact, the control input $u$ shows up in the $(i+1)$-st time derivative of $h_i$, making it relative degree $i+1$.
Although there exist more complex control designs for high relative degree scenarios~\cite{nguyen2016exponential, ames2019control}, we seek to avoid this complexity, especially since the relative degree of $h_i$ depends on the number of vehicles.

Instead, we propose to construct CBFs with a novel method that is compatible with all safe spacing policies.
Specifically,
we modify the CBF candidates $h_i$ to
\begin{equation}
    \bar h_i(x) = h_i(x)-h_0(x),
\label{eq:reduced_cbf}
\end{equation}
that have relative degree one irrespective of the number of vehicles.
It should be noted that $\bar h_i(x)\geq0$ and $h_0(x)\geq 0$ are sufficient conditions for $h_i(x)\geq0$.
Thus, $\bar{h}_i$ are valid CBFs that allow the satisfaction of the safe spacing policies given by $h_i$.

The CBFs $h_0$ and $\bar{h}_i$ lead to the following constraints on $u$ to satisfy the requirements of the CAV's safety and HDVs' safety.
The constraint that guarantees safe spacing between the head vehicle and the CAV becomes
\begin{equation}
    L_{f} h_{0}(x,r) + L_{g} h_{0}(x) u + \gamma_0 h_0(x)\geq 0,
    \label{CAV_cbf}
\end{equation}
where, for simplicity, we chose the extended class-$\mathcal{K}_{\infty}$ function $\classK$ in~\eqref{eq:safety_condition} to be linear, i.e., $\classK(h_0) = \gamma_0 h_0$ with a tunable constant $\gamma_0>0$.
Similarly, the constraints that yield safe spacings ahead of the HDVs are given by
\begin{equation}
    L_{f} \bar h_{i}(x,r) + L_{g} \bar h_{i}(x) u + \gamma_i \bar h_i(x) + \sigma_i \geq 0,
    \label{HDV_cbf}
\end{equation}
where $\gamma_i>0$, ${i \in \{1, \ldots, N\}}$, while the role of $\sigma_i \geq 0$ is explained below.

We remark that while the CBF condition \eqref{eq:CBF_condition} ensures that there exists a control input $u$ satisfying the safety constraint \eqref{eq:safety_condition}, there is no such existence guarantee if multiple constraints like \eqref{CAV_cbf}-\eqref{HDV_cbf} are enforced simultaneously.
In other words, the CAV's and HDVs' safety constraints \eqref{CAV_cbf}-\eqref{HDV_cbf} may conflict in some scenarios.
In such cases, we prioritize the CAV's safety over the HDVs' safety for four reasons:
(I) the CAV's controller is primarily responsible for its own safety;
(II) if a collision happened at the CAV, the following HDVs would also be affected;
(III) in practice, the HDVs do not necessarily drive according to the selected human driver model; and
(IV) if the CAV's safety constraint ${h_0(x) \geq 0}$ is not satisfied, ${\bar{h}_i(x)\geq 0}$ no longer implies ${h_i(x)\geq 0}$, i.e., the HDVs' safety may be compromised. Therefore, we resolve the conflict between multiple safety constraints by introducing the relaxation variables $\sigma_i$ into the HDVs' safety constraints.
This makes~\eqref{HDV_cbf} to be soft constraint while keeping~\eqref{CAV_cbf} as hard constraint.

Finally, we incorporate the safety constraints~\eqref{CAV_cbf}-\eqref{HDV_cbf} and the traffic controller~\eqref{eq:nominal_nonlin} that yields string stable behavior into an optimization problem to establish the proposed STC.
Specifically, while enforcing~\eqref{CAV_cbf}-\eqref{HDV_cbf}, we minimize the deviation of the control input $u$ from the nominal stabilizing input $u_0$.
This leads to the quadratic program (QP):
\begin{align}
\begin{split}
    k(x,r)=\underset{u \in \mathbb{R}, \sigma_i \geq 0}{\operatorname{argmin}} \; & |u-u_0|^{2} + \sum_{i=1}^N p_i\sigma_i^2 \\
    \text{s.t.} \;
    & L_{f} h_{0}(x,r) + L_{g} h_{0}(x) u + \gamma_{0} h_{0}(x) \geq 0, \\
    & L_{f} \bar h_{1}(x,r) + L_{g} \bar h_{1} (x) u + \gamma_{1} \bar h_{1}(x) + \sigma_{1} \geq 0, \\
    & \qquad \vdots \\
    & L_{f} \bar h_{N} (x,r) + L_{g} \bar h_{N}(x) u + \gamma_{N} \bar h_{N}(x) + \sigma_{N} \geq 0,
\end{split}
\label{eq:QP}
\end{align}
cf.~\eqref{eq:CBF-QP},
where ${p_i \gg 1}$ are penalties for relaxing the HDVs' safety constraints.
Note that when there is conflict between the safety of mixed traffic and the pursuit of string stability, the QP enforces safety and sacrifices string stability at minimum cost.

\subsection{Safety with Observer-based Control Barrier Functions}

The proposed STC strategy~\eqref{eq:nominal_nonlin}-\eqref{eq:QP} relies on full state feedback.
In practice, however, some state information may not be available to the CAV.
This motivates us to extend the STC design to output feedback.
We propose to use state observers for state estimation, that allows us to execute safety-critical controllers using an estimate of the unknown states; see Fig.~\ref{diagram}.
Below we provide a set of simple conditions under which formal safety guarantees are maintained despite observer estimation errors.
These results follow the constructions in~\cite{Agrawal2023}, where a comprehensive safety-critical control framework with observers was established.
For another approach on the combination of observers and CBFs, please also see~\cite{wang2021observer}.

Consider an observer~\eqref{eq:observer_nonlin} that satisfies \eqref{M(t)}.
Let the CBF $h$ have the Lipschitz coefficient $\mathcal{L}>0$, satisfying: 
\begin{equation}
|h(\hat{x})-h(x)| \leq \mathcal{L}\|\hat{x}-x\|,
\label{eq:h_Lipschitz}
\end{equation}
and for simplicity, let the class-$\mathcal{K}$ function $\alpha$ be linear, $\alpha(r) = \gamma r$.
Furthermore, consider the set
\begin{equation}
    \hat{\mathcal{C}}(t)=\left\{\hat{x} \in \mathbb{R}^{n}: \hat{h}(\hat{x},t) \geq 0\right\},
    \label{eq:safeset_timevarying}
\end{equation}
\begin{equation}
    \hat{h}(\hat{x},t) = h(\hat{x}) - \mathcal{L} M(t),
    \label{eq:CBF_timevarying}
\end{equation}
where $\hat{h}(\hat{x},t)$ represents the worst-case value of $h(x)$ that may correspond to an estimated state $\hat{x}$ given the estimation error bound $M(t)$.
Then, we have the following result on safety.

\begin{theorem}[guaranteed safety with observer\cite{Agrawal2023}] \label{thm:CBF_observer}
\textit{
Consider the system \eqref{affine}, the safe set \eqref{eq:safeset}, and an observer~\eqref{eq:observer_nonlin} that satisfies~\eqref{M(t)}.
If $h$ is a CBF for system \eqref{eq:observer_nonlin} with Lipschitz coefficient $\mathcal{L}$, then any locally Lipschitz continuous controller ${k : \mathbb{R}^n \times \mathbb{R}^l \times \mathbb{R}^{n_y} \to \mathbb{R}}$, ${u = k(\hat{x},r,y)}$, that satisfies
\begin{equation}
    L_{\tilde{f}} h(\hat{x},r,y) + L_{\tilde{g}} h(\hat{x},r,y) u \geq - \gamma h(\hat{x}) + \mathcal{L} \big( \dot{M}(t) + \gamma M(t) \big),
    \label{eq:safety_condition_observer_robust}
\end{equation}
renders set $\hat{\mathcal{C}}(t)$ in~\eqref{eq:safeset_timevarying}-\eqref{eq:CBF_timevarying} forward invariant under the closed-loop dynamics, such that
${\hat{x}_0 \in \hat{\mathcal{C}}(0) \implies \hat{x}(t) \in \hat{\mathcal{C}}(t)}$, ${\forall t \geq 0}$.
Furthermore, ${\hat{x}(t) \in \hat{\mathcal{C}}(t) \implies x(t) \in \mathcal{C}}$ holds, that ultimately leads to safe behavior.
}
\end{theorem}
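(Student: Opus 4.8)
The plan is to split the argument into the two claims packaged in the statement: (i) forward invariance of the \emph{time-varying} set $\hat{\mathcal{C}}(t)$ under the observer flow, and (ii) the set inclusion $\hat{x}(t) \in \hat{\mathcal{C}}(t) \Rightarrow x(t) \in \mathcal{C}$ relating the estimated safe set back to the true safe set. I would establish (ii) first, since it is a direct pointwise consequence of the Lipschitz bound~\eqref{eq:h_Lipschitz} and the error bound~\eqref{M(t)}, and then treat (i) with a comparison-lemma argument applied to the robustified barrier $\hat{h}$.

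For claim (ii), I would start from $\hat{x} \in \hat{\mathcal{C}}(t)$, i.e.\ $\hat{h}(\hat{x},t) = h(\hat{x}) - \mathcal{L}M(t) \geq 0$, which gives $h(\hat{x}) \geq \mathcal{L}M(t)$. Combining the Lipschitz estimate $h(\hat{x}) - h(x) \leq |h(\hat{x}) - h(x)| \leq \mathcal{L}\|e\|$ with the observer error bound $\|e(t)\| \leq M(t)$ then yields $h(x) \geq h(\hat{x}) - \mathcal{L}M(t) \geq 0$, so $x \in \mathcal{C}$. This confirms that the margin $\mathcal{L}M(t)$ subtracted in~\eqref{eq:CBF_timevarying} is exactly the worst-case gap between $h(x)$ and $h(\hat{x})$ that the estimation error can produce.

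For claim (i), I would differentiate $t \mapsto \hat{h}(\hat{x}(t),t)$ along the closed-loop observer dynamics~\eqref{eq:observer_nonlin}. Since $\hat{h}$ depends on time both through $\hat{x}(t)$ and explicitly through $M(t)$, the total derivative is $L_{\tilde{f}}h(\hat{x},r,y) + L_{\tilde{g}}h(\hat{x},r,y)u - \mathcal{L}\dot{M}(t)$. Substituting the assumed safety constraint~\eqref{eq:safety_condition_observer_robust} cancels the $\mathcal{L}\dot{M}(t)$ term and leaves $\dot{\hat{h}} \geq -\gamma h(\hat{x}) + \gamma \mathcal{L}M(t) = -\gamma\,\hat{h}(\hat{x},t)$. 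With $\hat{h}(\hat{x}_0,0) \geq 0$, the comparison lemma applied to $\dot{\hat{h}} \geq -\gamma \hat{h}$ (equivalently, integrating $\frac{d}{dt}(\mathrm{e}^{\gamma t}\hat{h}) \geq 0$) gives $\hat{h}(\hat{x}(t),t) \geq \mathrm{e}^{-\gamma t}\hat{h}(\hat{x}_0,0) \geq 0$, i.e.\ $\hat{x}(t) \in \hat{\mathcal{C}}(t)$ for all $t \geq 0$. Existence of the trajectory over $[0,\infty)$ follows from local Lipschitzness of $\tilde{f}$, $\tilde{g}$ and the controller $k$.

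The hard part is not any single computation but recognizing and exploiting the design that makes everything collapse to the standard CBF inequality: the extra term $\mathcal{L}\big(\dot{M}(t) + \gamma M(t)\big)$ in~\eqref{eq:safety_condition_observer_robust} is chosen precisely so that, after accounting for the explicit time dependence of the shrinking margin $\mathcal{L}M(t)$, the derivative of $\hat{h}$ satisfies the autonomous-looking inequality $\dot{\hat{h}} \geq -\gamma\hat{h}$. The main technical care therefore lies in treating $\hat{\mathcal{C}}(t)$ as a time-varying set---using the \emph{total} (not partial) derivative of $\hat{h}$ and invoking the comparison lemma---rather than appealing directly to the time-invariant forward-invariance result of Theorem~\ref{thm:CBF}.
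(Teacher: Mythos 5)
Your proposal is correct and follows essentially the same route as the paper's proof: compute the total time derivative of $\hat{h}$ (picking up the $-\mathcal{L}\dot{M}(t)$ term), use~\eqref{eq:safety_condition_observer_robust} to reduce it to $\dot{\hat{h}} \geq -\gamma\hat{h}$, and establish the inclusion $\hat{\mathcal{C}}(t) \Rightarrow \mathcal{C}$ via the same chain $\hat{h}(\hat{x},t) \leq h(\hat{x}) - \mathcal{L}\|\hat{x}-x\| \leq h(x)$. The only cosmetic difference is that you integrate the differential inequality explicitly (Gr\"onwall/comparison lemma), whereas the paper delegates that step to the time-varying counterpart of Theorem~\ref{thm:CBF} from the cited reference.
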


\begin{proof}
First, we express the total time derivative of $\hat{h}$ defined in~\eqref{eq:CBF_timevarying} as
\begin{equation}
    \dot{\hat{h}}(\hat{x},t,r,y,u) = L_{\tilde{f}} h(\hat{x},r,y) + L_{\tilde{g}} h(\hat{x},r,y) u - \mathcal{L} \dot{M}(t).
    \label{eq:observer_CBF_safety_proof_1}
\end{equation}
With this,~\eqref{eq:CBF_timevarying}-\eqref{eq:safety_condition_observer_robust} lead to
\begin{equation}
    \dot{\hat{h}}(\hat{x},t,r,y,u) \geq -\gamma \hat{h}(x,t).
    \label{eq:observer_CBF_safety_proof_2}
\end{equation}
Therefore, we can apply the time-varying counterpart~\cite{Xu2018} of Theorem~\ref{thm:CBF} that yields the invariance of $\hat{\mathcal{C}}(t)$.
Finally, ${\hat{x}(t) \in \hat{\mathcal{C}}(t)} \implies x(t) \in \mathcal{C}$ is equivalent to ${\hat{h}(\hat{x}(t),t) \geq 0 \implies h(x(t)) \geq 0}$, which can be proven by
\begin{align}
\begin{split}
    \hat{h}(\hat{x},t)
    & = h(\hat{x}) - \mathcal{L} M(t) \\
    & \leq h(\hat{x}) - \mathcal{L} \|\hat{x}-x\| \\
    & \leq h(\hat{x}) - |h(\hat{x})-h(x)| \\
    & \leq h(x),
\end{split}
\end{align}
where we used~\eqref{eq:CBF_timevarying},~\eqref{M(t)} and~\eqref{eq:h_Lipschitz}.
\end{proof}

\begin{remark}
For output feedback, the safety condition modifies from~\eqref{eq:safety_condition} to~\eqref{eq:safety_condition_observer_robust}, that includes the extra robustifying term ${\mathcal{L} \big( \dot{M}(t) + \gamma M(t) \big)}$ to account for observer estimation errors.
\cite{Agrawal2023} highlighted that if the observer satisfies ${\dot{M}(t) \leq - \gamma M(t)}$, then
\begin{equation}
    L_{\tilde{f}} h(\hat{x},r,y) + L_{\tilde{g}} h(\hat{x},r,y) u \geq - \gamma h(\hat{x})
    \label{eq:safety_condition_observer}
\end{equation}
is a sufficient condition for~\eqref{eq:safety_condition_observer_robust}, i.e., the extra robustifying term can be dropped.
For example, the Luenberger observer~\eqref{eq:obs} associated with the linear system~\eqref{system} and error bound~\eqref{eq:observer_error_bound} satisfies ${\dot{M}(t) \leq - \gamma M(t)}$ if $\lambda \geq \gamma$.
This means that, if the observer is fast enough (characterized by $\lambda$) compared to how fast the system is allowed to approach the boundary of the safe set (expressed by $\gamma$),~\eqref{eq:safety_condition_observer} guarantees safety.
This idea first appeared in~\cite{molnar2021model, Singletary2022}, where safety with exponentially stable tracking controllers was established similarly to our case of safety with exponentially stable observers.
Furthermore, notice that condition ${\hat{x}_0 \in \hat{\mathcal{C}}(0)}$ must also hold for safety, that is stricter than ${x_0 \in \mathcal{C}}$.
Accordingly, the initial state must be located farther inside the safe set (${h(x_0) \gg 0}$) in case of significant initial state estimation errors.
Under such conditions, the observer converges to an accurate state estimate by the time the system may get into a safety-critical scenario, and safety is maintained.
\end{remark}

In the context of STC, Theorem~\ref{thm:CBF_observer} leads to the following controller for the case of output feedback:
\begin{align}
\begin{split}
    k(\hat{x},r,y)=\underset{u \in \mathbb{R}, \sigma_i \geq 0}{\operatorname{argmin}} \; & |u-u_0|^{2} + \sum_{i=1}^N p_i\sigma_i^2 \\
    \text{s.t.} \;
    & L_{\tilde{f}} h_0(\hat{x},r,y) + L_{\tilde{g}} h_0(\hat{x},r,y) u + \gamma_{0} h_{0}(\hat{x}) - \mathcal{L} \big( \dot{M}(t) + \gamma_{0} M(t) \big) \geq 0, \\
    & L_{\tilde{f}} \bar{h}_{1}(\hat{x},r,y) + L_{\tilde{g}} \bar{h}_{1}(\hat{x},r,y) u + \gamma_{1} \bar{h}_{1}(\hat{x}) - \mathcal{L} \big( \dot{M}(t) + \gamma_{1} M(t) \big) + \sigma_{1} \geq 0, \\
    & \qquad \vdots \\
    & L_{\tilde{f}} \bar{h}_{N}(\hat{x},r,y) + L_{\tilde{g}} \bar{h}_{N}(\hat{x},r,y) u + \gamma_{N} \bar{h}_{N}(\hat{x}) - \mathcal{L} \big( \dot{M}(t) + \gamma_{N} M(t) \big) + \sigma_{N} \geq 0,
\end{split}
\label{eq:QP_obs_robust}
\end{align}
analogously to~\eqref{eq:QP}.

\section{Numerical Results}
\label{sec:simulation}

In this section, we carry out numerical simulations to show the effectiveness of the proposed STC framework.
We first describe the simulation setup,
then show our main results which validate that the proposed STC~\eqref{eq:QP} leads to smooth and safe traffic in scenarios where the nominal controller~\eqref{eq:nominal_nonlin} could yield collisions.
Afterwards, we present an exhaustive analysis of the STC performance in various simulation settings.
The parameters used for simulations are given in Table~\ref{tab:parameters} and at the figures. 
Finally, we show the performance of STC in a real-world traffic scenario by using experimental data to drive the simulations.

\begin{table}[t]
    \centering
    \caption{Parameters used for the numerical simulations}
    \label{tab:parameters}
    \begin{tabular}{cccccc}
    \hline
    Vehicle & Variable & Symbol & Value & Unit\\
    \hline
    \multirow{4}{*}{all}
    & equilibrium velocity & $v^\star$ & $20$ & m/s \\
    & equilibrium spacing & $s^\star$ & $20$ & m \\
    & braking limit & $a_{\min}$ & $-7$ & m/s$^2$ \\
    & acceleration limit & $a_{\max}$ & $7$ & m/s$^2$ \\
    \hline
    \multirow{5}{*}{HDVs}
    & speed limit & $v_{\max}$ & $40$ & m/s \\
    & standstill spacing & $h_{\rm st}$ & $5$ & m \\
    & free flow spacing & $h_{\rm go}$ & $35$ & m \\
    & \multirow{2}{*}{model coefficients}
    & $\A$ & $0.6$ & 1/s \\
    & & $\B$ & $0.9$ & 1/s \\
    \hline
    \multirow{7}{*}{CAV}
    & \multirow{4}{*}{gains of nominal controller}
    & $\mu_1$ & $-2$ & 1/s$^2$ \\
    & & $\mu_2$ & $-2$ & 1/s$^2$ \\
    & & $k_1$ & $0.2$ & 1/s \\
    & & $k_2$ & $0.2$ & 1/s \\
    & \multirow{3}{*}{parameters of safety filter}
    & $\tau$ & $1$ & s \\
    & & $\gamma$ & $10$ & 1/s \\
    & & $p$ & $100$ & 1/s$^2$ \\
    \hline
    \end{tabular}
\end{table}

\subsection{Simulation Setup}
\label{sec:subsec:simulation setting}

We simulate an STC setup that includes one head HDV (HHDV), one CAV, and two identical following HDVs (FHDV-1 and FHDV-2).
We consider two risky traffic scenarios that could cause rear-end collision:
\begin{itemize}
    \item {\em Scenario 1}: HHDV suddenly decelerates, which may endanger the safety of all vehicles.
    This could be caused, for example, by an aggressive cut-in of a vehicle from an adjacent lane, or due to an obstacle on the road.
    \item {\em Scenario 2}: FHDV-2 unexpectedly accelerates.
    Importantly, such accelerations of follower HDVs could cause a na\"ive nominal CAV controller to respond with acceleration and collide with the HHDV in front of it.
\end{itemize}

Specifically, we prescribe the following motions for the vehicles.
We assume that initially all the four vehicles in the platoon drive at the equilibrium velocity $v^\star$ and with uniform equilibrium spacing $s^\star$.
In Scenario 1, the HHDV brakes harshly, then accelerates to recover its original speed, and finally cruises at constants speed:
\begin{equation}
    \dot{v}_{-1}(t) =
    \begin{cases}
        -a_{\rm H} & \text{if} \quad t \in [0,t_{\rm H}], \\
        a_{\rm H} & \text{if} \quad t \in (t_{\rm H},2 t_{\rm H}], \\
        0 & \text{otherwise}.
    \end{cases}
    \label{eq:HHDV_accel}
\end{equation}
In Scenario 2, the HHDV drives at constant speed, $\dot{v}_{-1}(t) \equiv 0$.
In both scenarios, the CAV uses the nominal controller \eqref{eq:nominal controller} with feedback gains
that satisfy the head-to-tail sting-stability condition given by \eqref{eq:string stable condition}-\eqref{eq:transfer function gammas}.
This nominal controller will be incorporated into and compared with STC~\eqref{eq:QP}.
In Scenario 1, both FHDVs obey the driver model~\eqref{eq:CF} with specific expression given below in~\eqref{eq:OVM}.
In Scenario 2, the motion of FHDV-2 is modified to accelerate before driving according to the model:
\begin{equation}
    \dot{v}_{2}(t) =
    \begin{cases}
        a_{\rm F} & \text{if} \quad t \in [0,t_{\rm F}], \\
        F_2(s_2, \dot{s}_2, v_2) & \text{otherwise}.
    \end{cases}
    \label{eq:FHDV_accel}
\end{equation}
The behavior of each vehicle in the two safety-critical scenarios is summarized in Table \ref{tab:acc}.

To capture the longitudinal dynamics of HDVs, we use the optimal velocity model (OVM) originating from~\cite{bando1998analysis}:
\begin{equation}
    F_{i}(s_{i},\dot{s}_{i},v_{i})= \A \left(V\left(s_{i}\right)-v_{i}\right)+ \B \dot{s}_{i} \label{eq:OVM}, 
\end{equation}
where $V(s_{i})$ is a piecewise continuous function that represents the spacing-dependent desired velocity of the human drivers.
The coefficient $\A$ characterizes how human drivers accelerate to match their current speed $v_i$ with the desired driving speed $V(s_i)$, whereas coefficient $\B$ represents how human drivers change their acceleration according to the speed difference $\dot {s_i}$ between themselves and their preceding vehicle. The OVM \eqref{eq:OVM} can be linearized as \eqref{eq:CF_lin} with parameters:
\begin{align}
    \aaa_{1} = \A V'(s^{\star}), \quad
    \aaa_{2} = \A+\B, \quad
    \aaa_{3} = \B.
\end{align}
We use a range policy $V(s)$ from \cite{zhang2016motif} as:
\begin{equation}
    V(s)=\left\{
    \begin{array}{ll}
    0, & s \leq s_{\mathrm{st}}, \\
    \frac{v_{\max }}{2}\left(1-\cos \left(\pi \frac{s-s_{\mathrm{st}}}{s_{\mathrm{go}}-s_{\mathrm{st}}}\right)\right), & s_{\mathrm{st}}<s<s_{\mathrm{go}}, \\
   v_{\max }, & s \geq s_{\mathrm{go}},
   \end{array}
   \right.
  \label{eq:Vs}
\end{equation}
where $s_{\rm st}, s_{\rm  go},v_{\rm  max}$ represent standstill spacing, free flow spacing and the speed limit, respectively.

\begin{table}[t]
    \centering
    \caption{Acceleration of vehicles in the two safety-critical traffic scenarios}
    \label{tab:acc}
    \begin{tabular}{c|c|c|c|c}
    \hline
         & HHDV & CAV &FHDV-1 &FHDV-2  \\ \hline
         Scenario 1 &
         \makecell[l]{\eqref{eq:HHDV_accel} with \\
         $a_{\rm H} = 6 \ \mathrm{m/s^2}$ \\
         $t_{\rm H} = 3.3  \ \mathrm{s}$}
         & \makecell[l]{\eqref{eq:nominal controller} for nominal controller\\\eqref{eq:QP} for STC} & \eqref{eq:OVM} &  \eqref{eq:OVM} \\\hline
         Scenario 2 & 0 & \makecell[l]{\eqref{eq:nominal controller} for nominal controller\\\eqref{eq:QP} for STC} & \eqref{eq:OVM} &
         \makecell[l]{\eqref{eq:FHDV_accel}-\eqref{eq:OVM} with \\
         $a_{\rm F} = 6 \ \mathrm{m/s^2}$ \\
         $t_{\rm F} = 2.5 \ \mathrm{s}$} \\
         \hline
    \end{tabular}
    
\end{table}

\subsection{Safety Improvement by STC} \label{sec:subsec:sim trajectory}

Now we demonstrate by our main simulation results that the proposed STC improves the safety of mixed traffic.
We highlight that while a nominal controller may lead to unsafe behavior when responding to sudden accelerations of human drivers, the STC maintains safety.
In the simulations, we use the SDH safe spacing policy with CBF candidate \eqref{eq:h_SDH} and parameters in Table~\ref{tab:parameters}.
First, we consider full state feedback control in the two risky traffic situations described above (Scenario 1 and Scenario 2). 
We evaluate safety (i.e., whether the CBF candidates $h_i$ and the spacings $s_i$ stay positive over time) and head-to-tail string stability (i.e., whether the speed perturbations of the tail vehicle are smaller than those of the head vehicle).

Figure~\ref{fig:sim:trajectory case1} shows simulation results for Scenario 1, including the spacing $s_i$, velocity $v_i$, acceleration $a_i$ and CBF candidate $h_i$ of each vehicle for the cases of using the nominal controller (top) and the proposed STC (bottom).
When the HHDV suddenly reduces its speed, the nominal controller reduces the CAV's speed significantly less than the HHDV does in order to achieve head-to-tail string stability.
However, the CAV's deceleration ends up being too small, causing a collision between HHDV and the CAV.
This is indicated by ${s_0<0}$ along the red curve in Fig. \ref{fig:sim:trajectory case1}. 
In contrast, with the proposed STC the CAV decelerates more and avoids collision with formal guarantees of safety.
We remark that $h_0$ becomes slightly negative, which is caused by designing STC based on the linearized dynamics \eqref{system} while the FHDVs drive according to the nonlinear model \eqref{system_nonlin}.
It should be noted that the FHDVs are still collision-free and safe, when their safety was encoded as soft constraints in STC.
Moreover, head-to-tail string stability is still achieved with STC, as FHDV-2 reduces its speed less than the HHDV.

Figure~\ref{fig:sim:trajectory case2} presents simulation results for Scenario 2, by comparing the nominal controller (top) and STC (bottom).
This scenario highlights that if a vehicle at the end of the chain accelerates, like FHDV-2 does, it can force the nominal controller of the CAV to accelerate and compromise the CAV's safety.
By using STC, however, safe behavior is ensured for the CAV by the safety filter that is wrapped around the nominal controller.

\begin{figure}[t]
    \centering
    Nominal Controller \\[6pt]
    \includegraphics[width=0.24\linewidth]{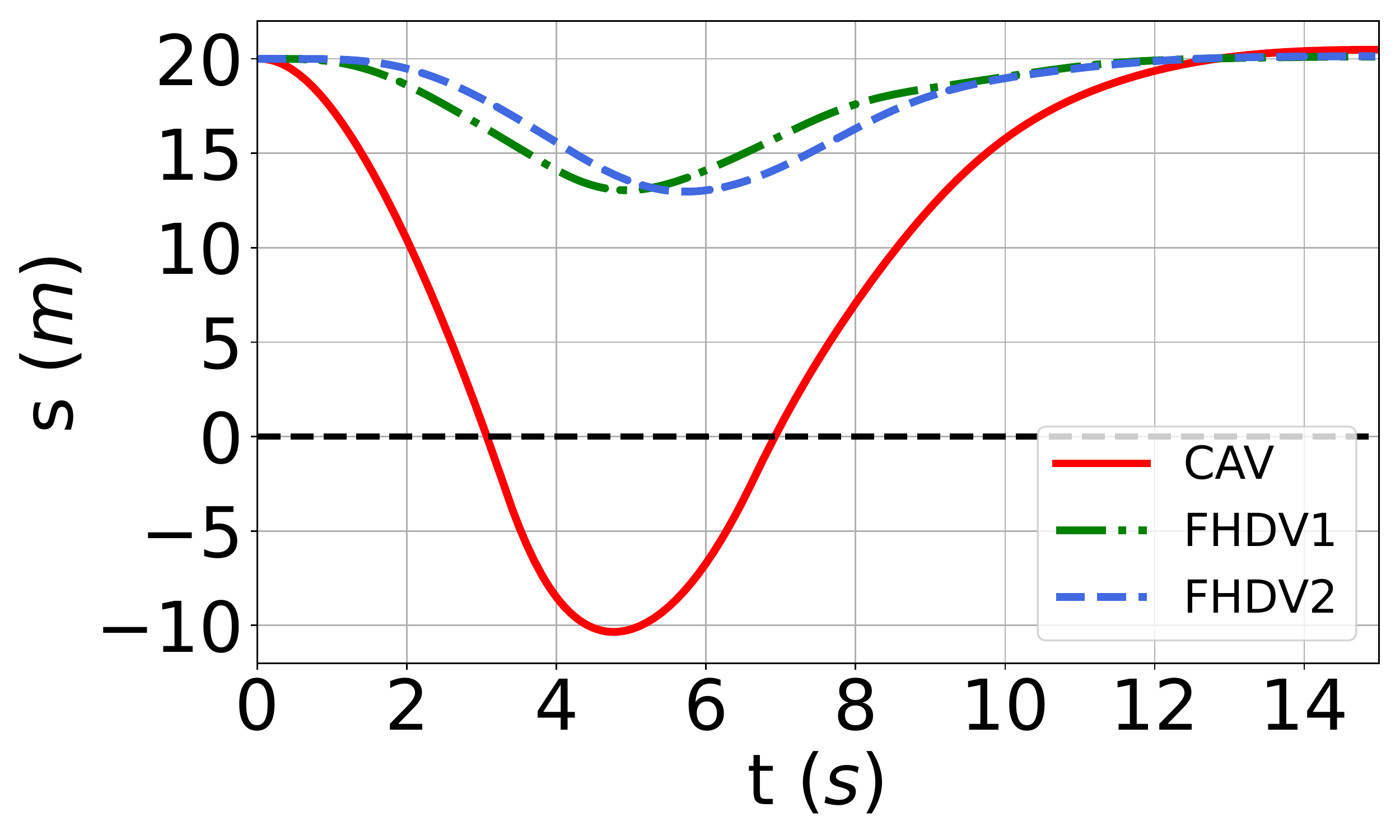}
    \includegraphics[width=0.24\linewidth]{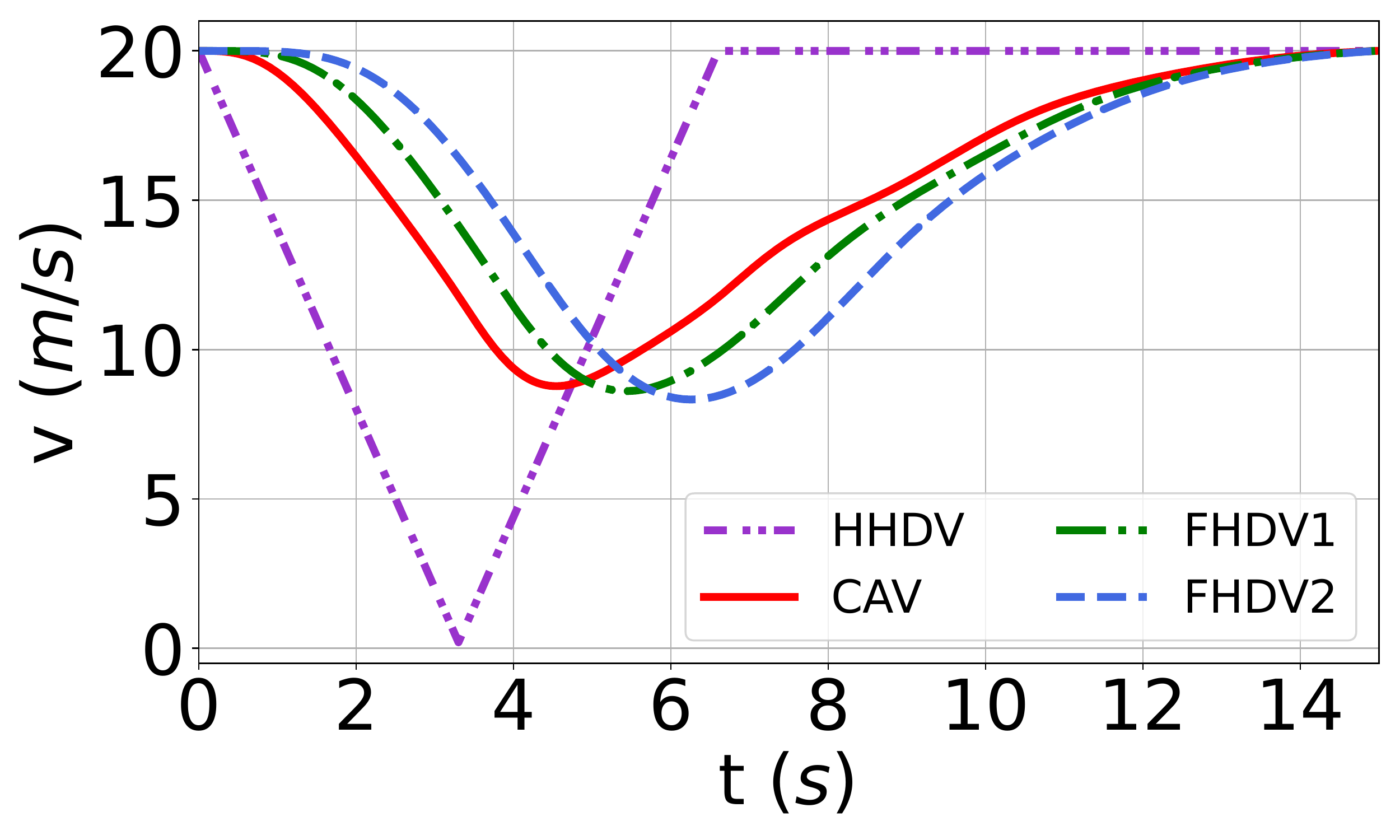}
    \includegraphics[width=0.24\linewidth]{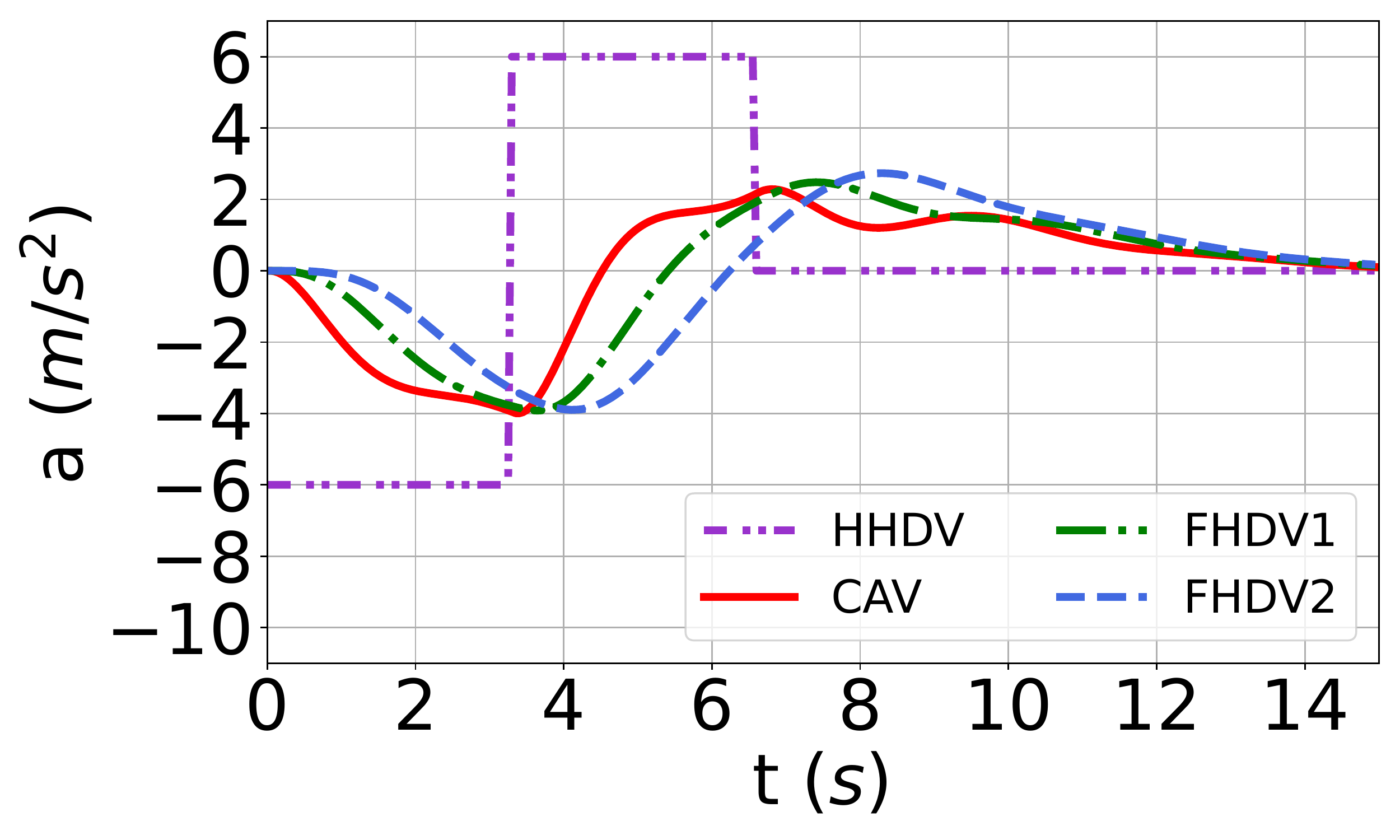}
    \includegraphics[width=0.24\linewidth]{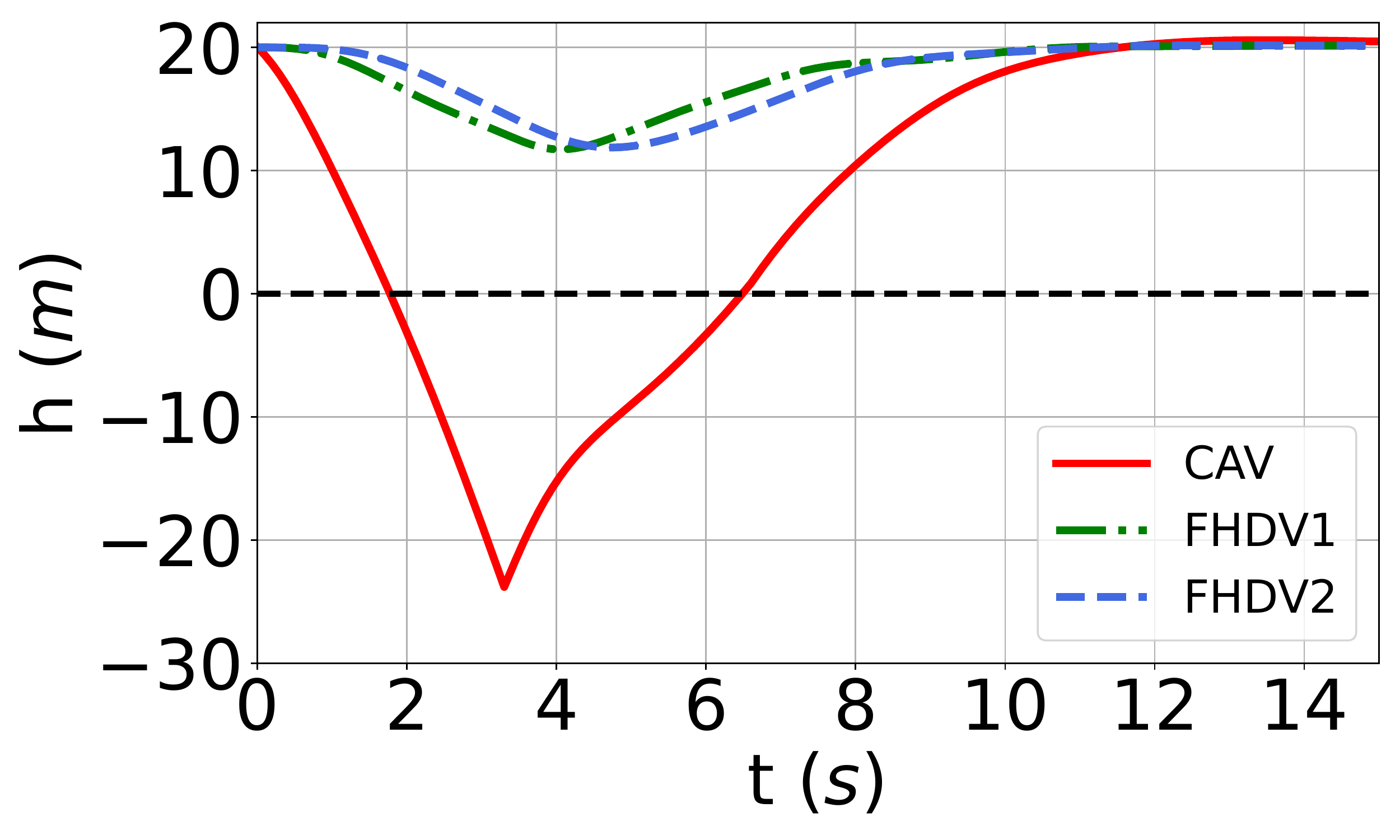}
    \\
    Proposed STC \\[6pt]
    \includegraphics[width=0.24\linewidth]{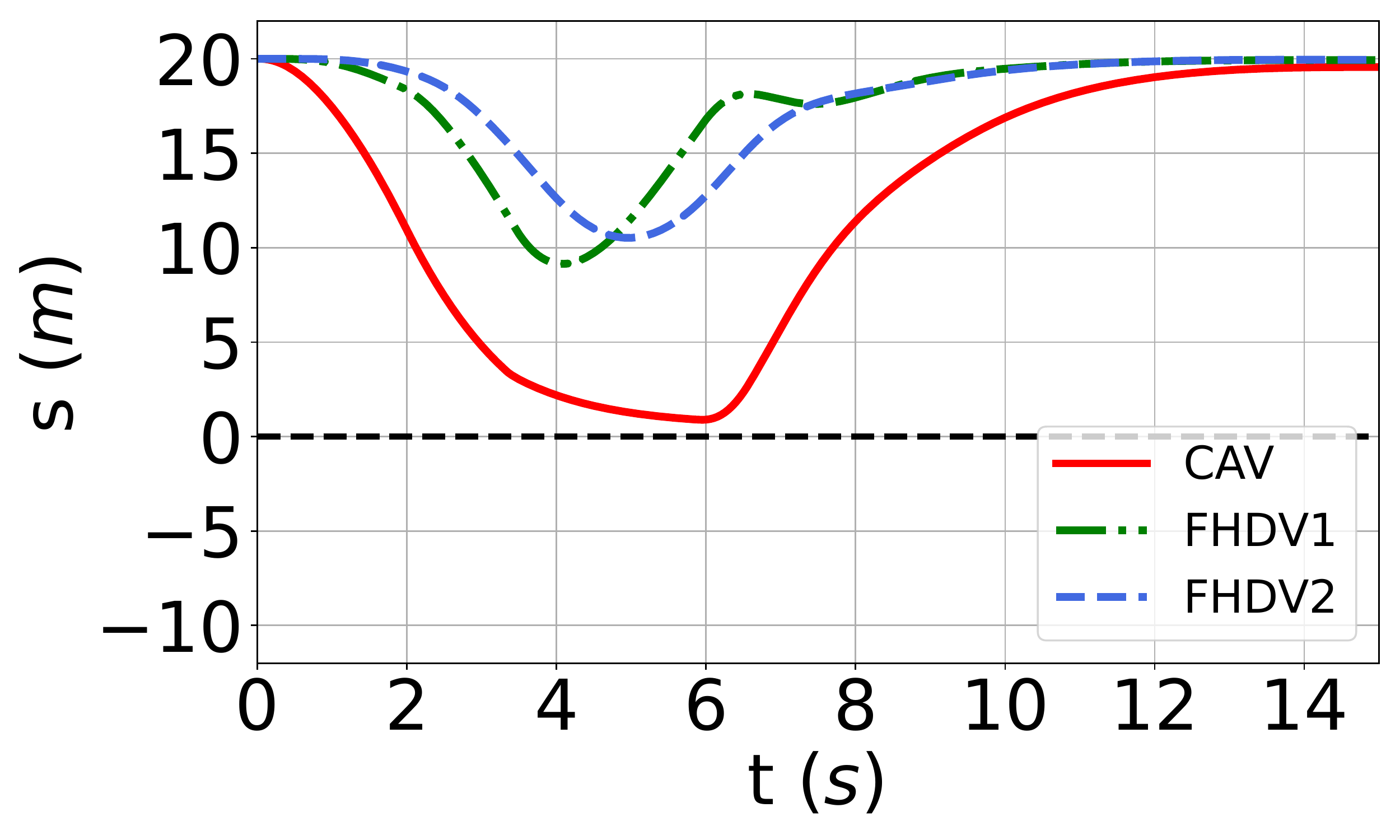}
    \includegraphics[width=0.24\linewidth]{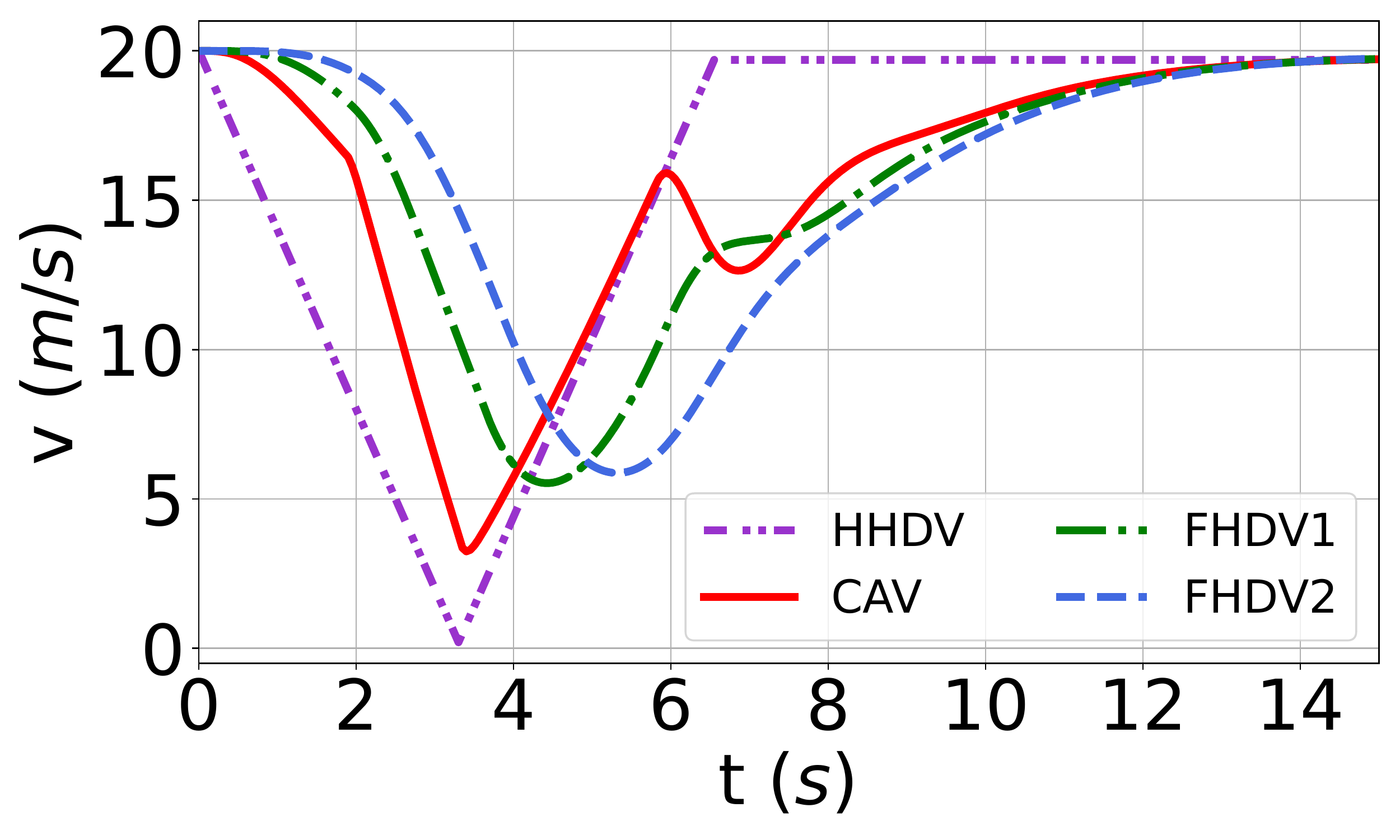}
    \includegraphics[width=0.24\linewidth]{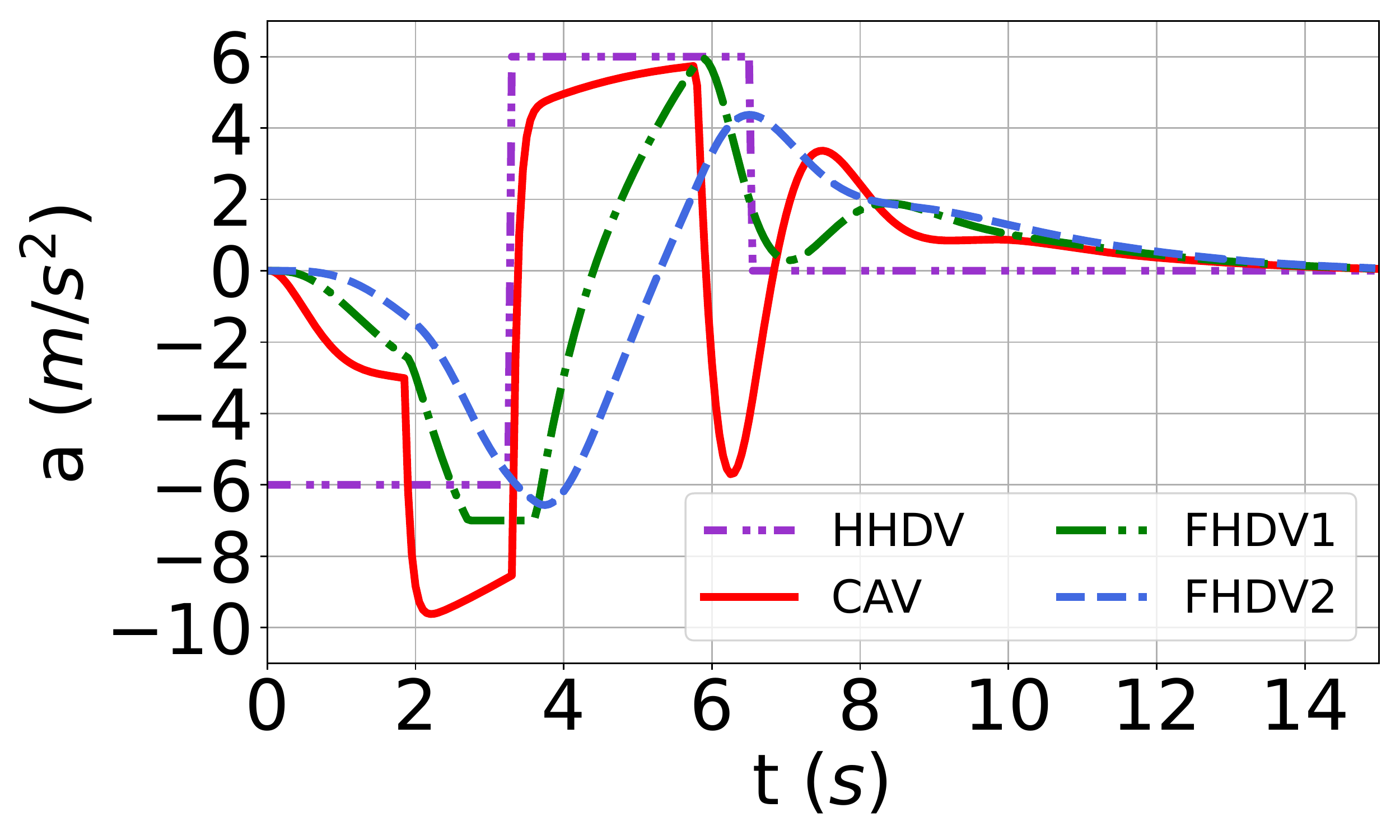}
    \includegraphics[width=0.24\linewidth]{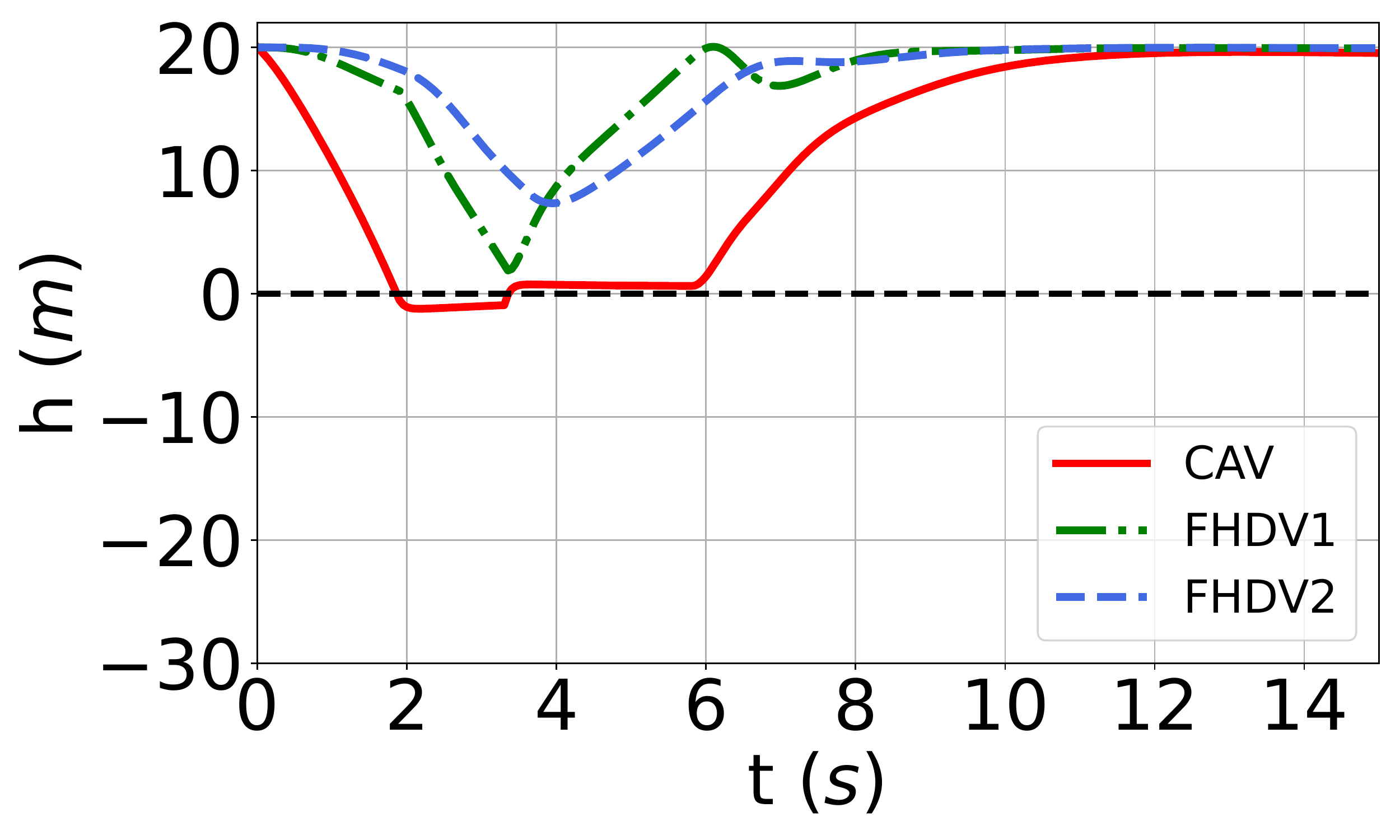}
    \caption{
    Numerical simulation of the proposed safety-critical traffic controller (STC) in Scenario 1.
    The top row shows the behavior of mixed traffic when the CAV uses the nominal stabilizing controller~\eqref{eq:nominal controller}, that achieves head-to-tail string stability (i.e., FHDV-2 reduces its speed less than the HHDV) but yields unsafe behavior (i.e., the CAV collides with the HHDV).
    The bottom row shows the proposed STC~\eqref{eq:QP}, that achieves head-to-tail string stability with provable guarantees of safety.
    }
    \label{fig:sim:trajectory case1}
\end{figure}

\begin{figure}[t]
    \centering
    Nominal Controller \\[6pt]
    \includegraphics[width=0.24\linewidth]{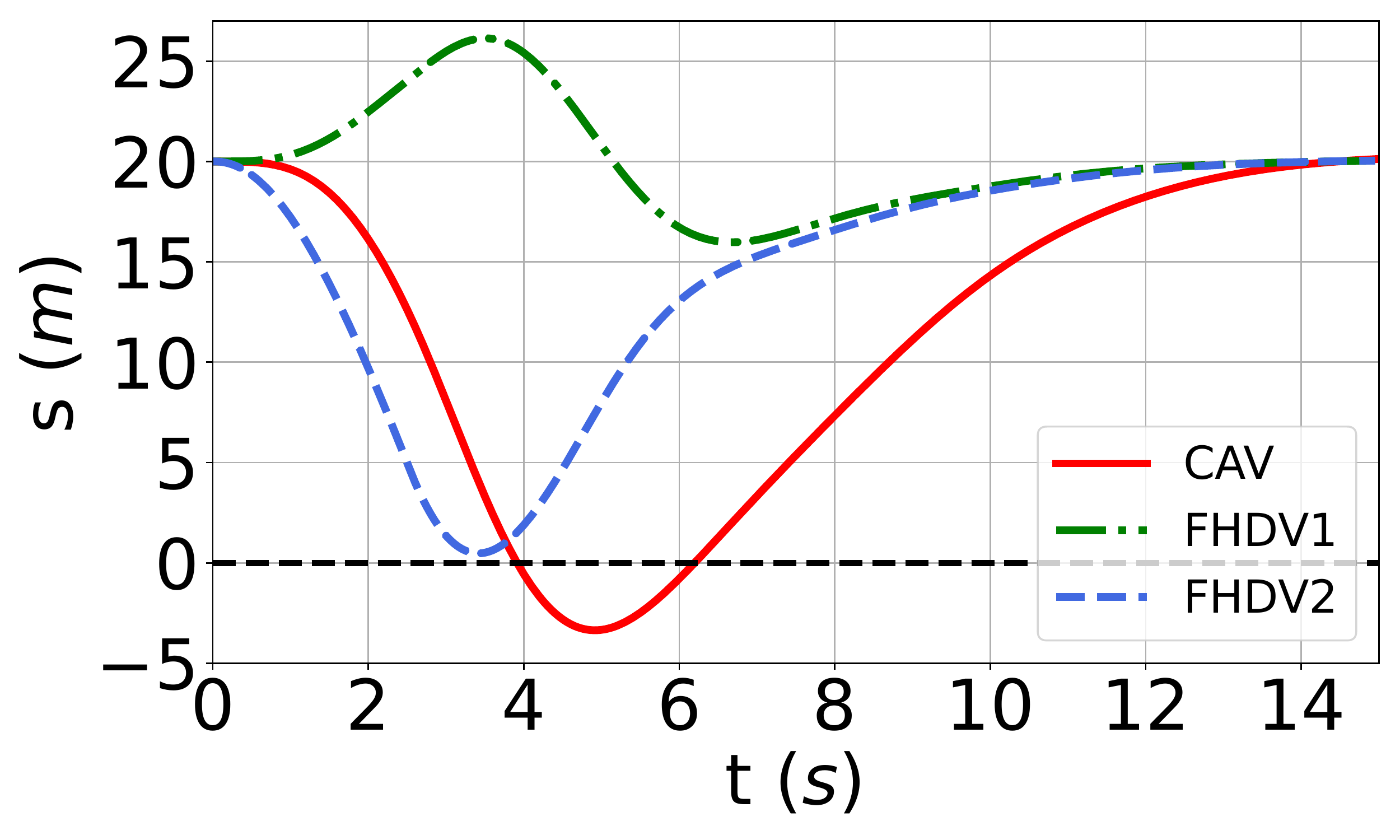}
    \includegraphics[width=0.24\linewidth]{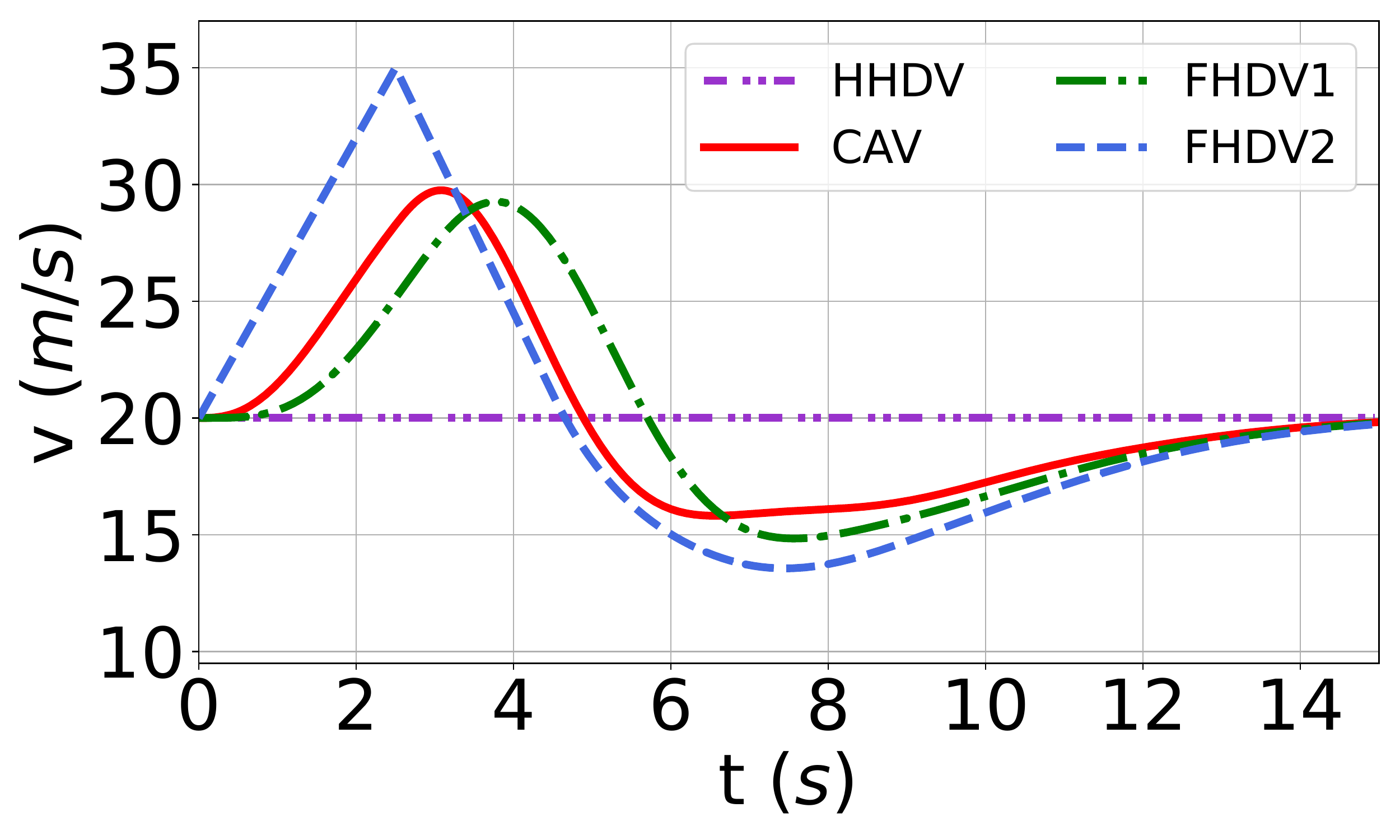}
    \includegraphics[width=0.24\linewidth]{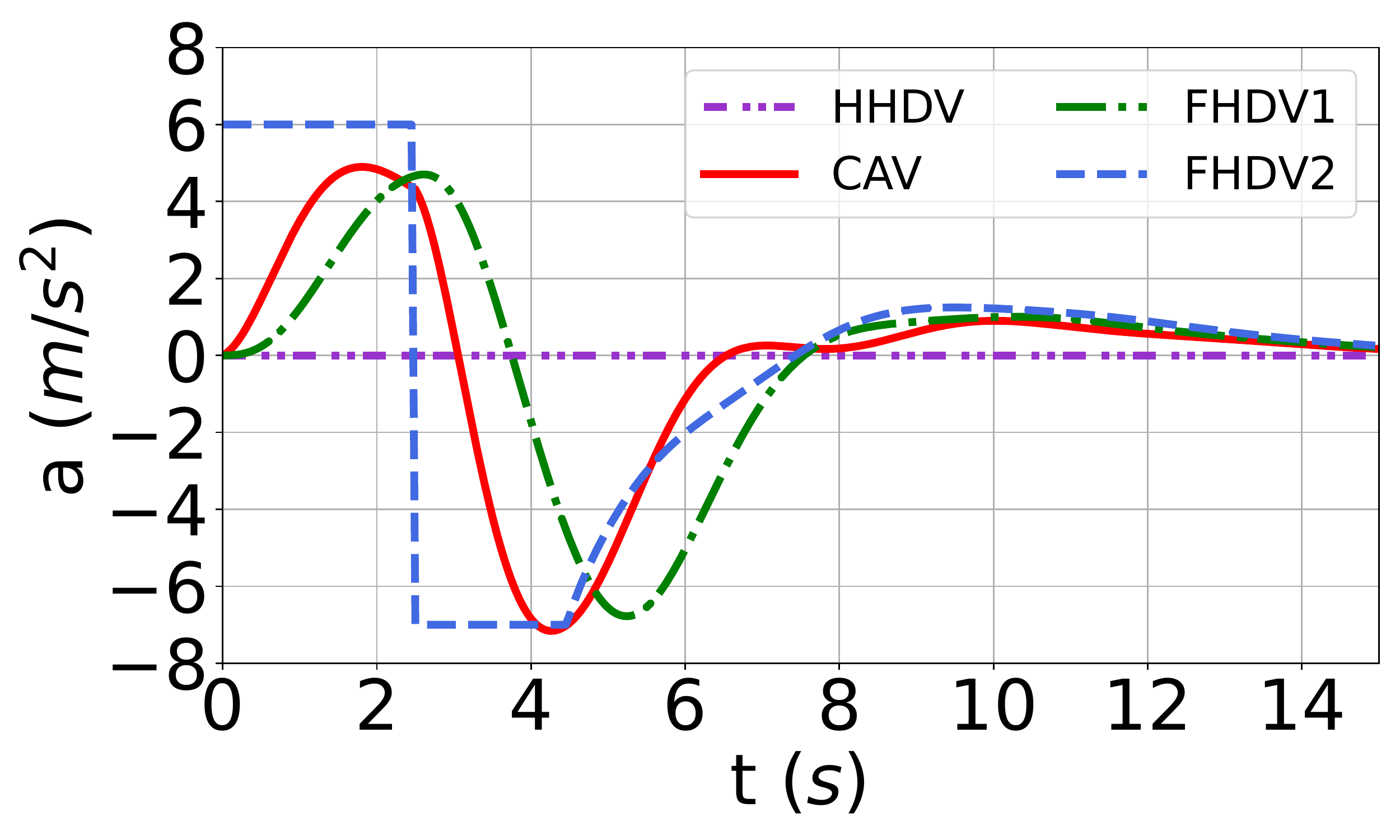}
    \includegraphics[width=0.24\linewidth]{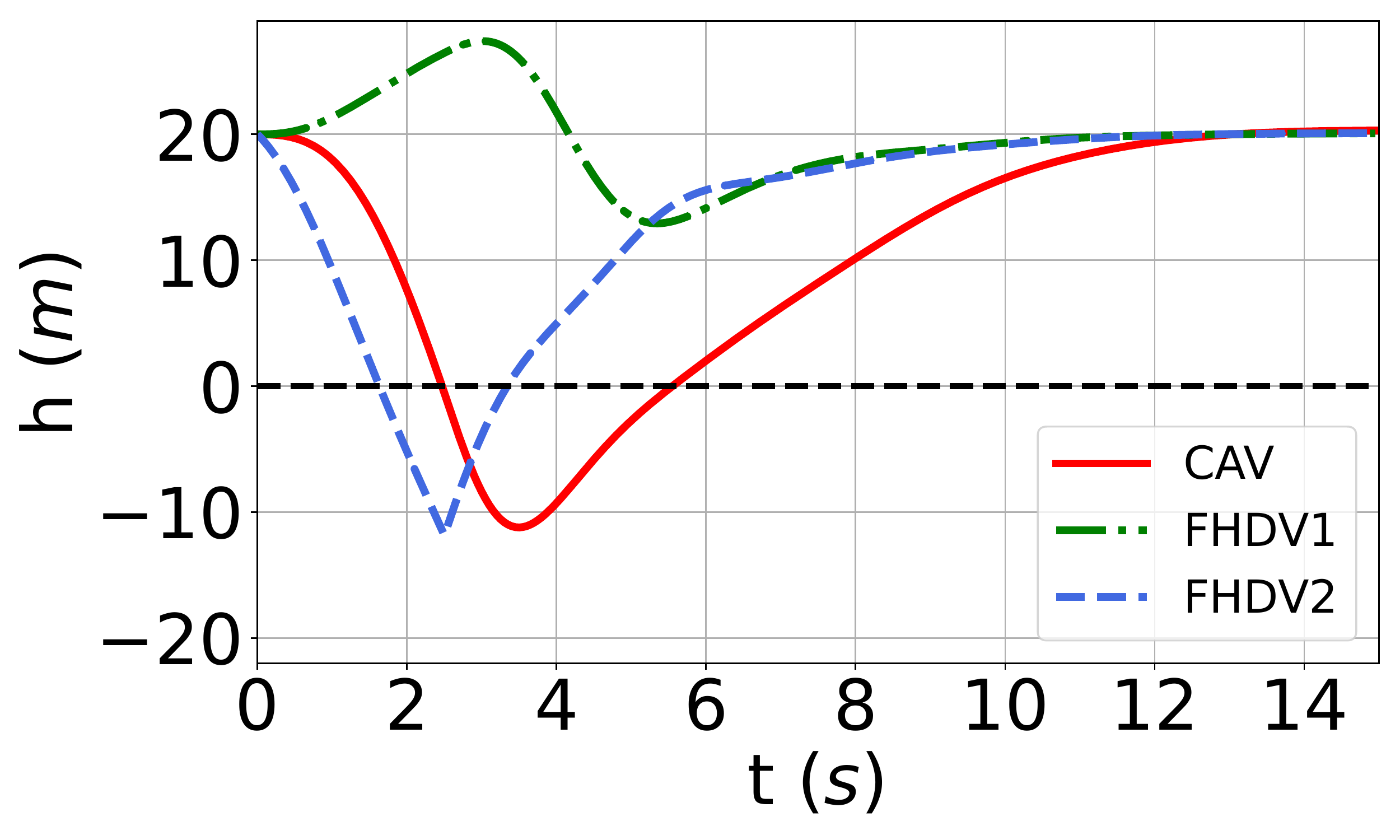}
    \\
    Proposed STC \\[6pt]
    \includegraphics[width=0.24\linewidth]{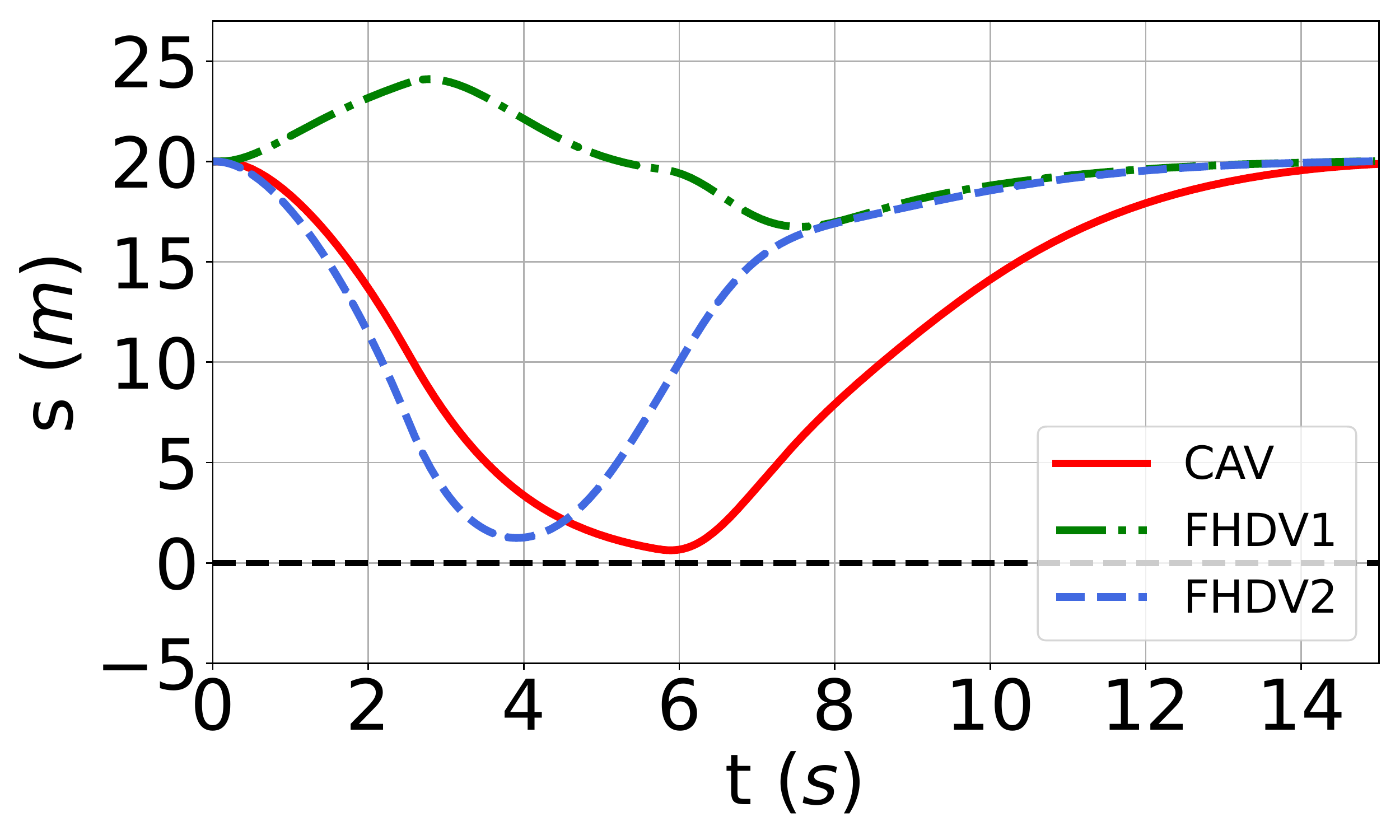}
    \includegraphics[width=0.24\linewidth]{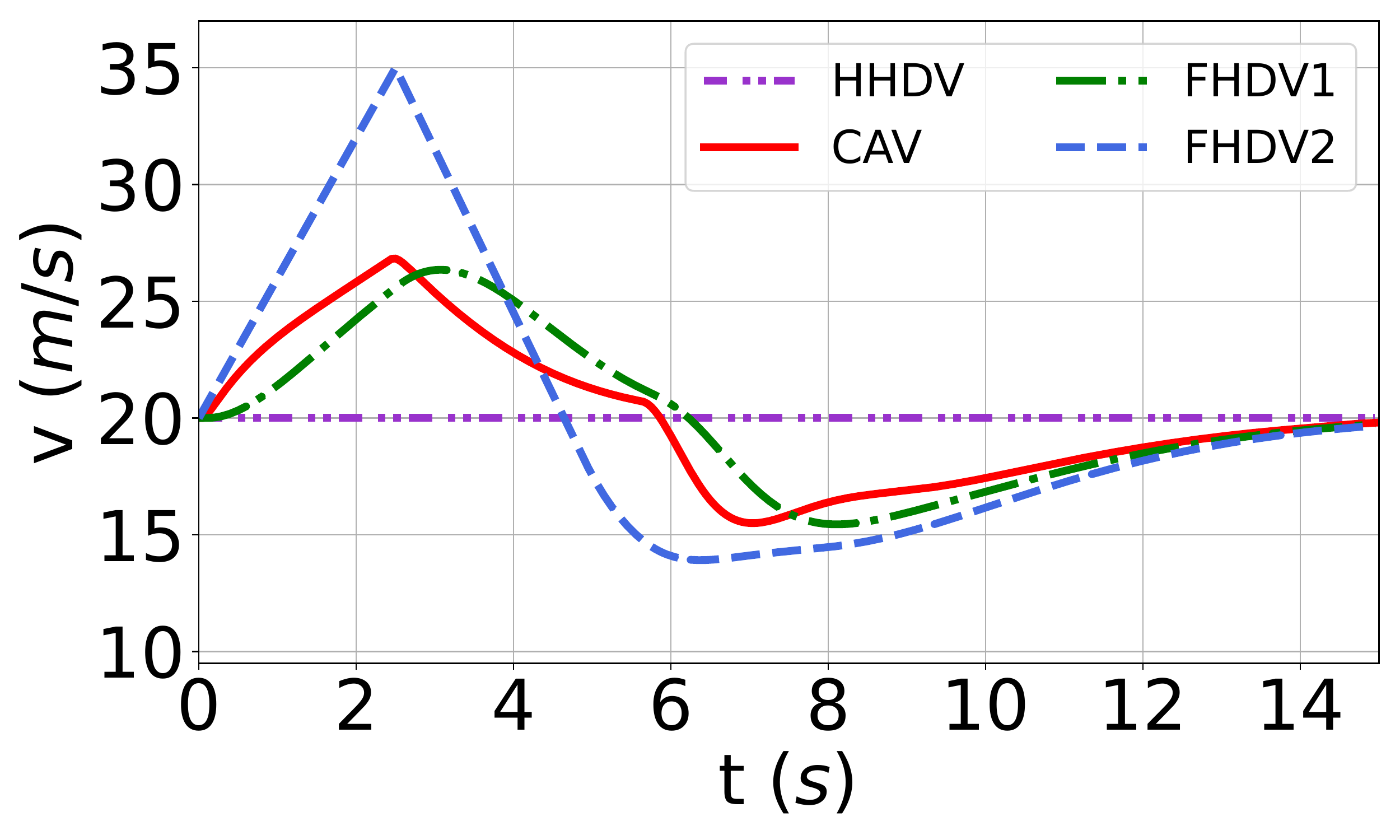}
    \includegraphics[width=0.24\linewidth]{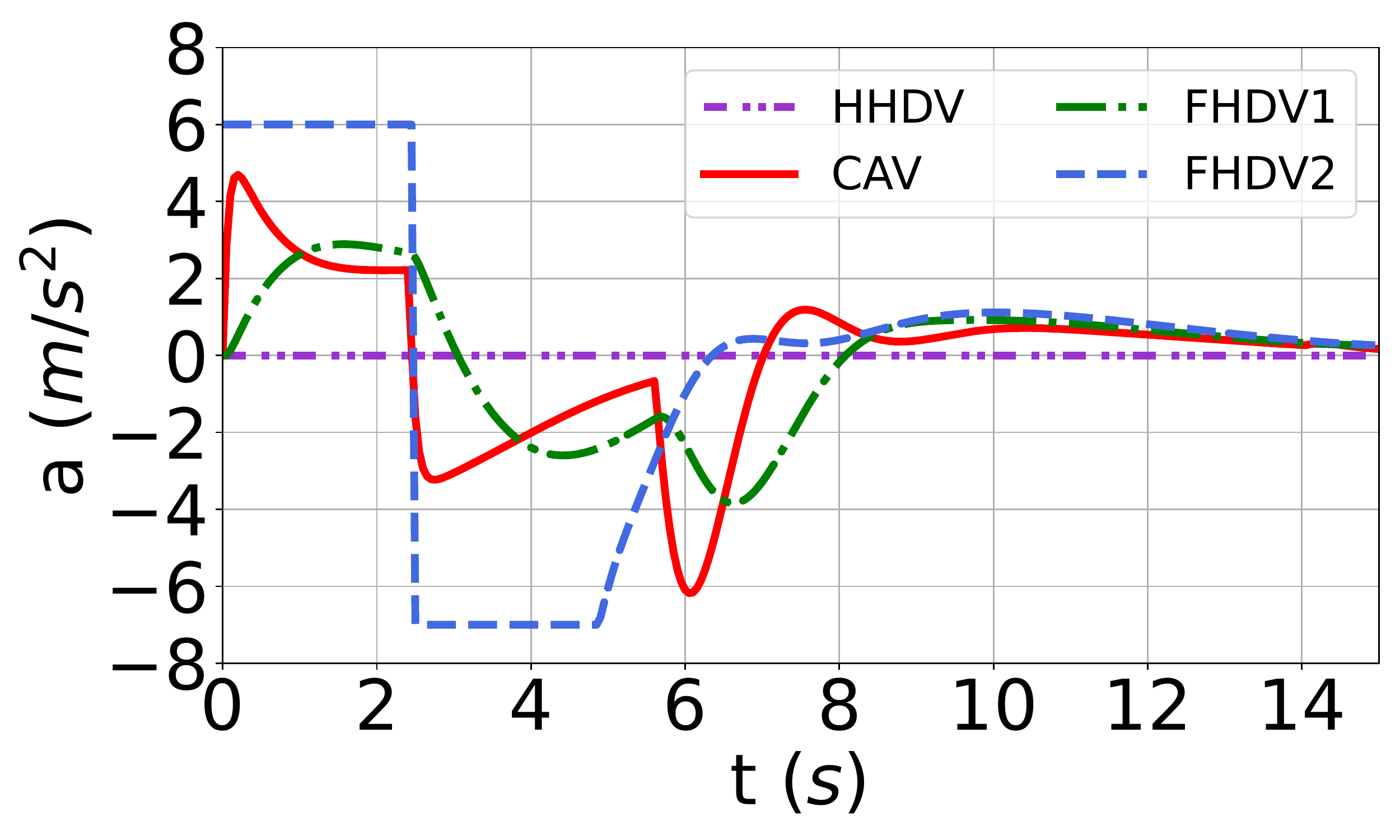}
    \includegraphics[width=0.24\linewidth]{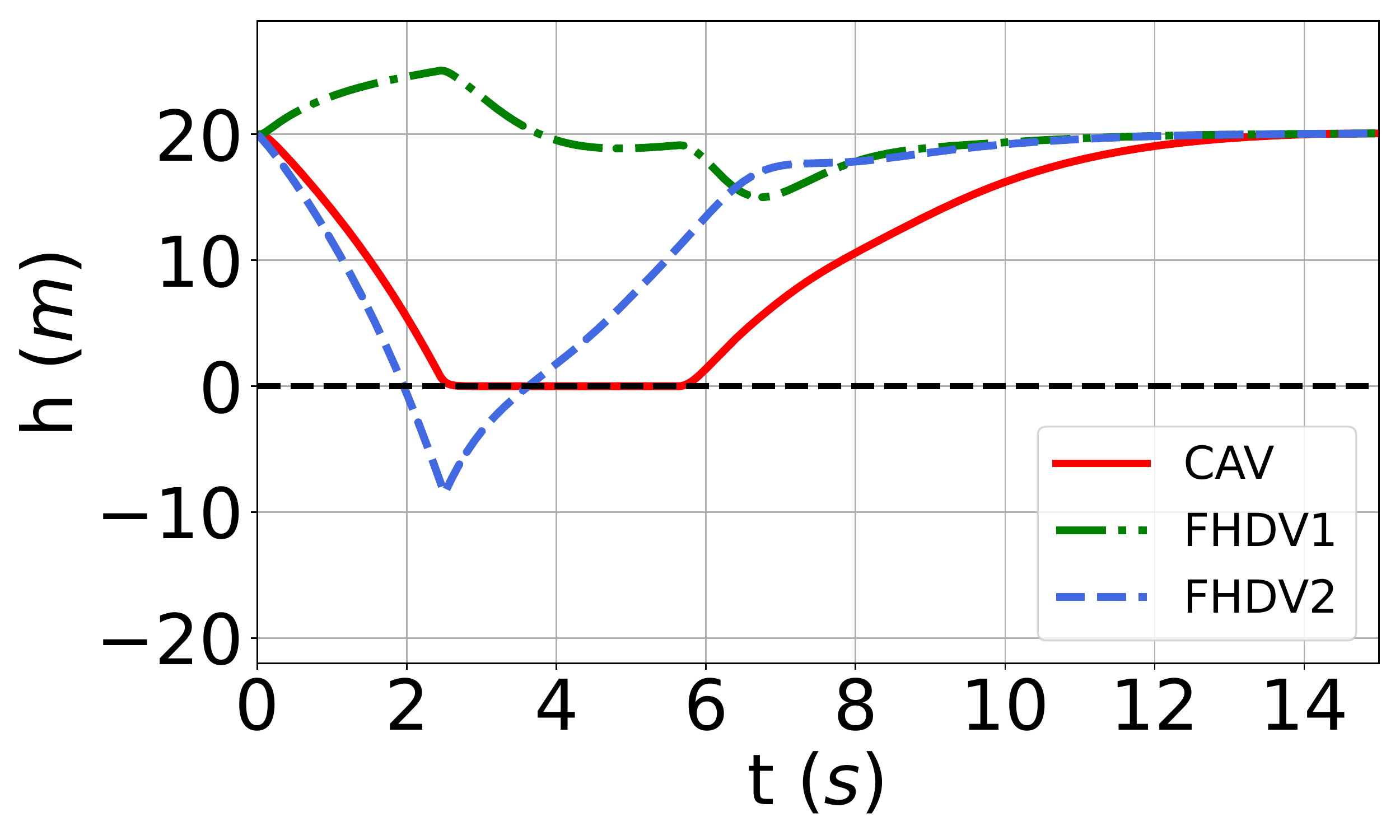}
    \caption{
    Numerical simulation of Scenario 2.
    Similar to Fig.~\ref{fig:sim:trajectory case1}, this scenario highlights that certain behaviors of HDVs can make the nominal controller of the CAV drive in an unsafe way (top), while the proposed STC recovers safe behavior (bottom).
    }
    \label{fig:sim:trajectory case2}
\end{figure}

In the results presented so far, the CAV's controller relies on the true state $x$ and uses full state feedback.
Now we address the scenario of output feedback, wherein the output $y$ in~\eqref{eq:y} is available rather than the full state $x$.
We use the state observer in~\eqref{eq:obs} 
to establish an estimate $\hat{x}$ of the state (with initial estimate
${\hat{x}_0 = [0, 0, 6, 0, 6, 0]^\top}$).
We compare two controllers:
\begin{itemize}
\item {\em ``na\"ive'' observer-based STC} that evaluates~\eqref{eq:QP} with the estimated state $\hat{x}$, without considering the estimation error ${\hat{x} - x}$;
\item {\em robust observer-based STC~\eqref{eq:QP_obs_robust}} that takes into account the estimation error.
\end{itemize}

Figure~\ref{fig:sim:observer case2} shows simulation results for STC with observer in the na\"ive (top) and robust (bottom) implementations in Scenario 2 with $a_{\mathrm{F}} = 6 \ \mathrm{m/s^2}$ and $t_{\mathrm{F}} = 2.2 $ s.
The figure highlights that the robust STC implementation leads to safer behavior (see the blue curves), thanks to taking into account the observer's estimation error bound in a provably safe fashion.
As opposed, the na\"ive STC implementation no longer has formal safety guarantees when utilized with estimated state instead of true state.

\begin{figure}[t]
    \centering
    ``Na\"ive'' Observer-based STC \\[6pt]
    \includegraphics[width=0.24\linewidth]{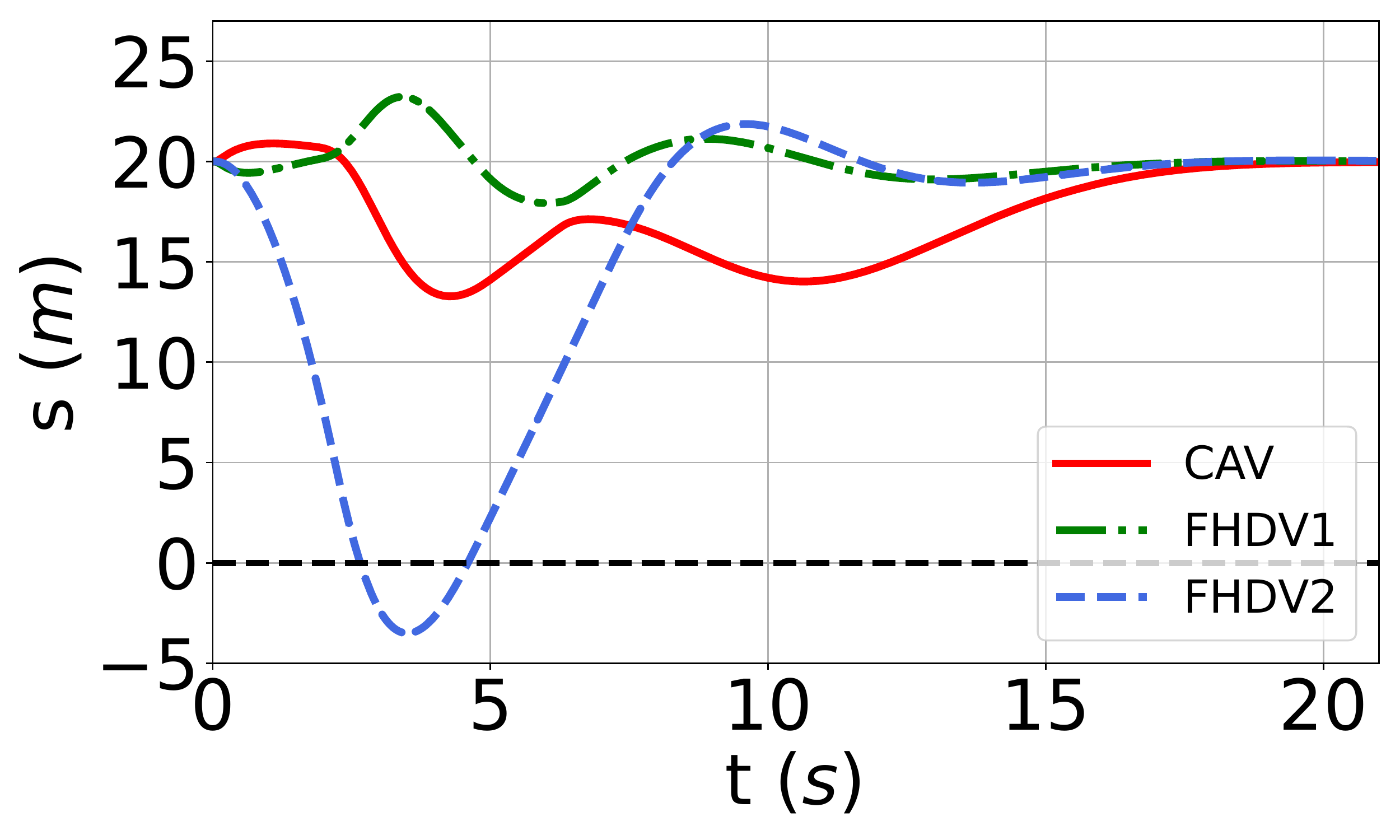}
    \includegraphics[width=0.24\linewidth]{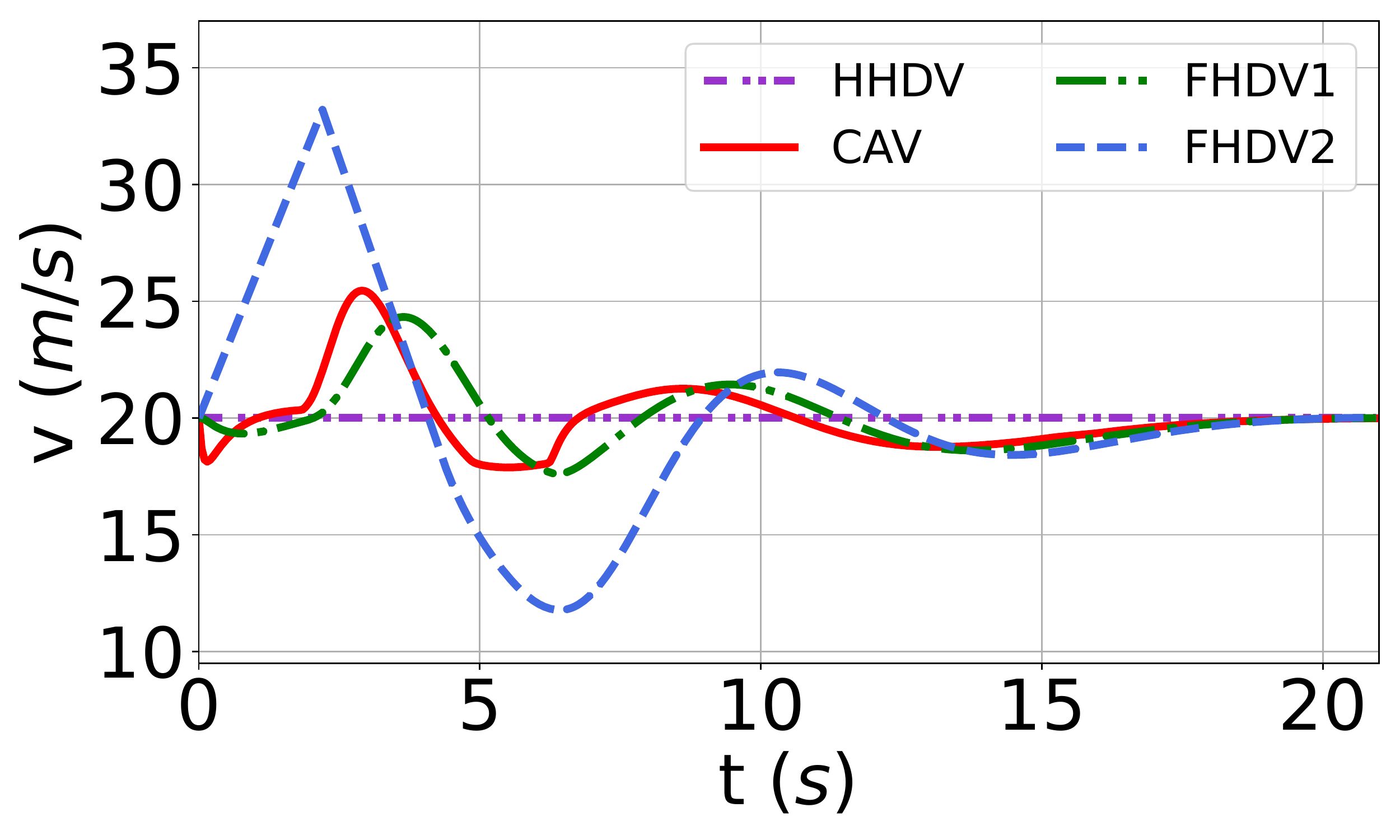}
    \includegraphics[width=0.24\linewidth]{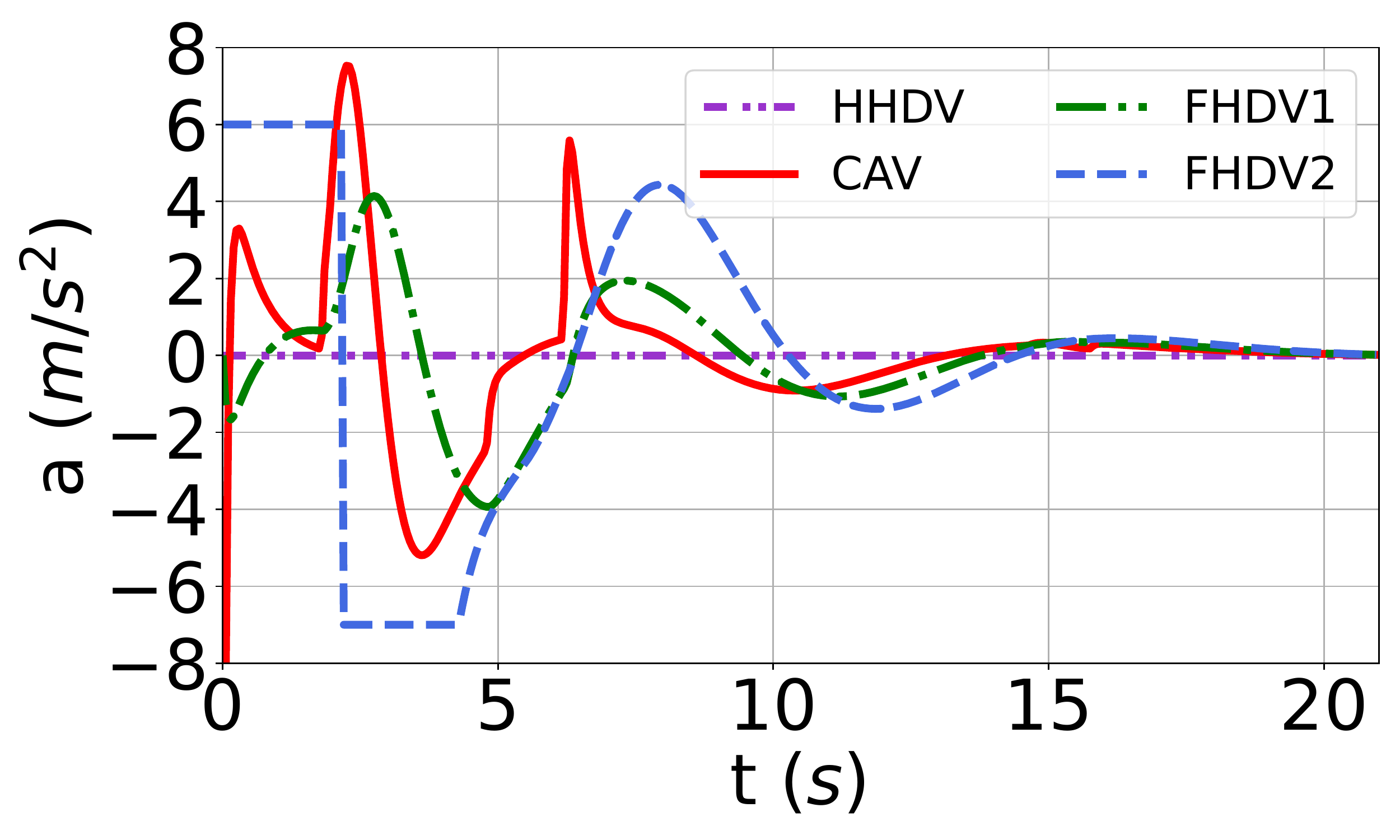}
    \includegraphics[width=0.24\linewidth]{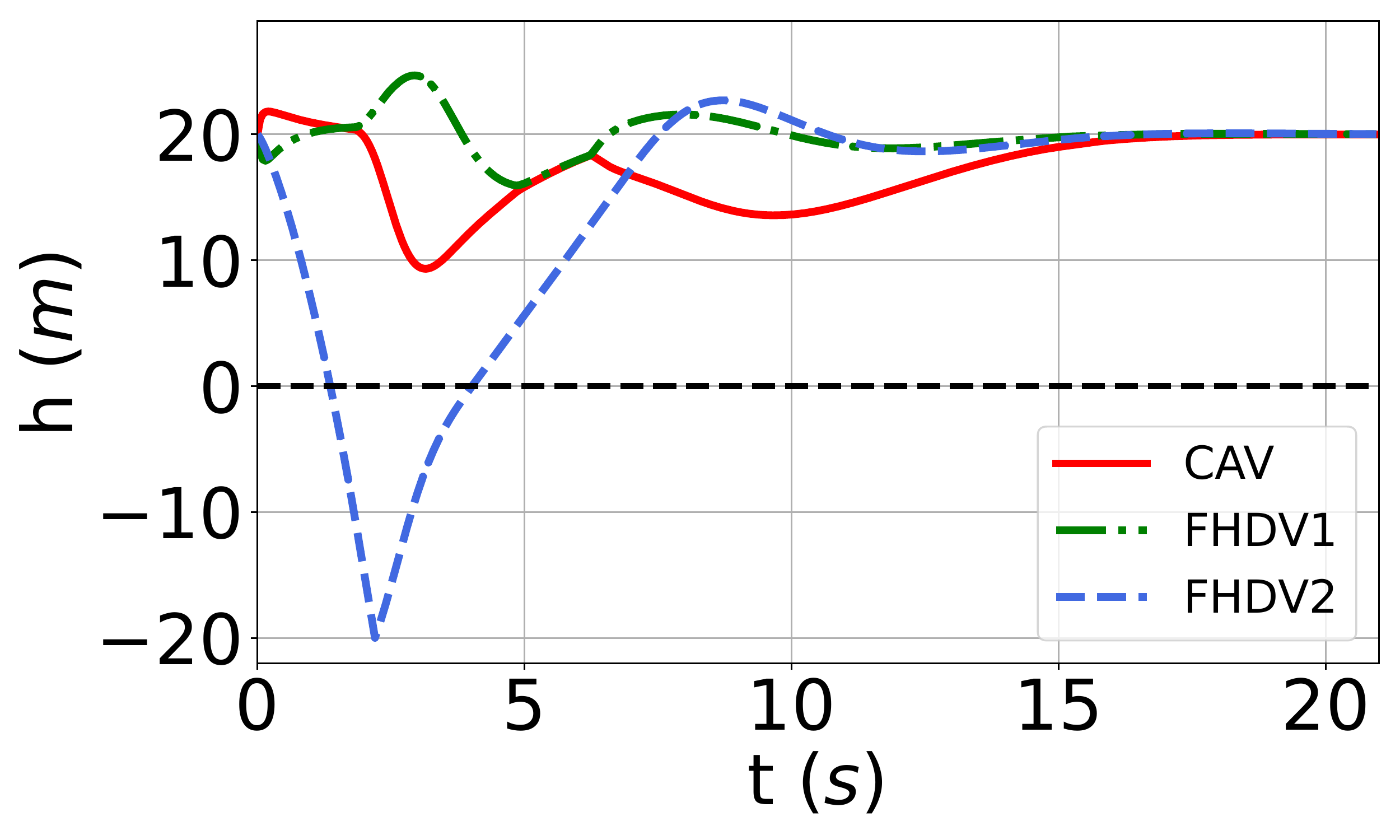}
    \\
    Robust Observer-based STC \\[6pt]
    \includegraphics[width=0.24\linewidth]{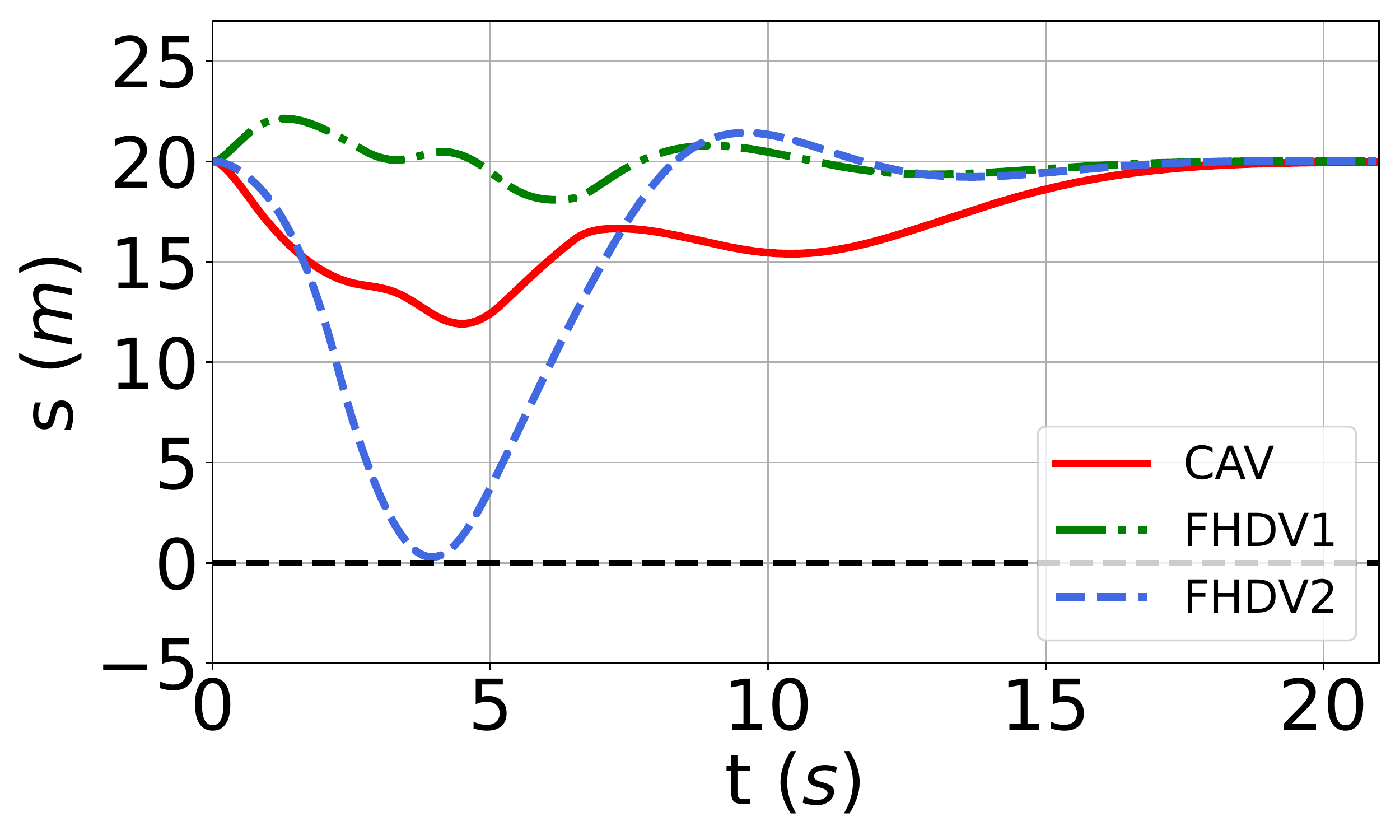}
    \includegraphics[width=0.24\linewidth]{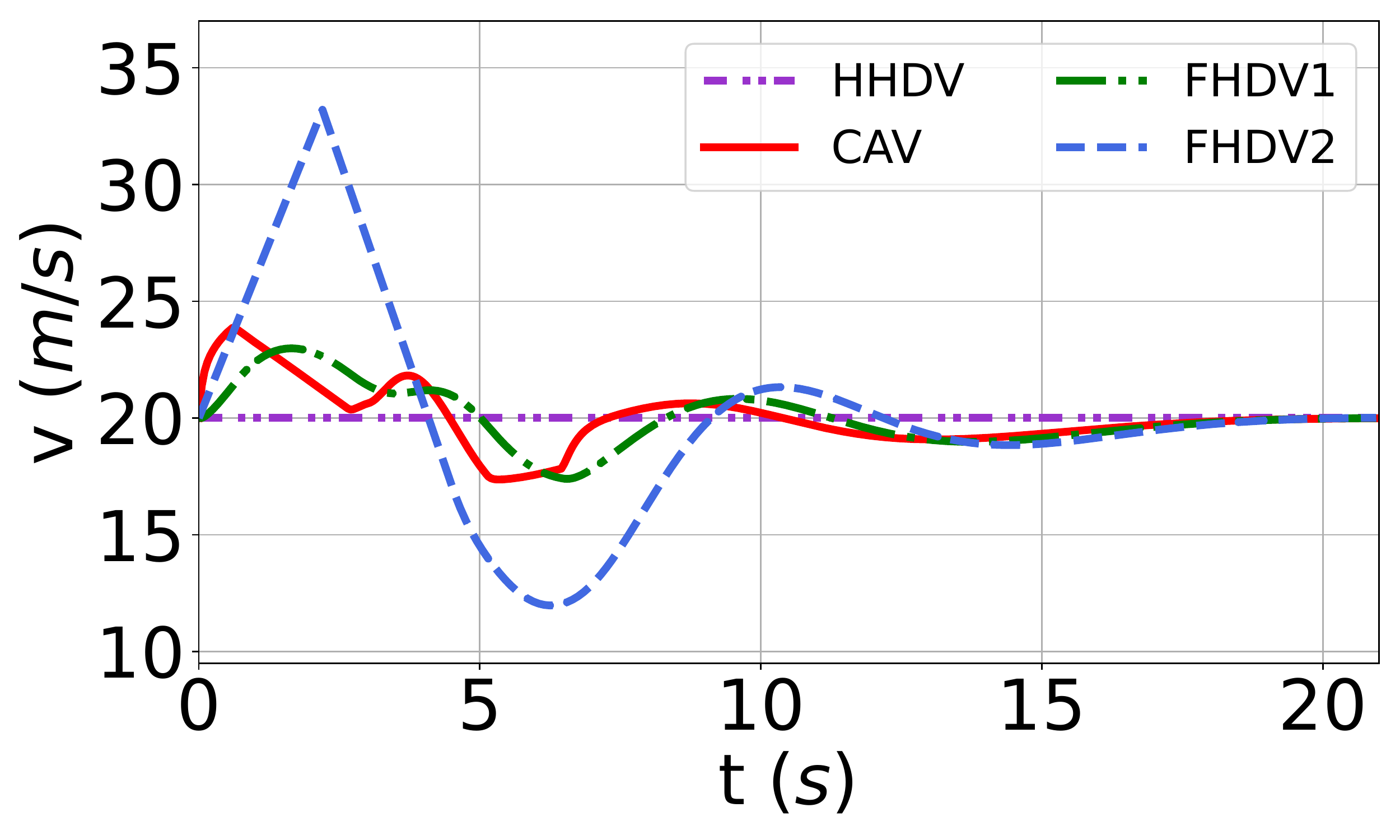}
    \includegraphics[width=0.24\linewidth]{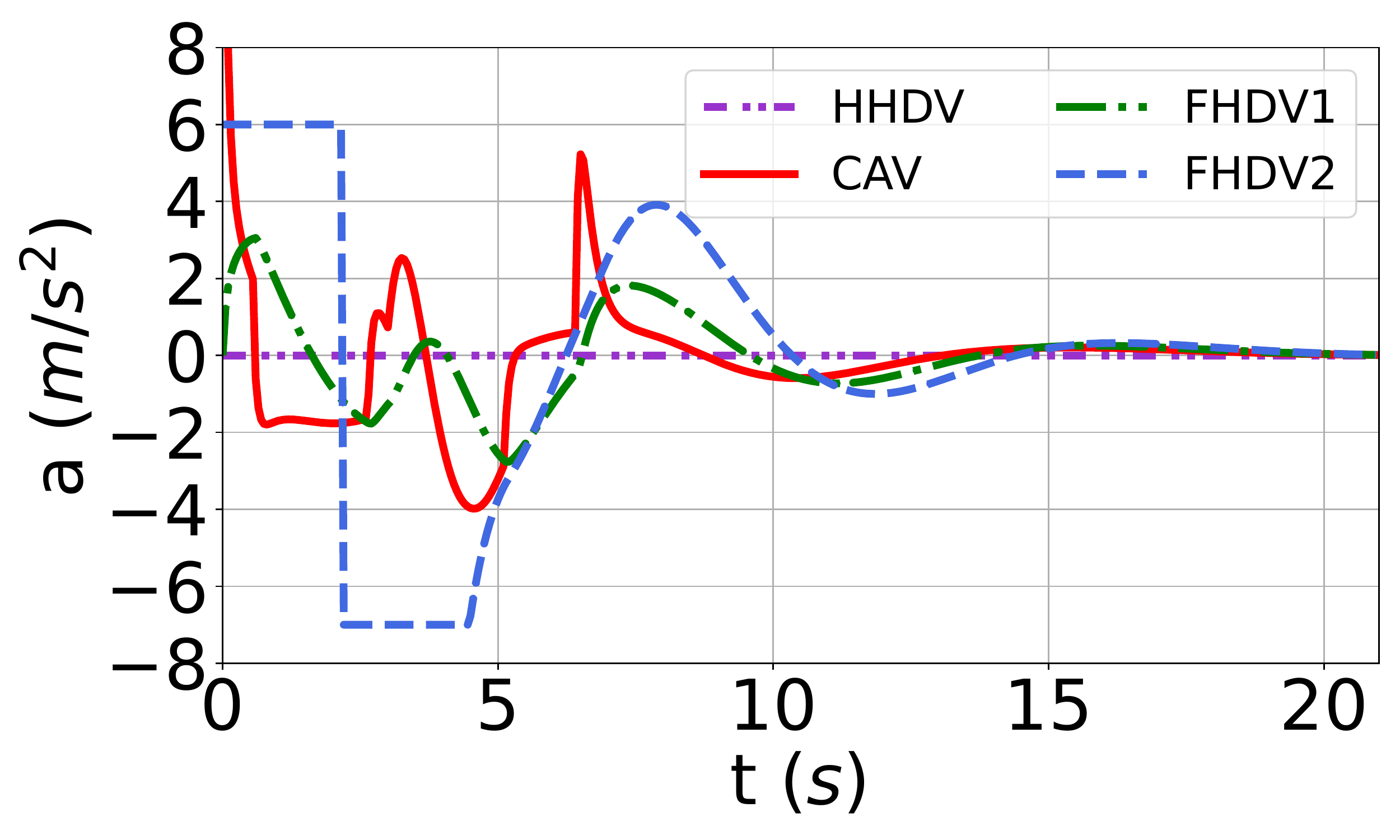}
    \includegraphics[width=0.24\linewidth]{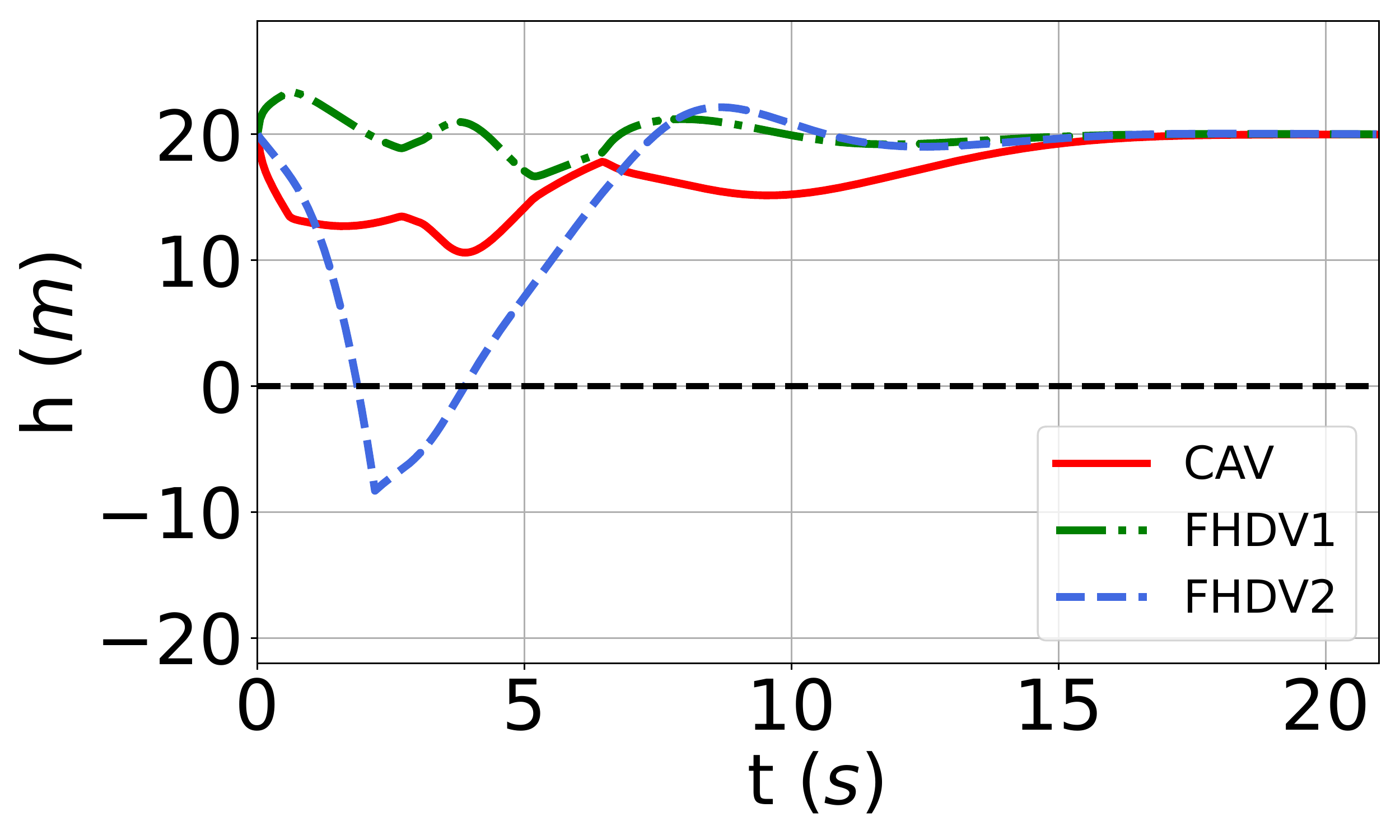}
    \caption{
    Observer-based STC for safe stabilization of mixed traffic with output feedback.
    A na\"ive implementation of observer-based STC (top), that disregards the observer's state estimation errors and directly evaluates~\eqref{eq:QP} with the state estimated by the observer~\eqref{eq:obs}, can potentially cause safety violations.
    The robust implementation of observer-based STC (bottom), that takes into account the estimation error in~\eqref{eq:QP_obs_robust}, maintains formal safety guarantees.
    }
    \label{fig:sim:observer case2}
\end{figure}

\subsection{Analysis of STC Performance} \label{sec:subsec:sim analysis}
In this section, we investigate the STC performance in more details.
We include discussion about the effects of acceleration limits, selection of parameters, and choice of spacing policy.
Through these, we highlight some limitations of the present approach and give suggestions on how to overcome them.
For simplicity, we conduct this analysis for full state feedback without incorporating observers into the control loop.

First, we highlight that safety may be affected by the acceleration capabilities of vehicles.
Notice that the control input $u$ of the proposed STC \eqref{eq:QP} and the accelerations $\dot{v}_i$ obtained from the human driver model~\eqref{eq:OVM} can take any real value.
In practice, however, vehicles have limited acceleration capabilities, hence these quantities are constrained to the range ${[a_{i,\min},a_{i,\max}]}$.
To investigate the effect of acceleration limits on safety, we conduct simulations where the accelerations are saturated as
${\dot{v}_0 = {\rm sat}(u)}$ and
${\dot{v}_i = {\rm sat}\big( F_i\left(s_{i}, \dot{s}_{i},  v_{i}\right) \big)}$, ${i \in \{1, \ldots, N\}}$, with
\begin{equation}
    {\rm sat}(u) = \max \big\{ a_{\min}, \min\{u, a_{\max} \}\big\}
\end{equation}
and uniform acceleration limits $a_{\min}$ and $a_{\max}$ for each vehicle given in Table~\ref{tab:parameters}.

Figure~\ref{fig:sim:infinite u} presents the effect of acceleration limits on safety.
Here, the simulation scenario and parameters correspond to the bottom of Fig.~\ref{fig:sim:trajectory case1}, except that the saturation function is also incorporated into the dynamics.
The red curves in the figure highlight that there exist scenarios in which acceleration limits affect safety (i.e., $s_0$  goes negative when the acceleration of the CAV saturates), since these limits are not incorporated into the STC design.
While addressing bounded control inputs (limited acceleration) is possible by CBF theory~\cite{gurriet2020scalable, agrawal2021safe, liu2022safe} and could be included in STC, it is a nontrivial task and significantly increases the complexity of control design---hence it is left for future work.
Instead, in this paper we quantify the amount of safety violations due to the saturation, and tune the controller parameters to minimize and eliminate these violations.

Specifically, we investigate the effects of three main sets of tunable parameters in STC~\eqref{eq:QP}: the time $\tau_i$ in the CBF candidates, the coefficient $\gamma_i$ in the safety constraints, and the penalty $p_i$ in the cost.
For simplicity, we keep these parameters to be the same for all vehicles, and we use the short notation $\tau$, $\gamma$, $p$.
The default values of these parameters (used in the previous figures) are listed in Table~\ref{tab:parameters}.
Now we analyze how these parameters affect the performance of STC.

In particular, we characterize how abrupt HDV motions can be handled by STC with limited acceleration as a function of parameters $\tau$, $\gamma$, $p$.
To achieve this, we conduct a large number of simulations with various $a_{\mathrm{H}}$, $t_{\mathrm{H}}$, $a_{\mathrm{F}}$, $t_{\mathrm{F}}$ values characterizing the motion of the HDVs in \eqref{eq:HHDV_accel}-\eqref{eq:FHDV_accel}.
In Scenario 1, we quantify how much the HHDV can reduce its speed without causing collision, i.e., we identify the safe range of the minimum speed ${v_{-1,\min} = v^\star - a_{\mathrm{H}}t_{\mathrm{H}}}$  for which the response of the vehicles satisfies ${s_i(t) \ge 0}$.
In Scenario 2, we quantify how much FHDV-2 can increase its speed without causing collision, i.e., we calculate the safe range of ${v_{2,\max} = v^\star + a_{\mathrm{F}}t_{\mathrm{F}}}$ for which ${s_i(t) \ge 0}$.
We calculate these by considering both the entire vehicle chain (such that ${s_i(t) \ge 0}$, ${\forall i \in \{0,1,2\}}$), and also separately for the CAV (${s_0(t) \ge 0}$), FHDV-1 (${s_1(t) \ge 0}$) and FHDV-2 (${s_2(t) \ge 0}$).

Figure~\ref{fig:sim:region SDH} shows the safe regions of $v_{-1,\min}$ in Scenario 1 (top) and $v_{2,\max}$ in Scenario 2 (bottom) for various accelerations $a_{\rm H}$ and $a_{\rm F}$, respectively.
These results are presented for fixed $\gamma$ and $p$, while varying $\tau$.
We make the following conclusions.
\begin{itemize}
    \item In Scenario 1, there is no collision for FHDV-1 and FHDV-2 for all values of $\tau$ even if the HHDV brakes to a complete stop, since the safe range of $v_{-1,\min}$ covers the whole interval $[0,v^\star]$ in Fig. \ref{fig:sim:region SDH}(c) and (d).
    The safety of the vehicle chain is therefore determined by the safety of the CAV.
    Considering the CAV's safety in Fig. \ref{fig:sim:region SDH}(b), the safe range of $v_{-1,\min}$ grows by increasing $\tau$, since larger $\tau$ corresponds to larger safe spacing.
    Ultimately, STC makes the CAV avoid collisions for any HHDV motion if $\tau$ is selected to be large enough---even with limited acceleration capability.
    \item In Scenario 2, the CAV still avoids collision with STC if $\tau$ is large enough.
    According to Fig. \ref{fig:sim:region SDH}(f), the safe range of $v_{2,\max}$ covers the whole interval $[v^\star,v_{\max}]$ for almost all choices of $\tau$ except the smallest value.
    The trend is the opposite for FHDV-2, where the safe range of $v_{2,\max}$ shrinks for the largest value of $\tau$; see panel (h).
    Meanwhile, FHDV-1 avoids collision for any $\tau$; see panel (g).
\end{itemize}
Moreover, notice that STC (in color) outperforms the nominal controller (gray) in both scenarios, since the safe ranges of $v_{-1,\min}$ and $v_{2,\max}$ are significantly larger for STC.
Finally, we performed the same analysis by varying $\gamma$ in the range $\gamma \in \{5,10,15\}$ while keeping $\tau$ and $p$ fixed, and by varying $p$ in the range $p \in \{10,100,1000\}$ while keeping $\tau$ and $\gamma$ fixed.
We found that the STC is less sensitive to $\gamma$ and $p$, as there is only slight variation in the safe ranges of velocity perturbations.
These analyses justify the parameter selection in Table~\ref{tab:parameters}.

\begin{figure}[t]
    \centering
    \includegraphics[width=0.23\linewidth]{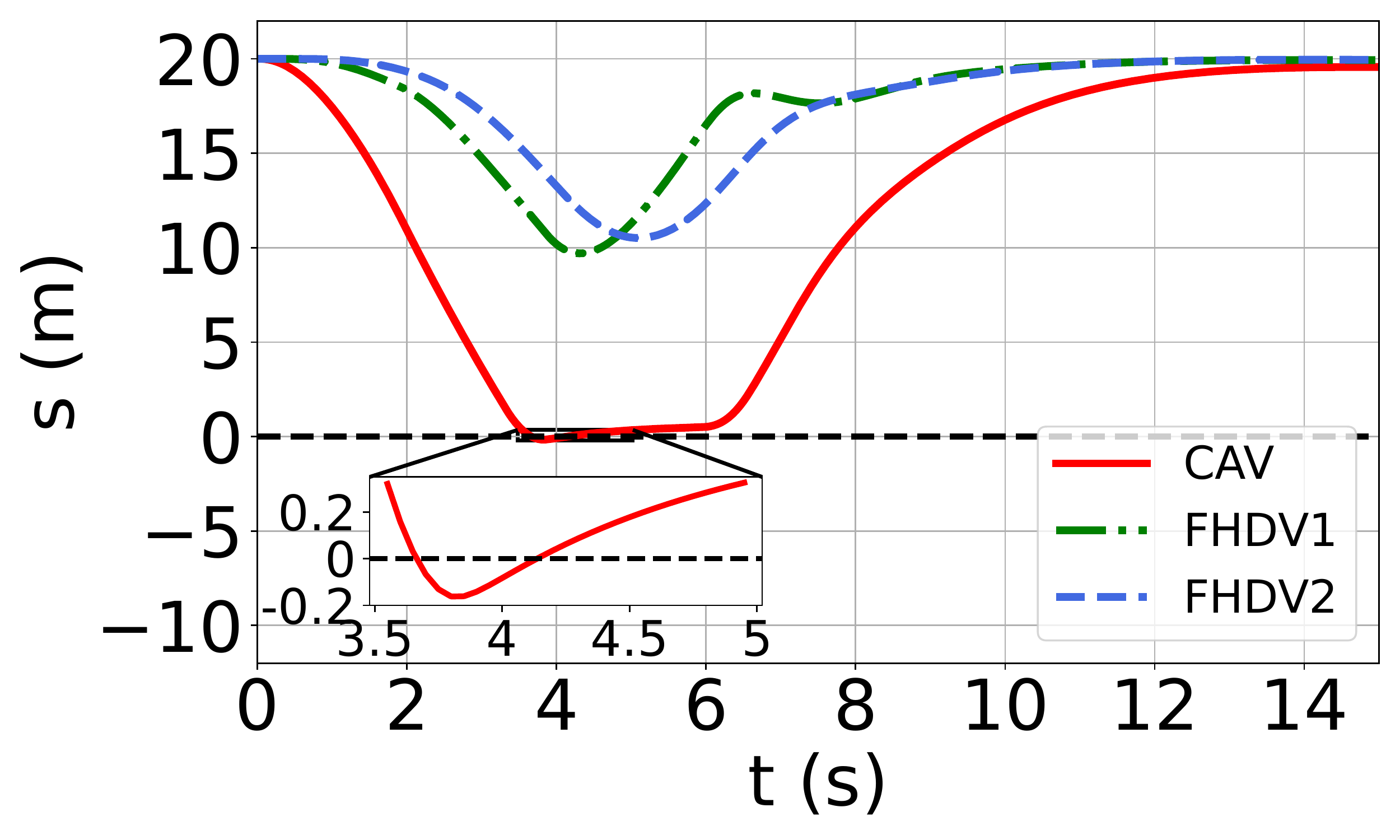}
    \includegraphics[width=0.23\linewidth]{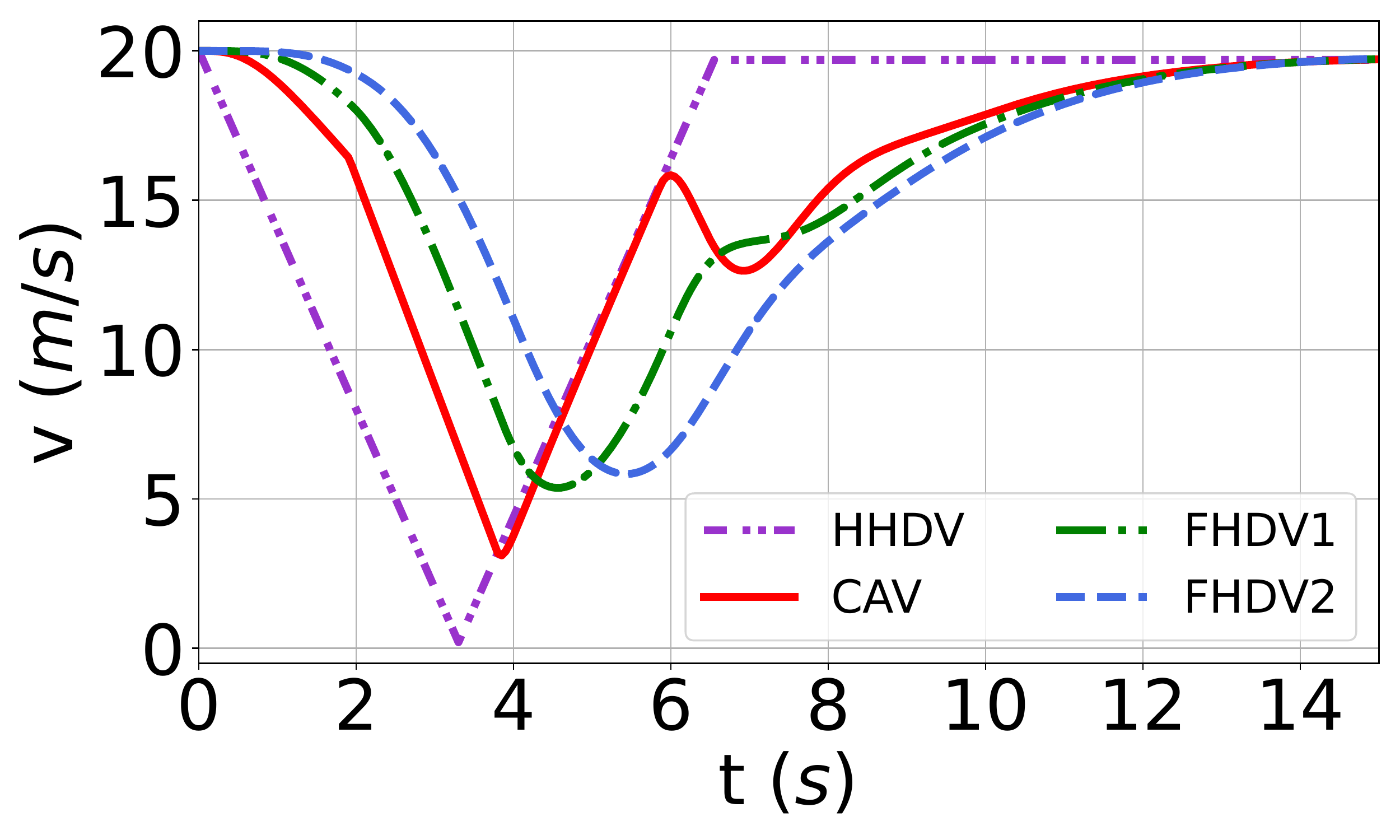}
    \includegraphics[width=0.23\linewidth]{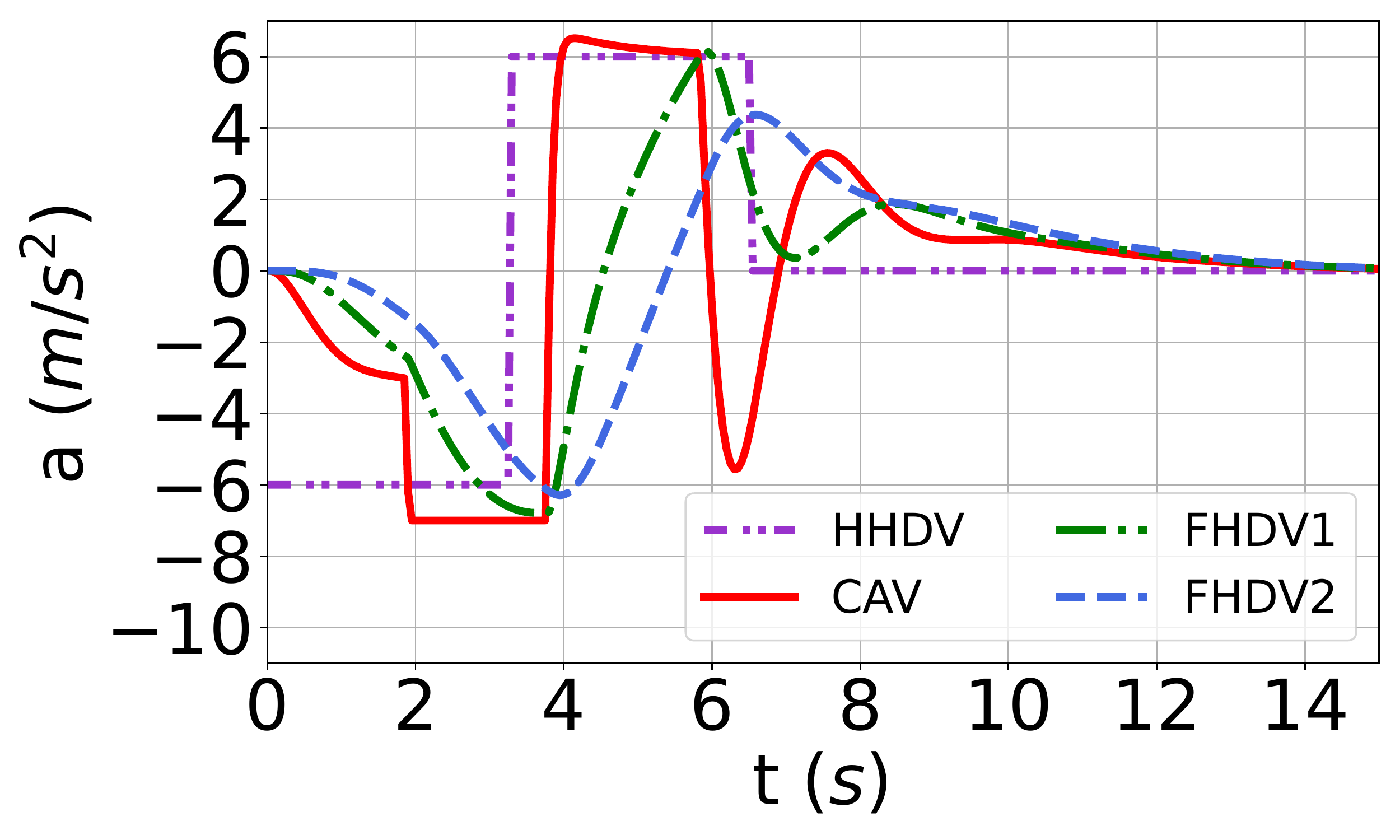}
    \includegraphics[width=0.23\linewidth]{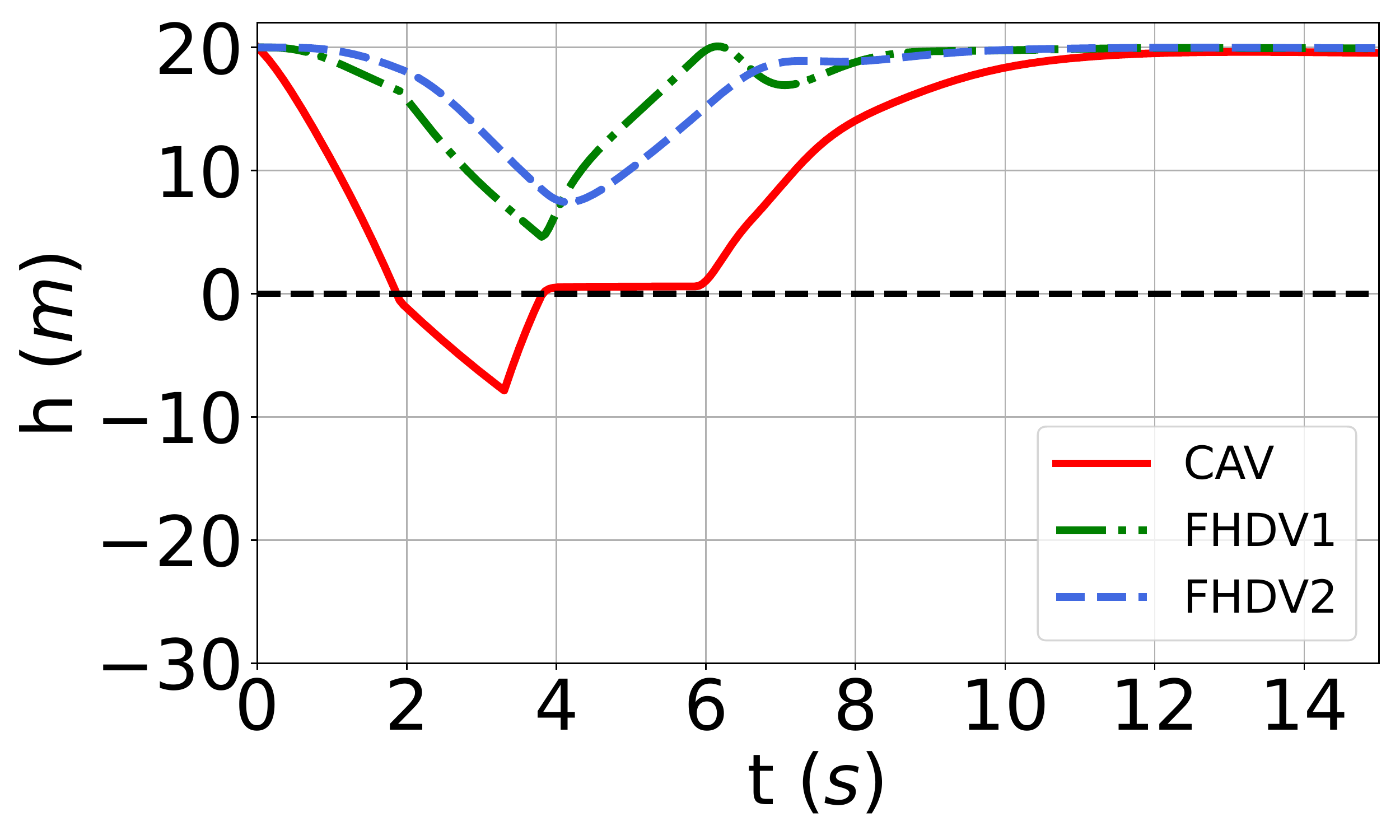}
    \caption{
    The effect of acceleration limits on the performance of STC.
    The safety of mixed traffic is negatively impacted by the acceleration limits compared to the case with no limits shown at the bottom of Fig.~\ref{fig:sim:trajectory case1}.
    }
    \label{fig:sim:infinite u}
\end{figure}

\begin{figure}[t]
\centering
\hspace*{-0.5cm}
Scenario 1 \\[6pt]
\subcaptionbox{vehicle chain}{\includegraphics[width=0.23\linewidth]{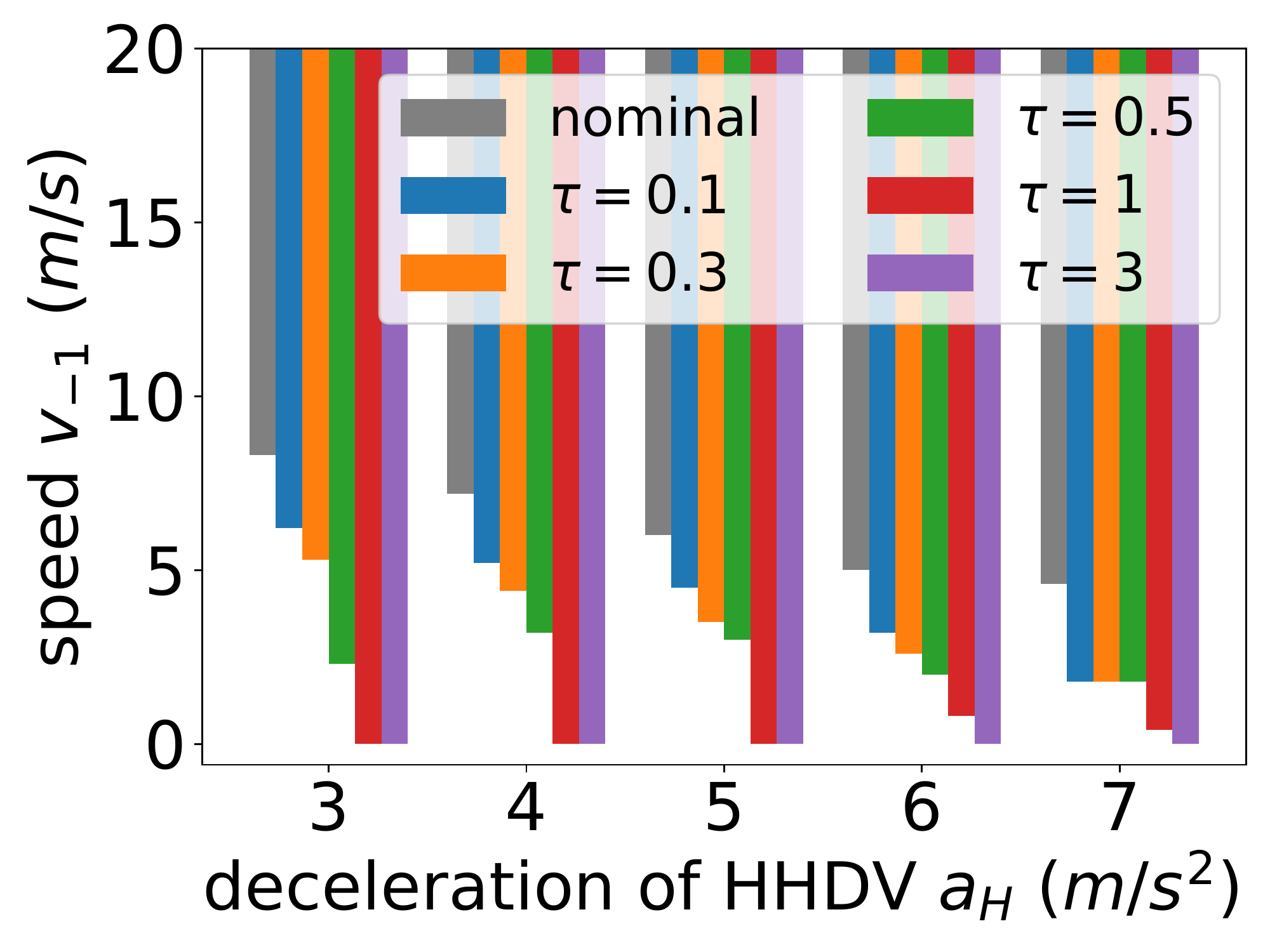}}
\subcaptionbox{CAV}{\includegraphics[width=0.23\linewidth]{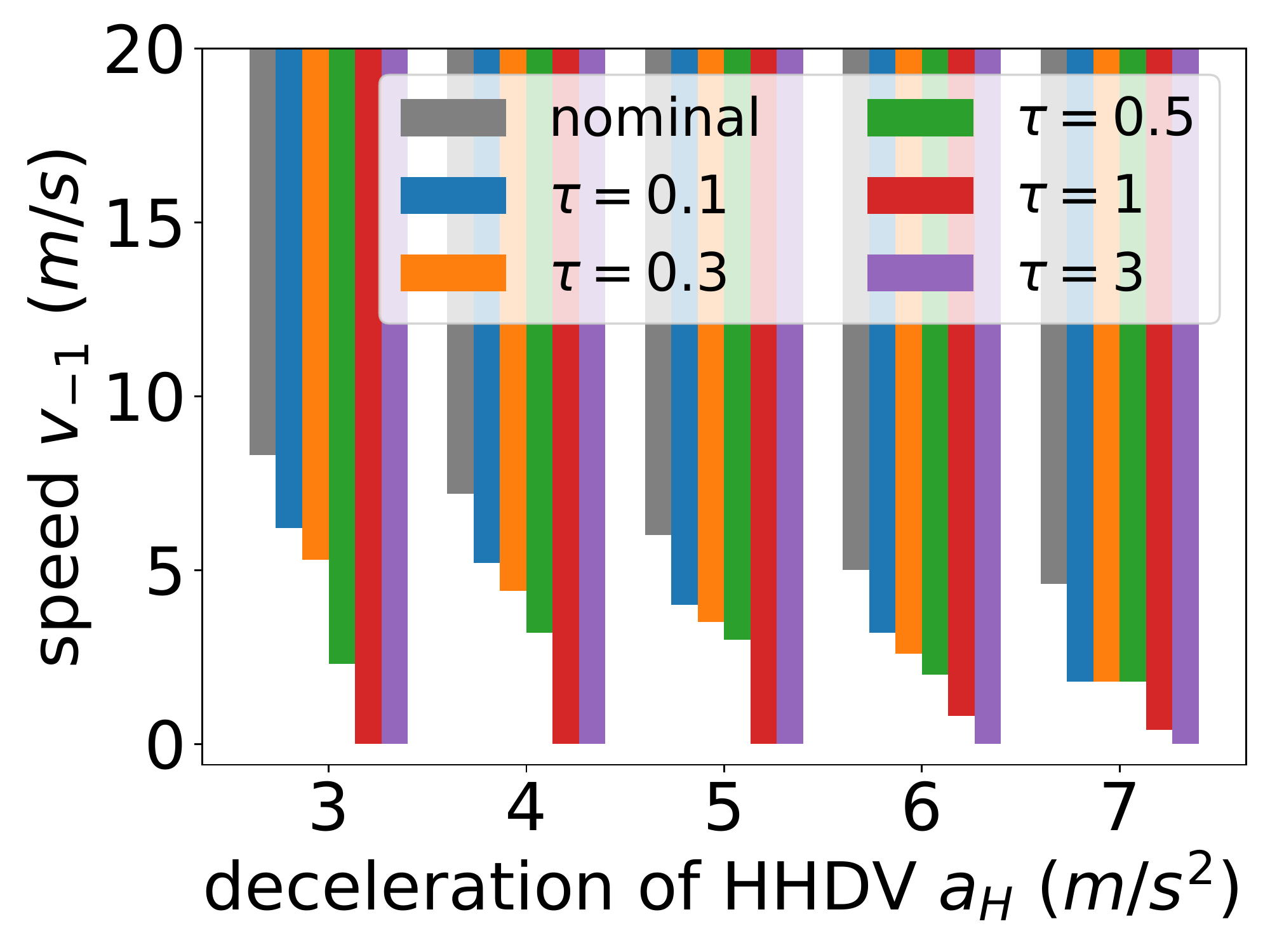}}
\subcaptionbox{FHDV-1}{\includegraphics[width=0.23\linewidth]{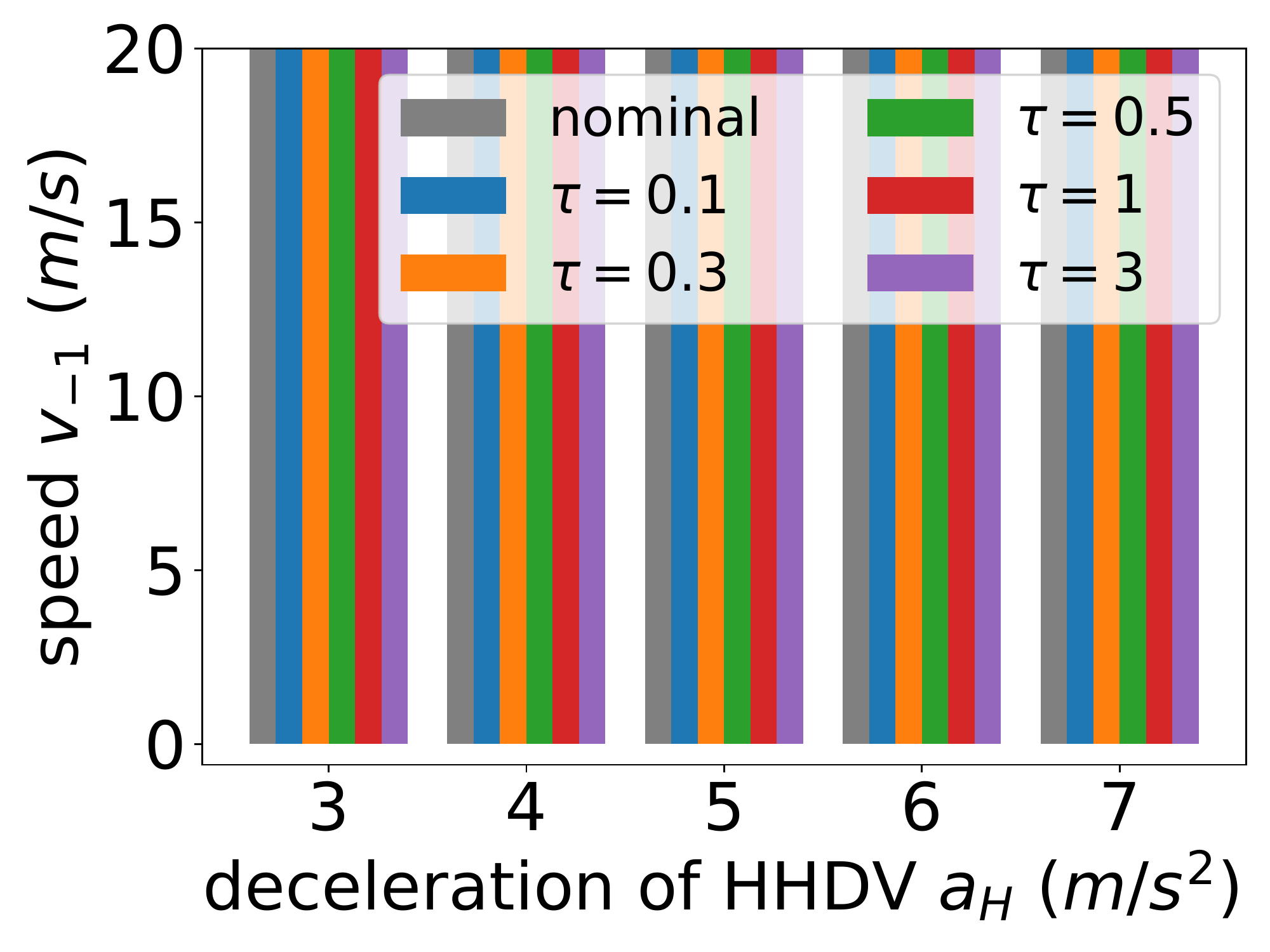}}
\subcaptionbox{FHDV-2}{\includegraphics[width=0.23\linewidth]{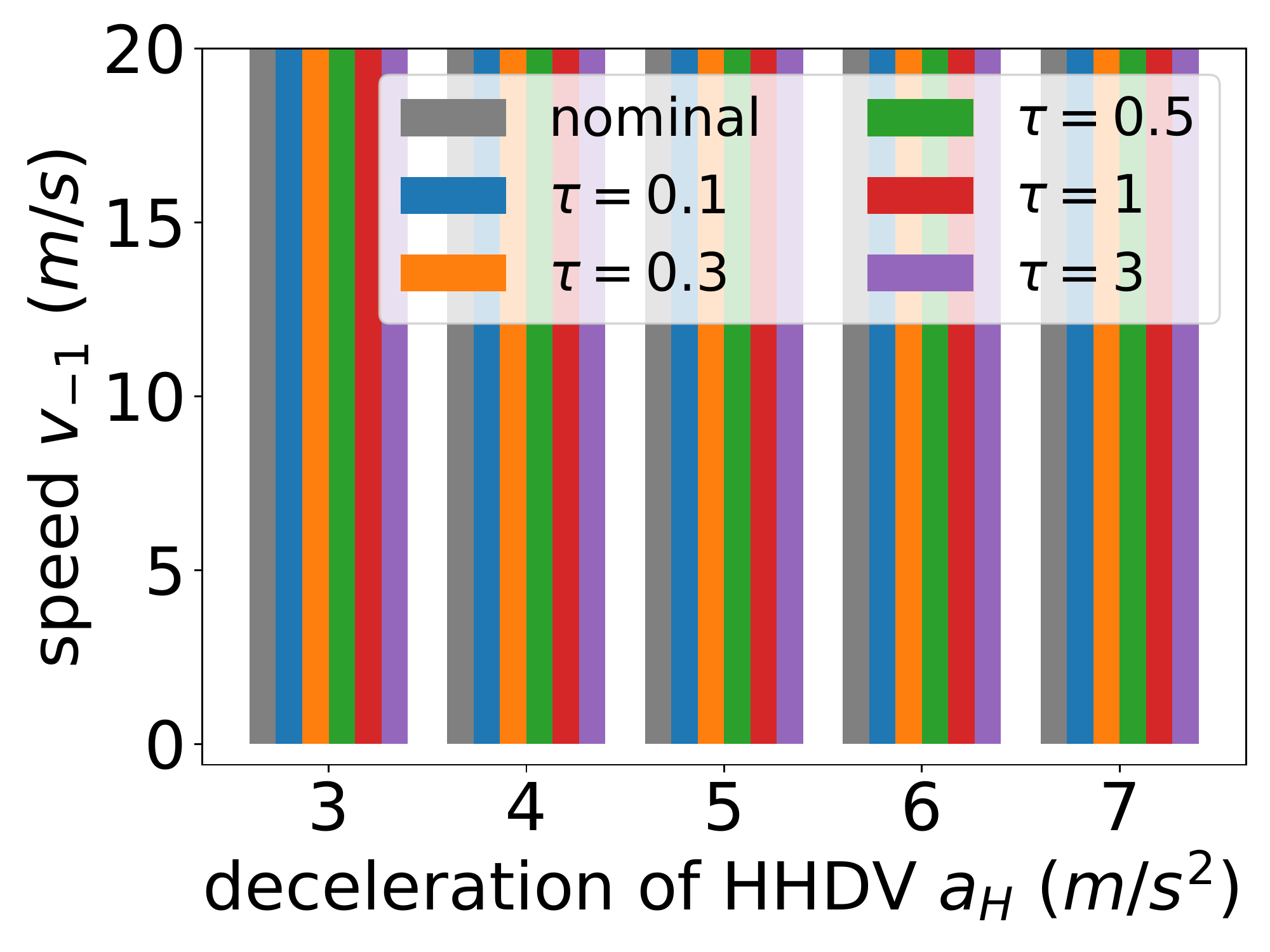}}
\\[6pt]
Scenario 2 \\[6pt]
\subcaptionbox{vehicle chain}{\includegraphics[width=0.23\linewidth]{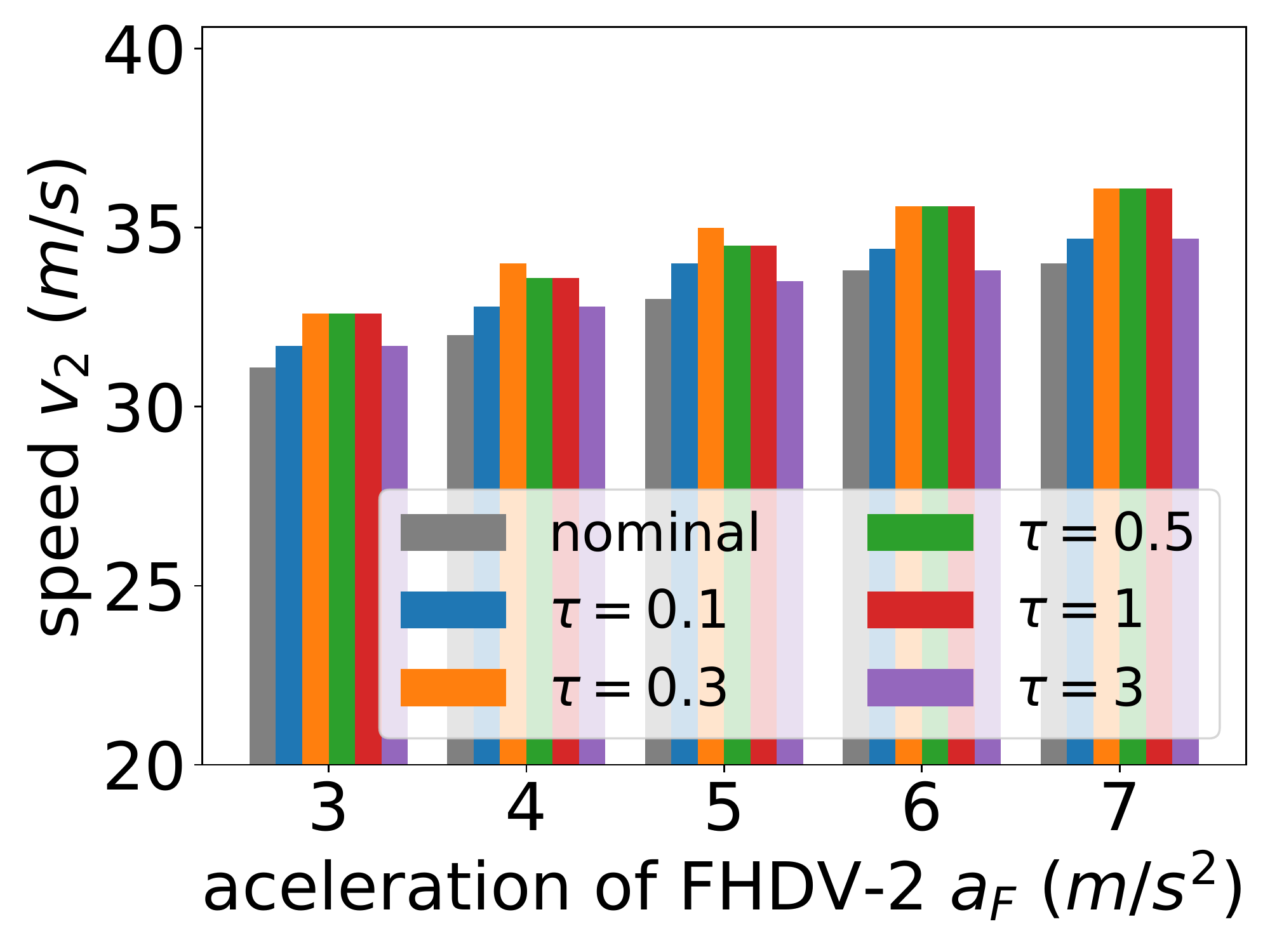}}
\subcaptionbox{CAV}{\includegraphics[width=0.23\linewidth]{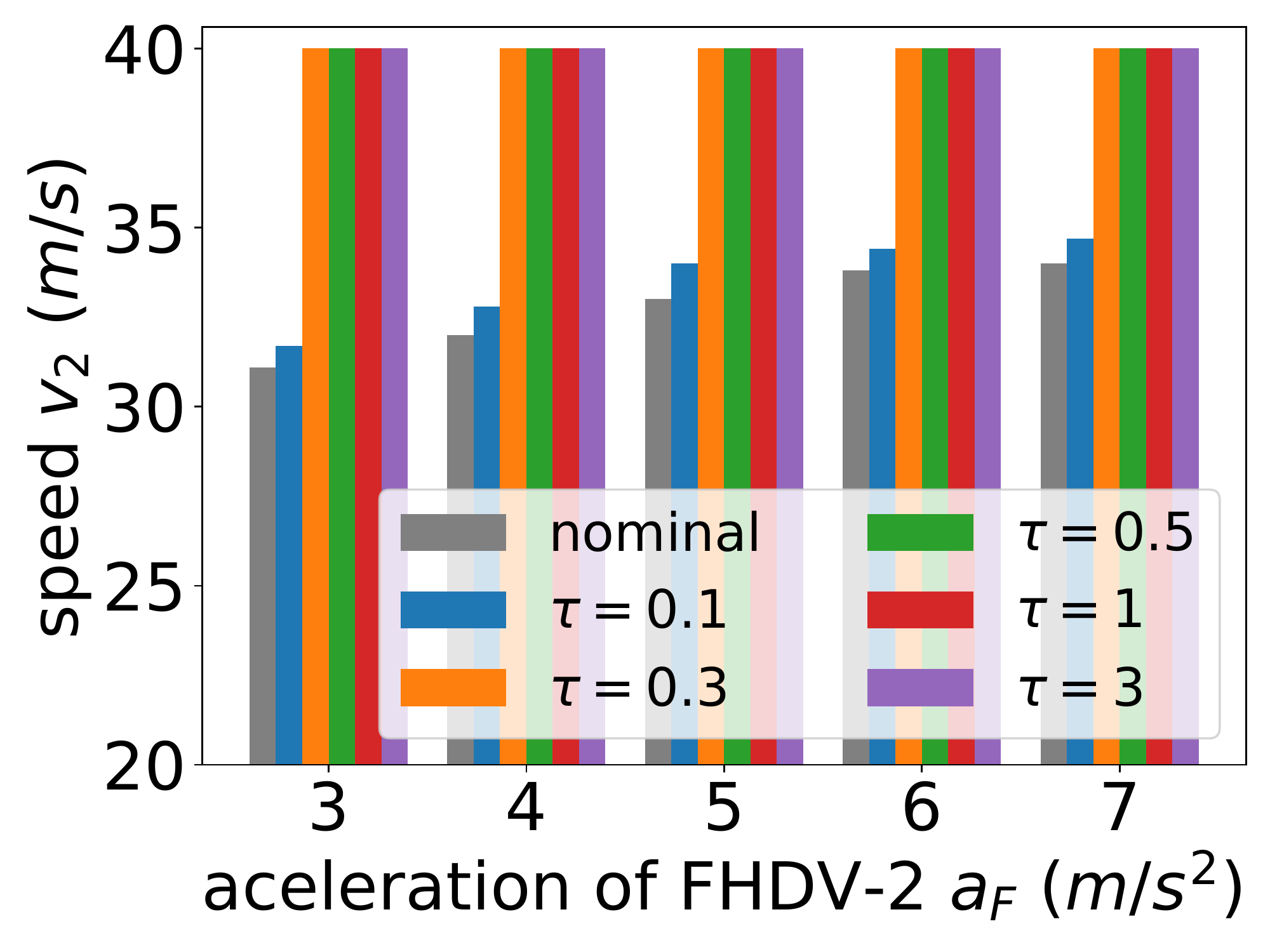}}
\subcaptionbox{FHDV-1}{\includegraphics[width=0.23\linewidth]{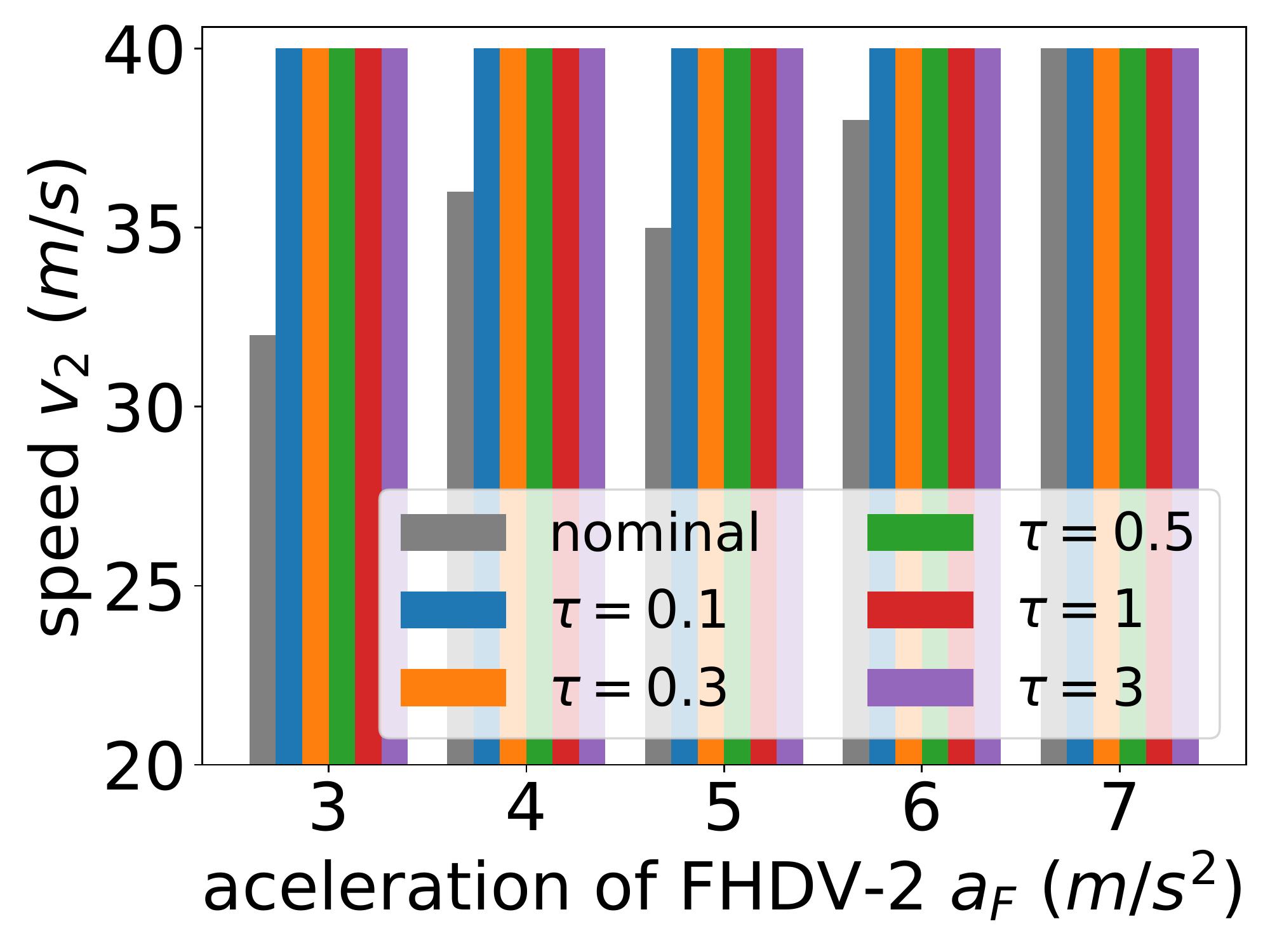}}
\subcaptionbox{FHDV-2}{\includegraphics[width=0.23\linewidth]{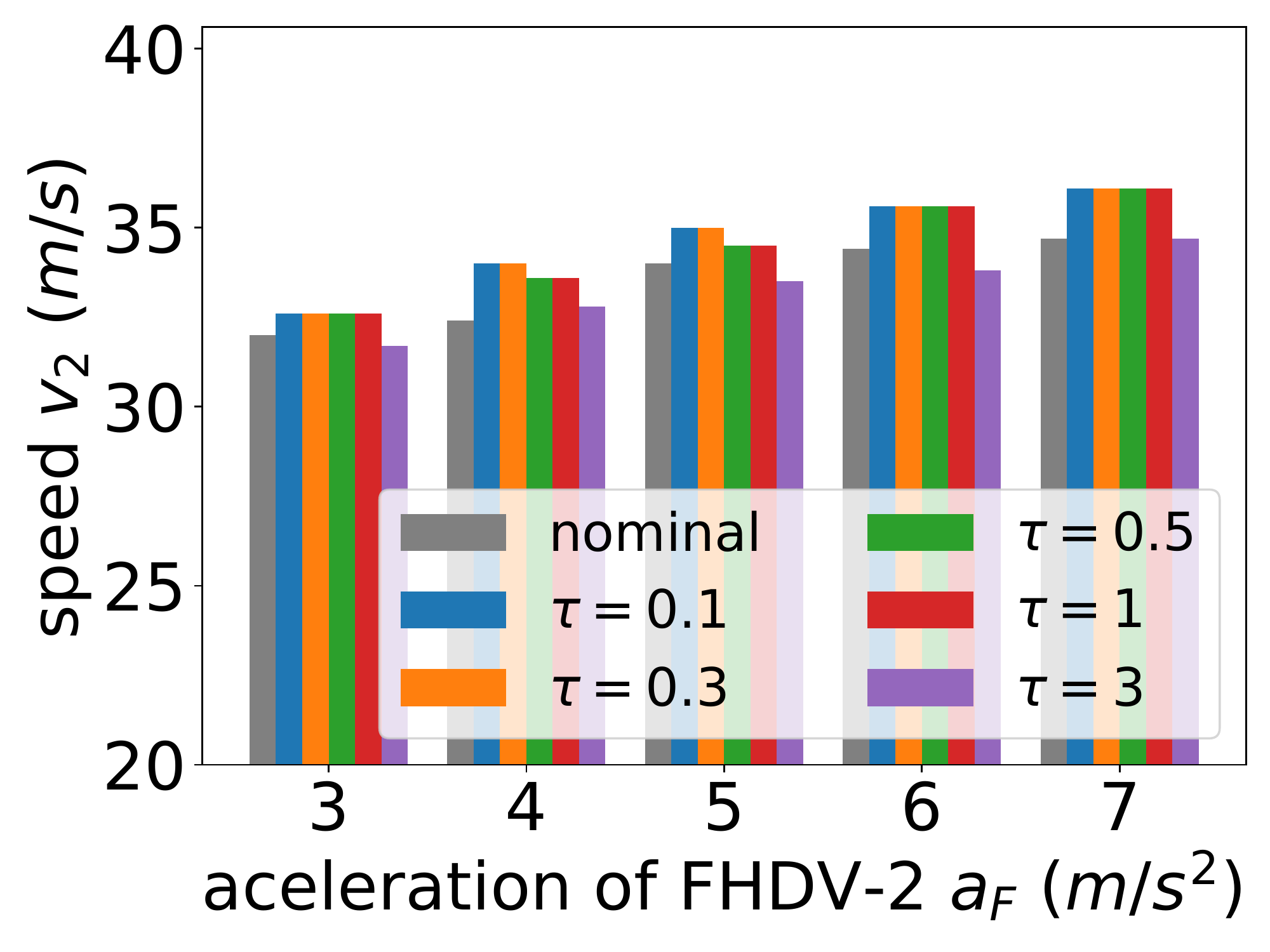}}
\caption{
The effect of HDV velocity perturbations on the safety of STC with acceleration limits, as a function of parameter $\tau$ used in the SDH safe spacing policy.
Remarkably, STC is able to keep the entire vehicle chain safe for a wide range of velocity perturbations with the right choice of $\tau$ (such as $\tau=1$ s).
}
\label{fig:sim:region SDH}
\end{figure}

Collision-free behavior not only depends on the parameters of the controller but also on the safe spacing policy implemented.
So far, we showed results for the case of the SDH policy only.
Now we compare the TH~\eqref{eq:safety:TH}, TTC~\eqref{eq:safety:TTC} and  SDH~\eqref{eq:safety:SDH} safe spacing policies.
While there has been related research comparing  the TH and TTC policies~\cite{vogel2003comparison}, a detailed comparison of these in the context of safety-critical traffic control have not yet been investigated.

Figure~\ref{fig:sim:region compare} shows the safe ranges of HDV velocity perturbations using the different safe spacing policies.
The parameters $\gamma$ and $p$ are fixed, while $\tau$ is varied in the range $\tau \in \{0.1,0.3,0.5,1,3\}$.
Considering the entire vehicle chain, we find that the safe region of the SDH policy is slightly larger than those of the other two policies in Scenario 1, and the difference between the three policies is negligible in Scenario 2.  
This showcases the ability of the proposed STC framework to accommodate various spacing policies.

\begin{figure}[t]
\centering
\hspace*{-0.5cm}
Scenario 1 \\[6pt]
\subcaptionbox{vehicle chain}{\includegraphics[width=0.23\linewidth]{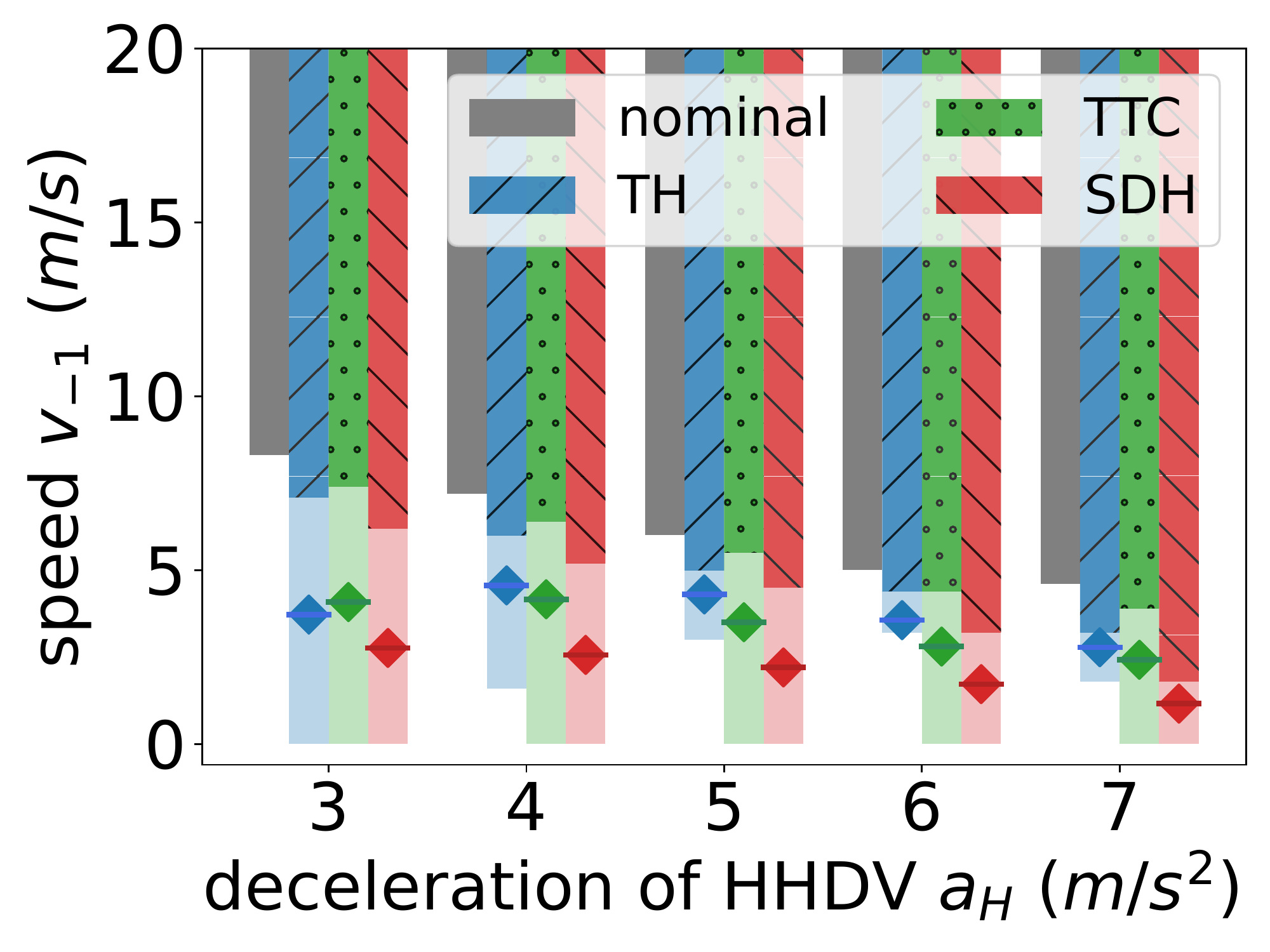}}
\subcaptionbox{CAV}{\includegraphics[width=0.23\linewidth]{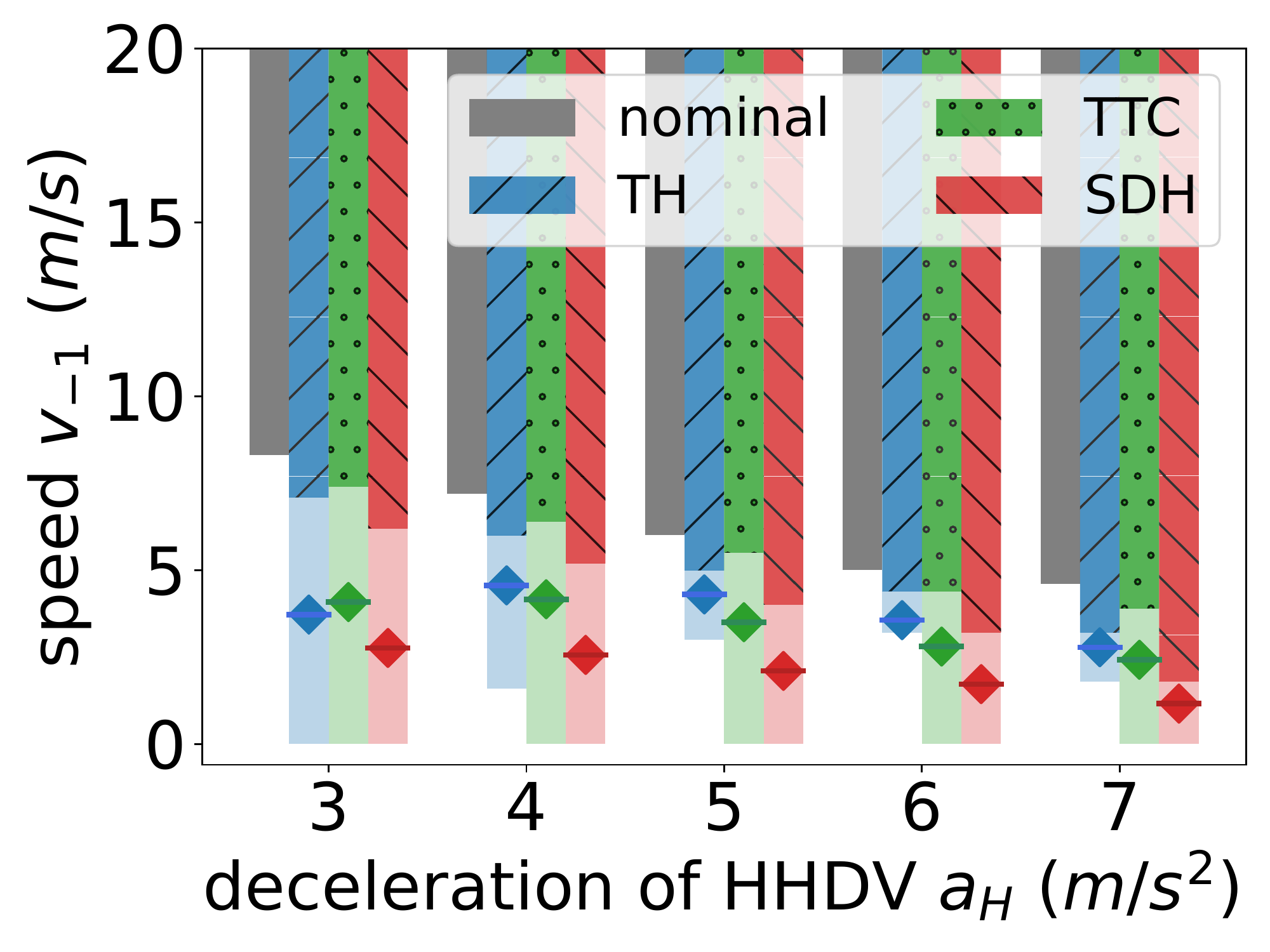}}
\subcaptionbox{FHDV-1}{\includegraphics[width=0.23\linewidth]{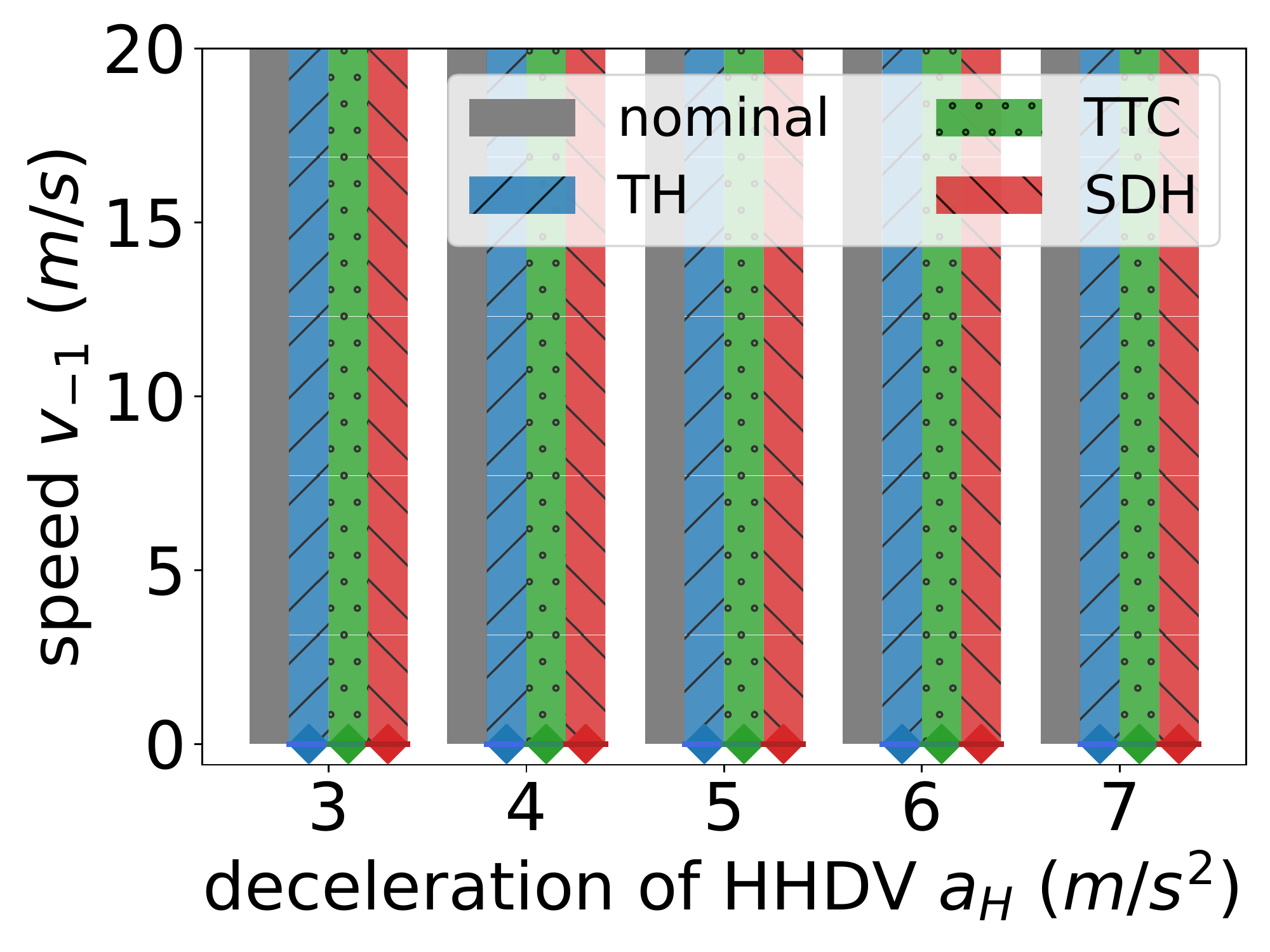}}
\subcaptionbox{FHDV-2}{\includegraphics[width=0.23\linewidth]{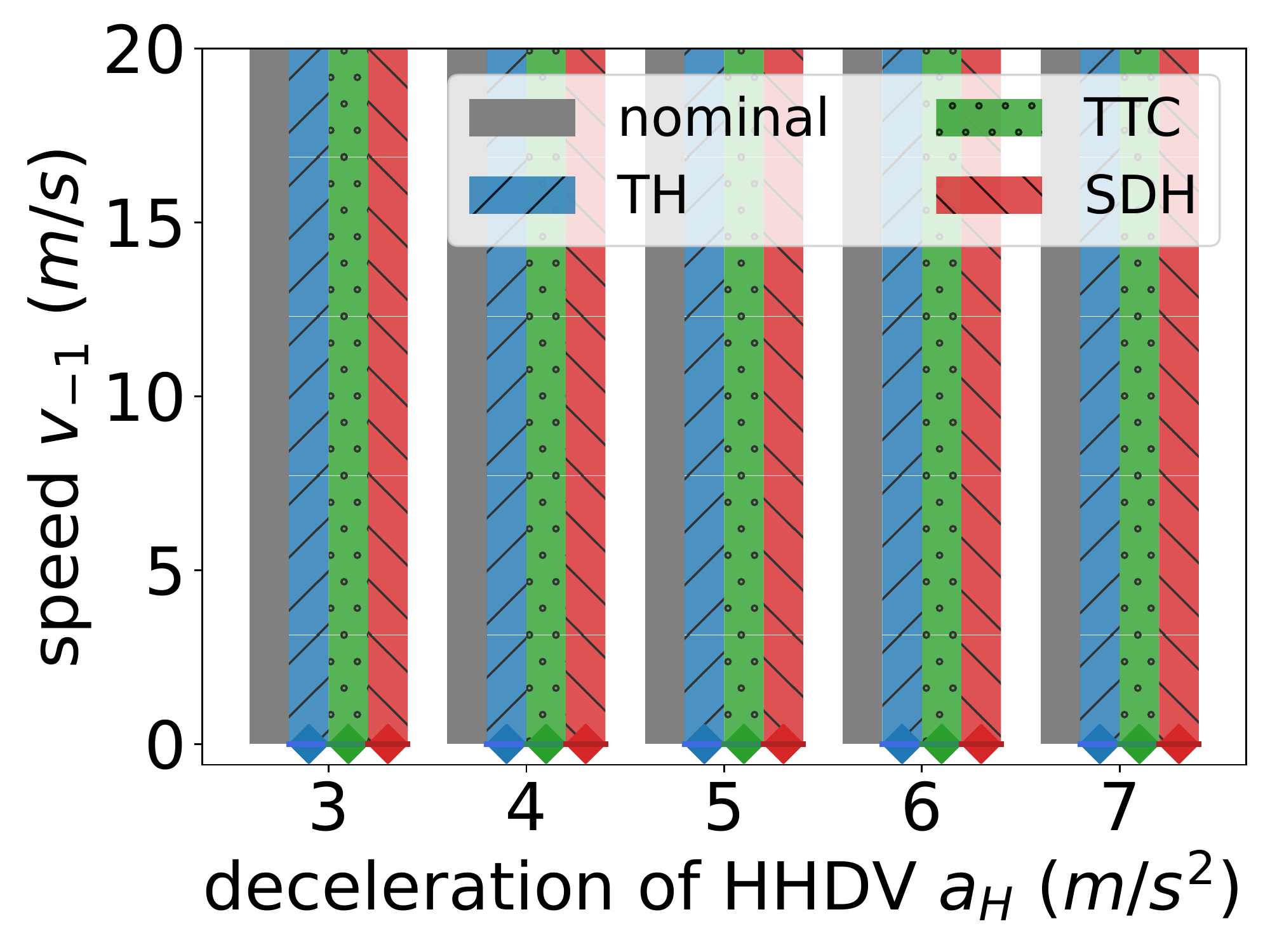}}
\\[6pt]
Scenario 2 \\[6pt]
\subcaptionbox{vehicle chain}{\includegraphics[width=0.23\linewidth]{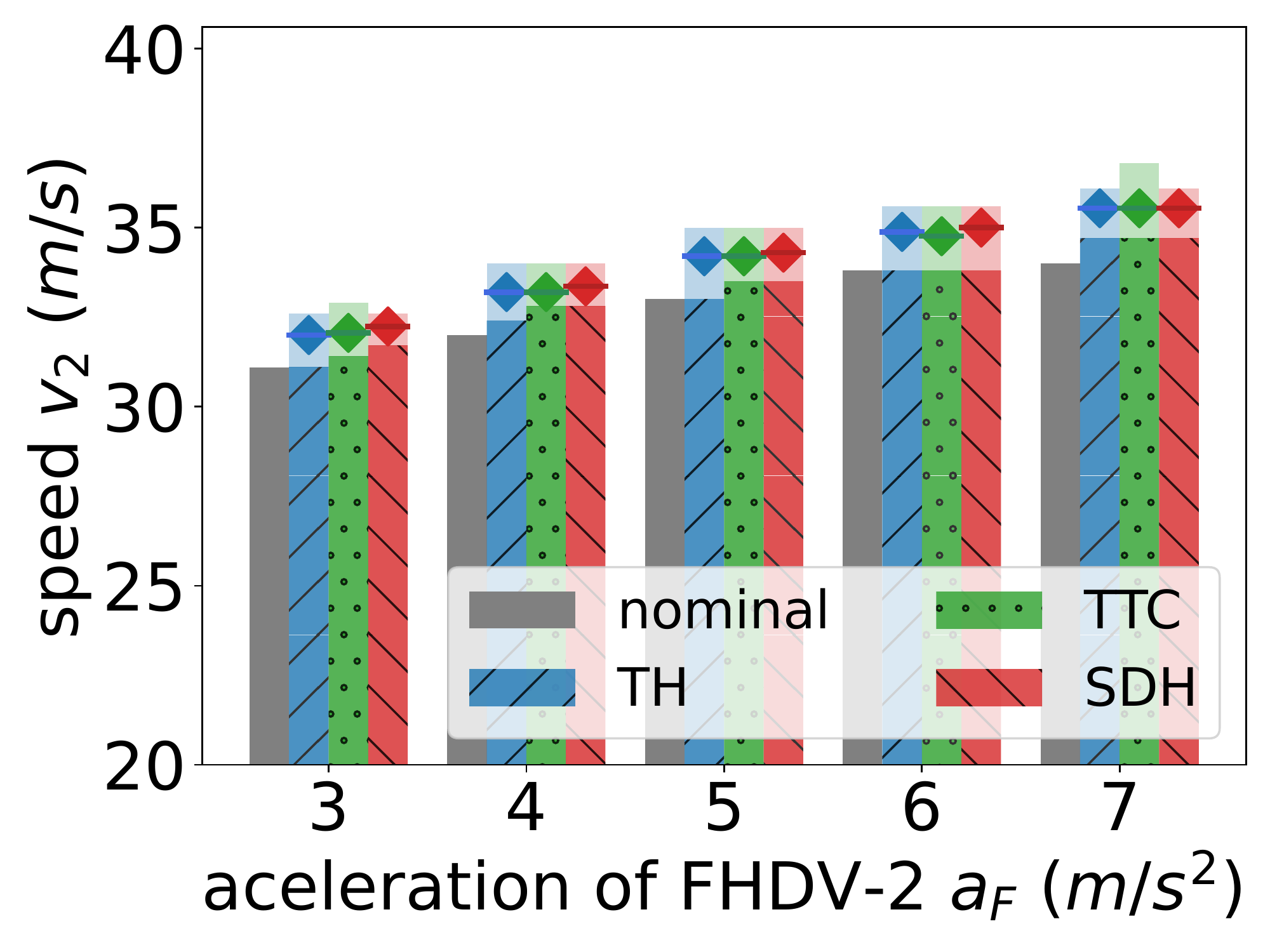}}
\subcaptionbox{CAV}{\includegraphics[width=0.23\linewidth]{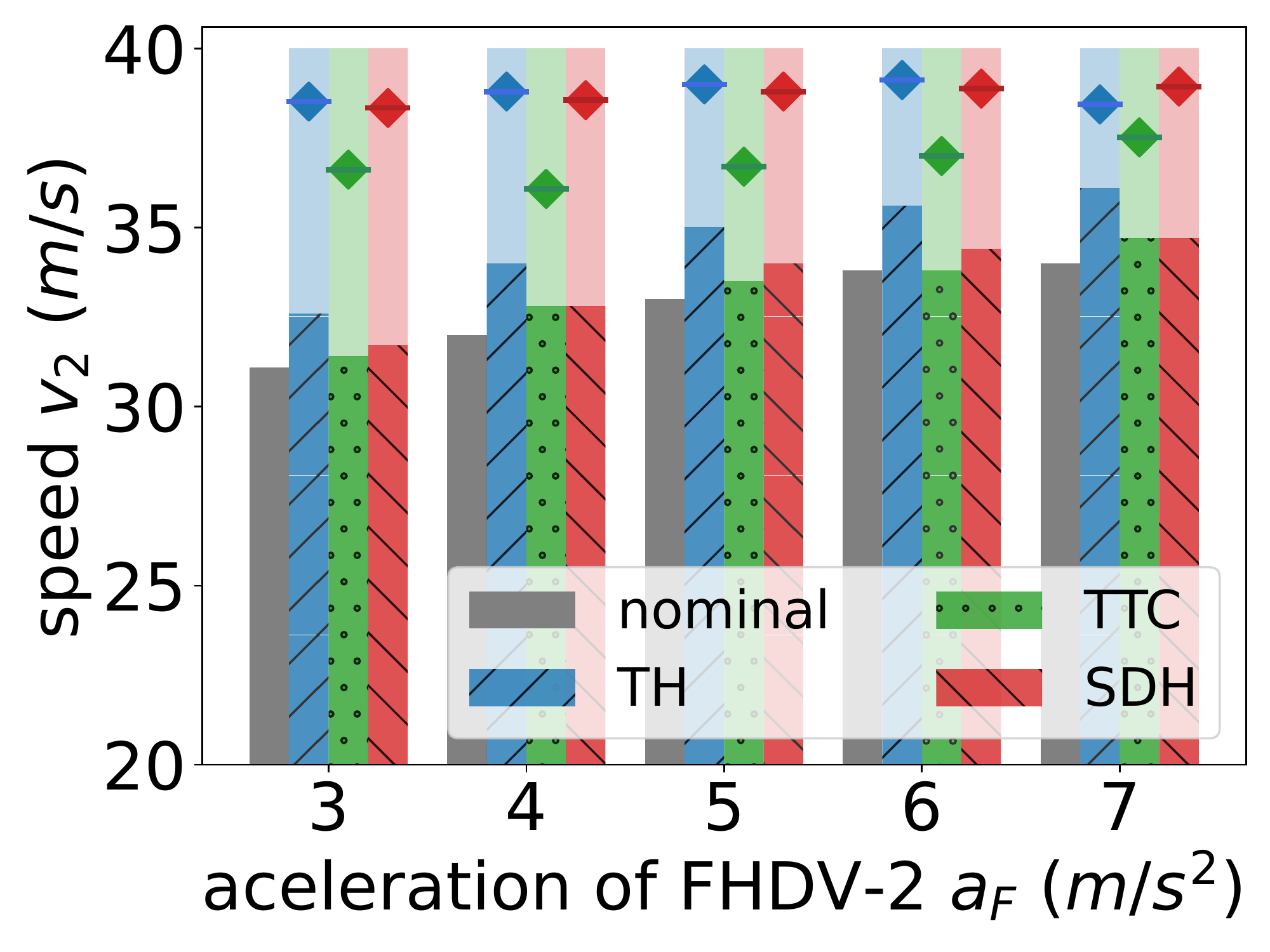}}
\subcaptionbox{FHDV-1}{\includegraphics[width=0.23\linewidth]{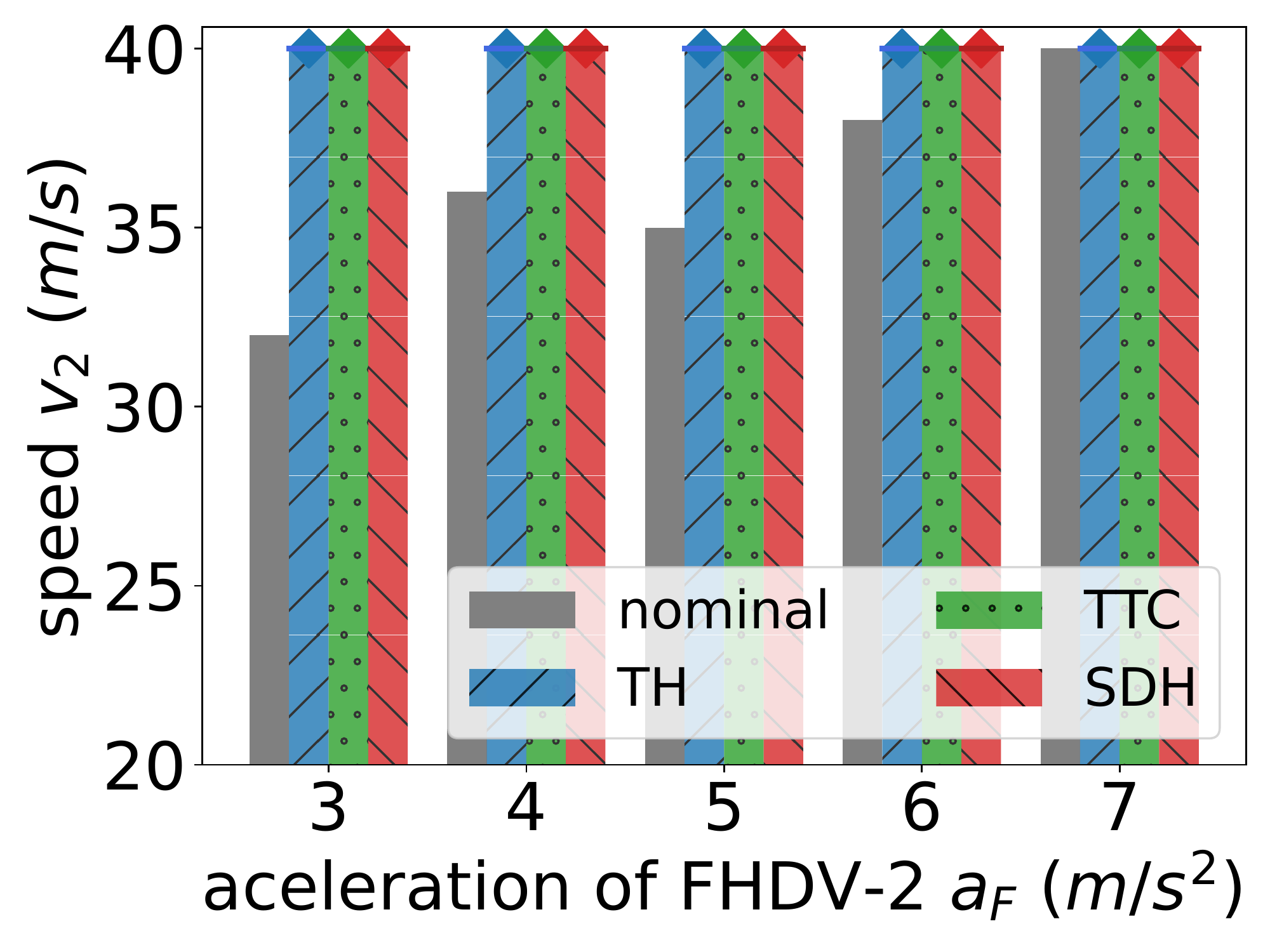}}
\subcaptionbox{FHDV-2}{\includegraphics[width=0.23\linewidth]{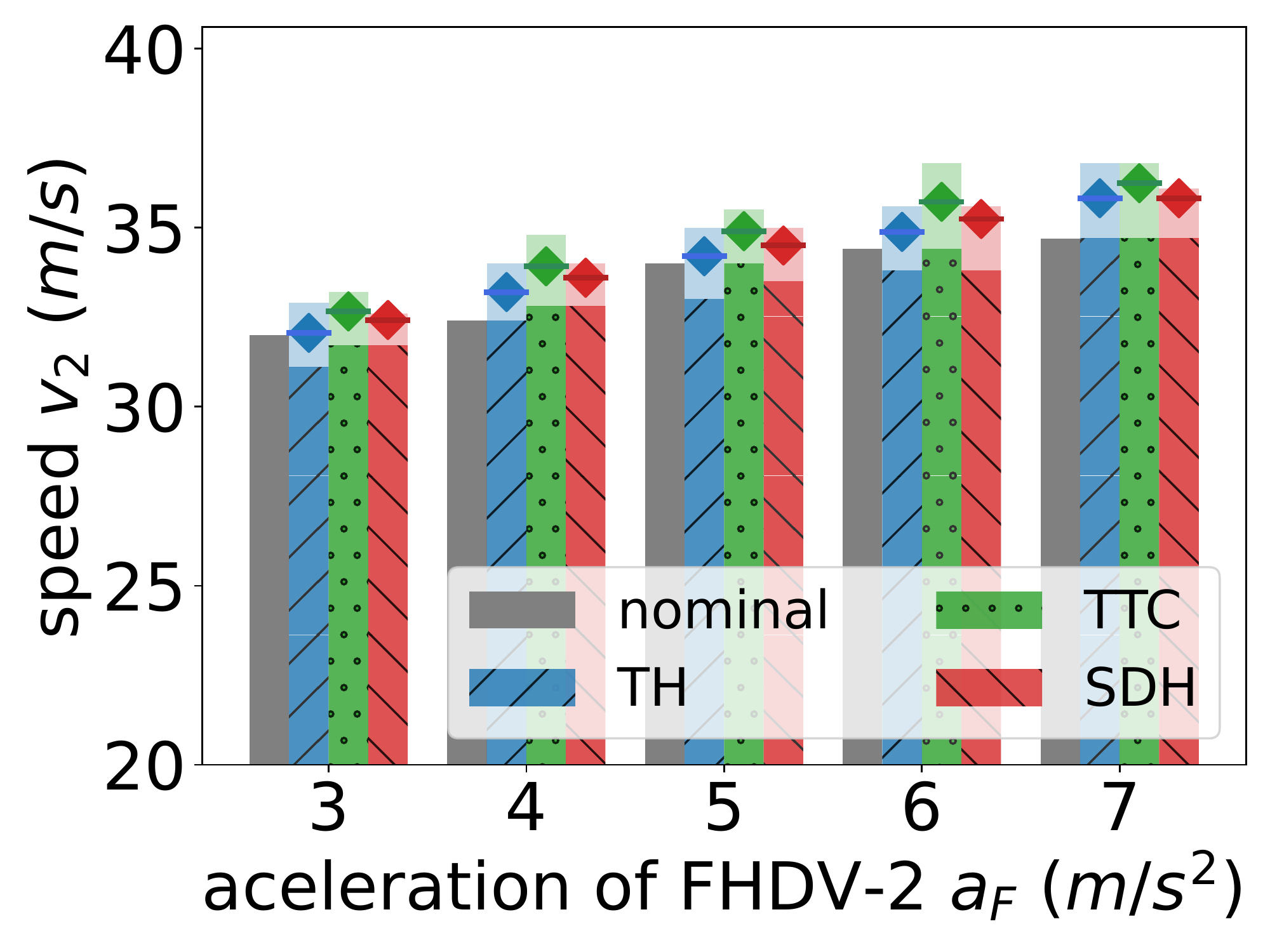}}
\caption{
Comparison of STC implemented with the time headway (TH), time-to-collision (TTC) and stopping distance headway (SDH) safe spacing policies given in~\eqref{eq:safety:TH}-\eqref{eq:safety:SDH}, using the metrics introduced in Fig.~\ref{fig:sim:region SDH}.
Considering the whole vehicle chain, the three spacing policies show similar performance (with SDH being slightly superior to the other two policies).
This shows the compatibility of STC with various safe spacing policies.
The results are shown for a range of parameter $\tau$ (that is time headway or time-to-collision).
For each safe spacing policy, dark bars and light bars represent the minimum and maximum values, whereas the horizontal line with marker indicates the average size of safe velocity perturbations over the range of $\tau$.}
\label{fig:sim:region compare}
\end{figure}

\begin{table}[t]
    \centering
    \caption{Parameters used for simulations driven by the NGSIM data}
    \label{tab:parameters_NGSIM}
    \begin{tabular}{cccccc}
    \hline
    Vehicle & Variable & Symbol & Value & Unit\\
    \hline
    \multirow{4}{*}{all}
    & equilibrium velocity & $v^\star$ & $8$ & m/s \\
    & equilibrium spacing & $s^\star$ & $14.8$ & m \\
    & braking limit & $a_{\min}$ & $-7$ & m/s$^2$ \\
    & acceleration limit & $a_{\max}$ & $7$ & m/s$^2$ \\
    \hline
    \multirow{5}{*}{HDVs}
    & speed limit & $v_{\max}$ & $46.9$ & m/s \\
    & standstill spacing & $h_{\rm st}$ & $1.6$ & m \\
    & free flow spacing & $h_{\rm go}$ & $50$ & m \\
    & \multirow{2}{*}{model coefficients}
    & $\A$ & $0.16$ & 1/s \\
    & & $\B$ & $0.63$ & 1/s \\
    \hline
    \multirow{8}{*}{CAV}
    & initial spacing & $s_0(0)$ & $50$ & m\\
    & \multirow{4}{*}{gains of nominal controller}
    & $\mu_1$ & $-2$ & 1/s$^2$ \\
    & & $\mu_2$ & $-2$ & 1/s$^2$ \\
    & & $k_1$ & $0.2$ & 1/s \\
    & & $k_2$ & $0.2$ & 1/s \\
    & \multirow{3}{*}{parameters of safety filter}
    & $\tau$ & $3$ & s \\
    & & $\gamma$ & $10$ & 1/s \\
    & & $p$ & $100$ & 1/s$^2$ \\
    \hline
    \end{tabular}
\end{table}

\begin{figure}[t]
    \centering
    Nominal Controller \\[6pt]
    \includegraphics[width=0.23\textwidth]{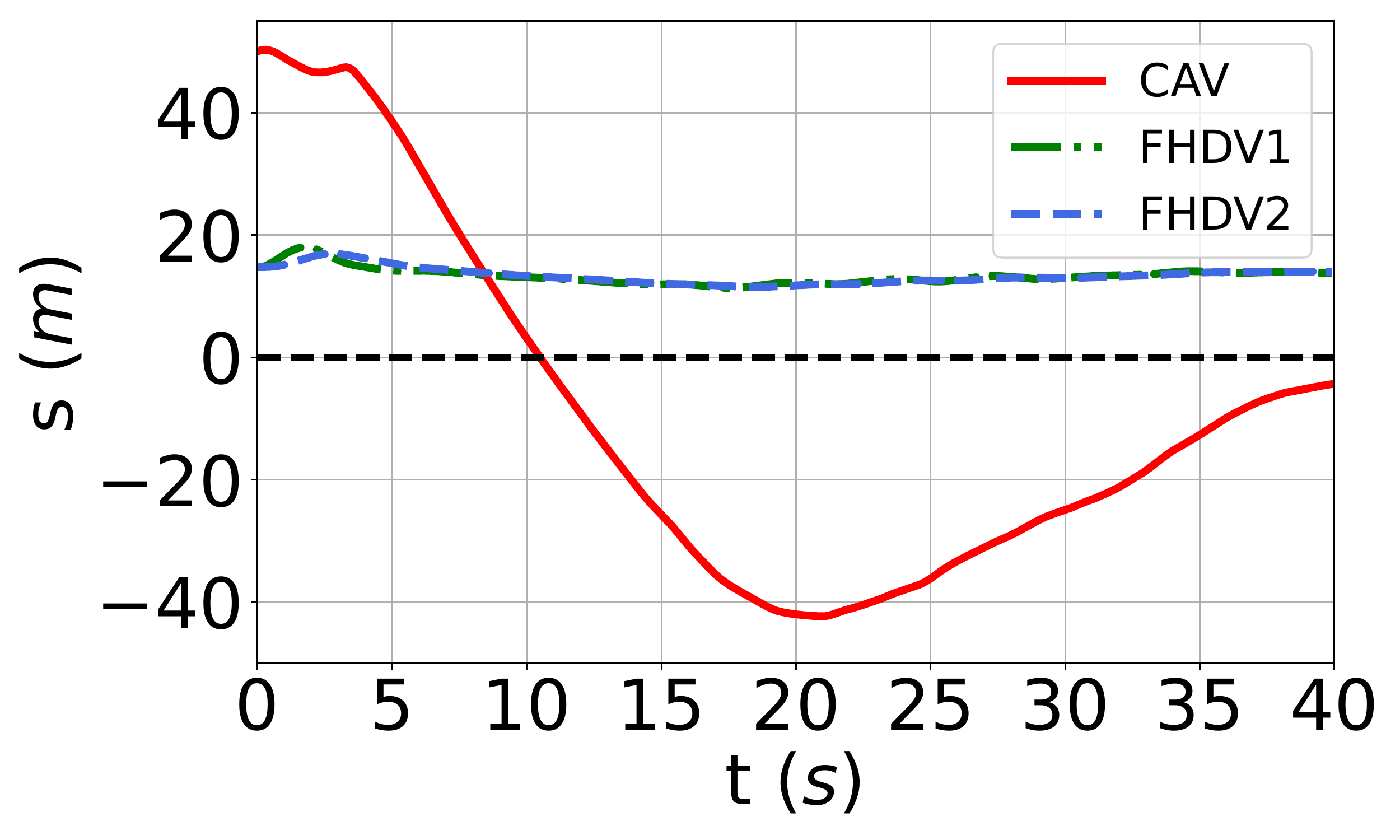}
    \includegraphics[width=0.23\textwidth]{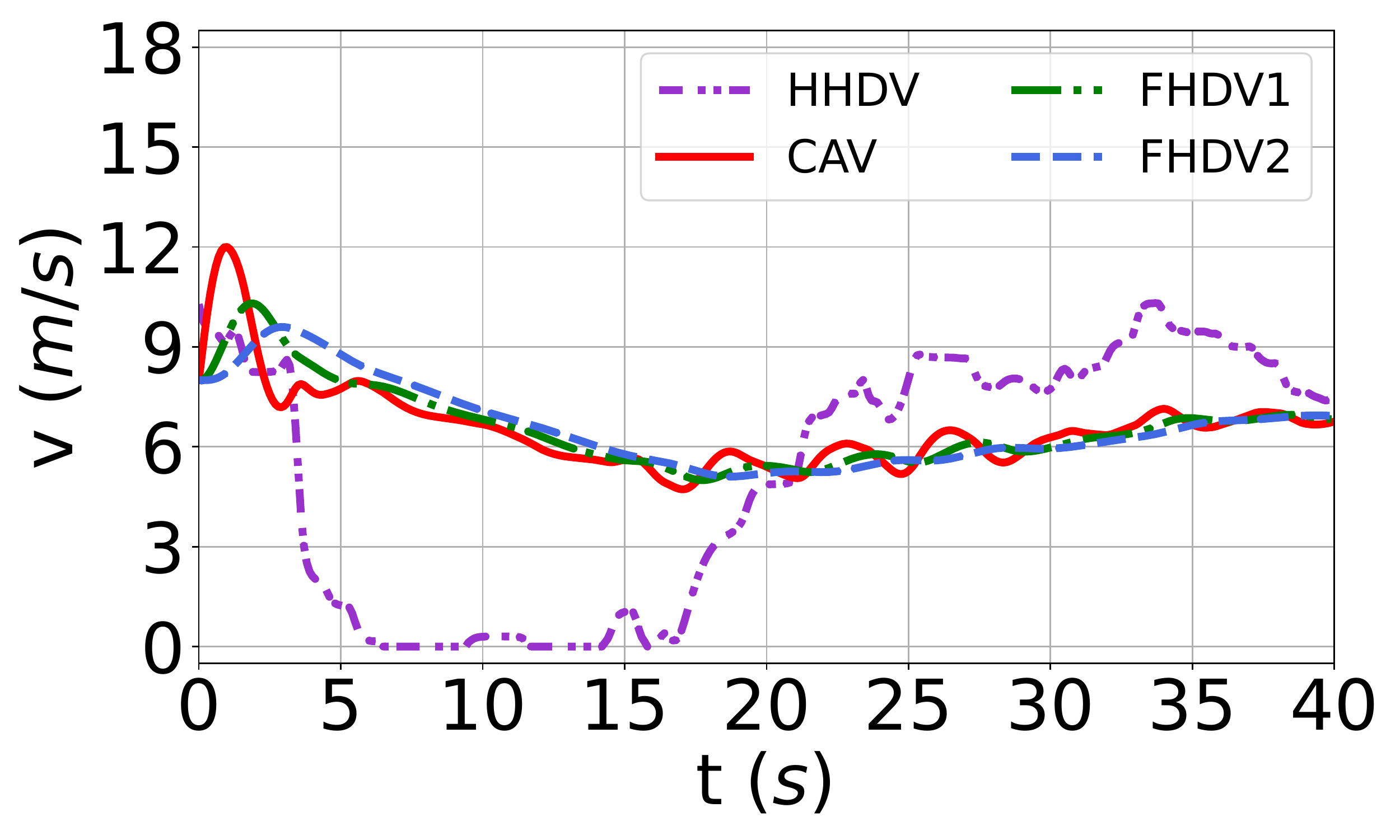}
    \includegraphics[width=0.23\textwidth]{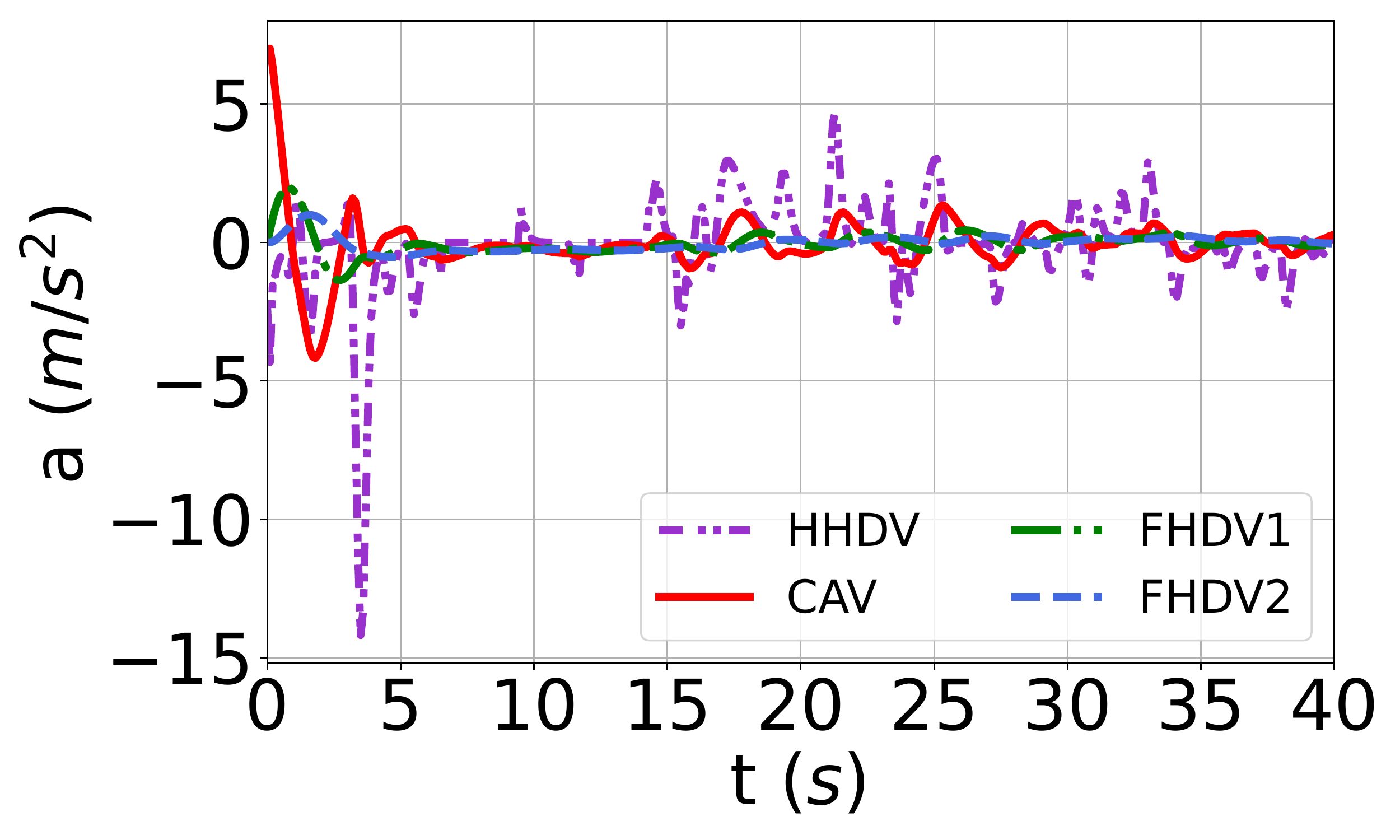}
    \includegraphics[width=0.23\textwidth]{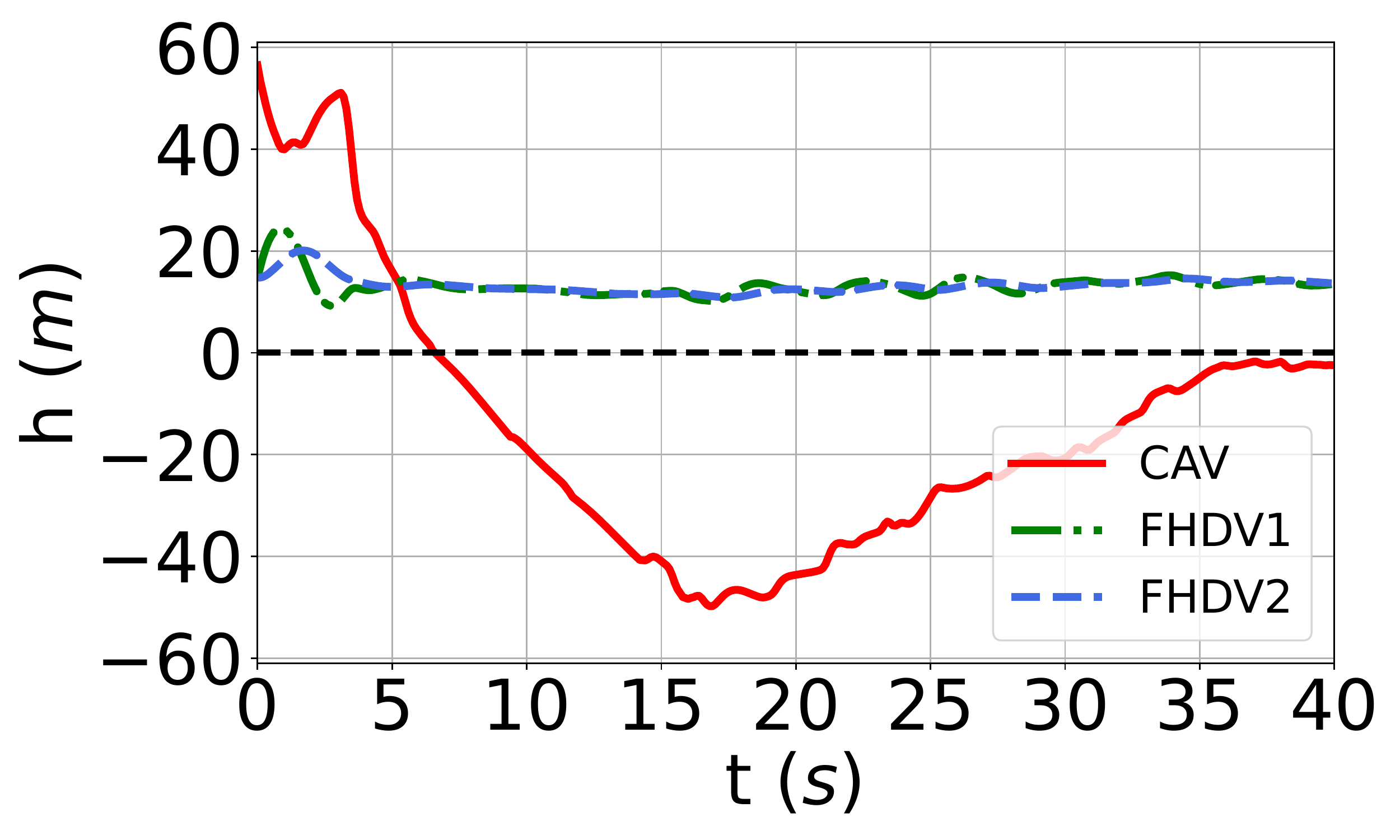}
    \\
    Proposed STC \\[6pt]
    \includegraphics[width=0.23\textwidth]{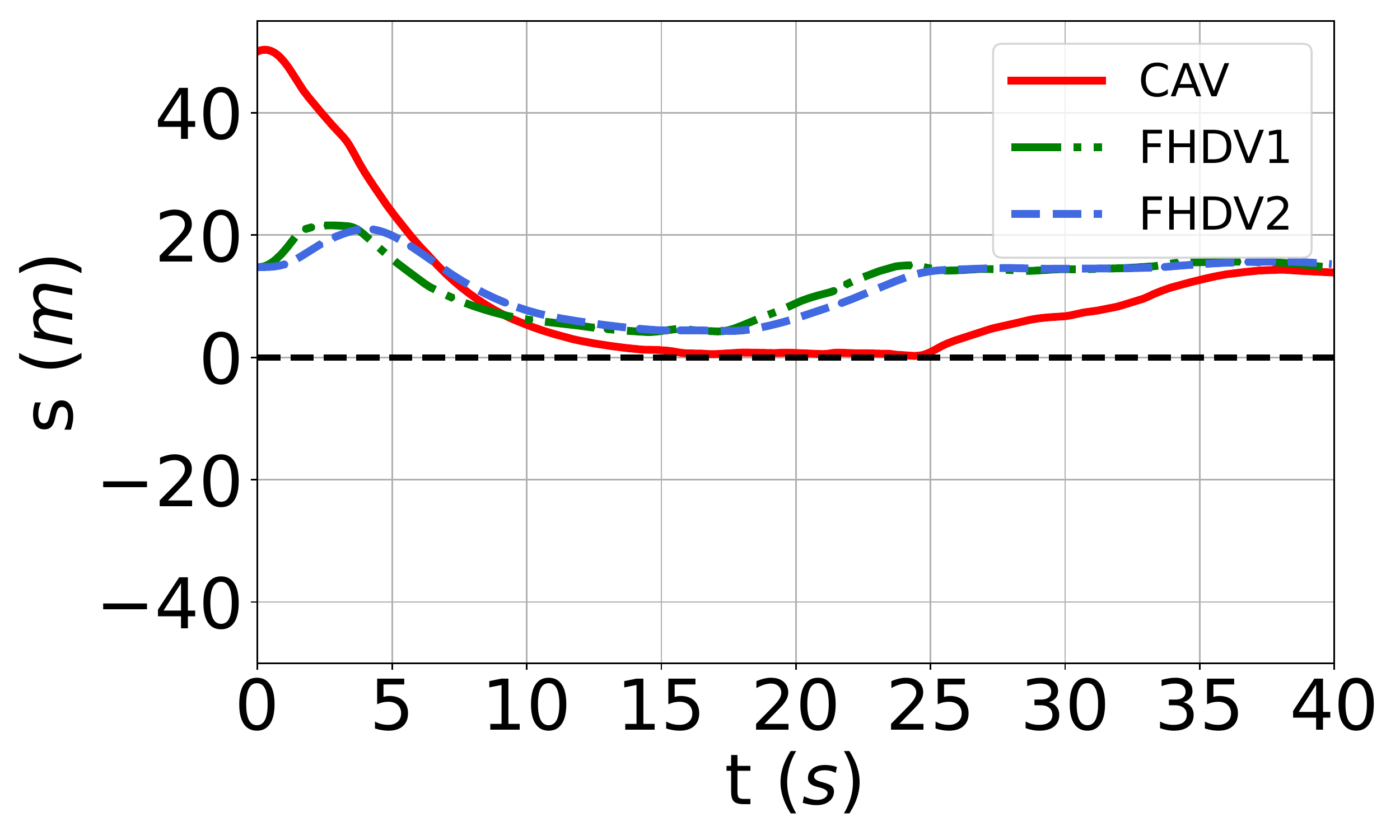}
    \includegraphics[width=0.23\textwidth]{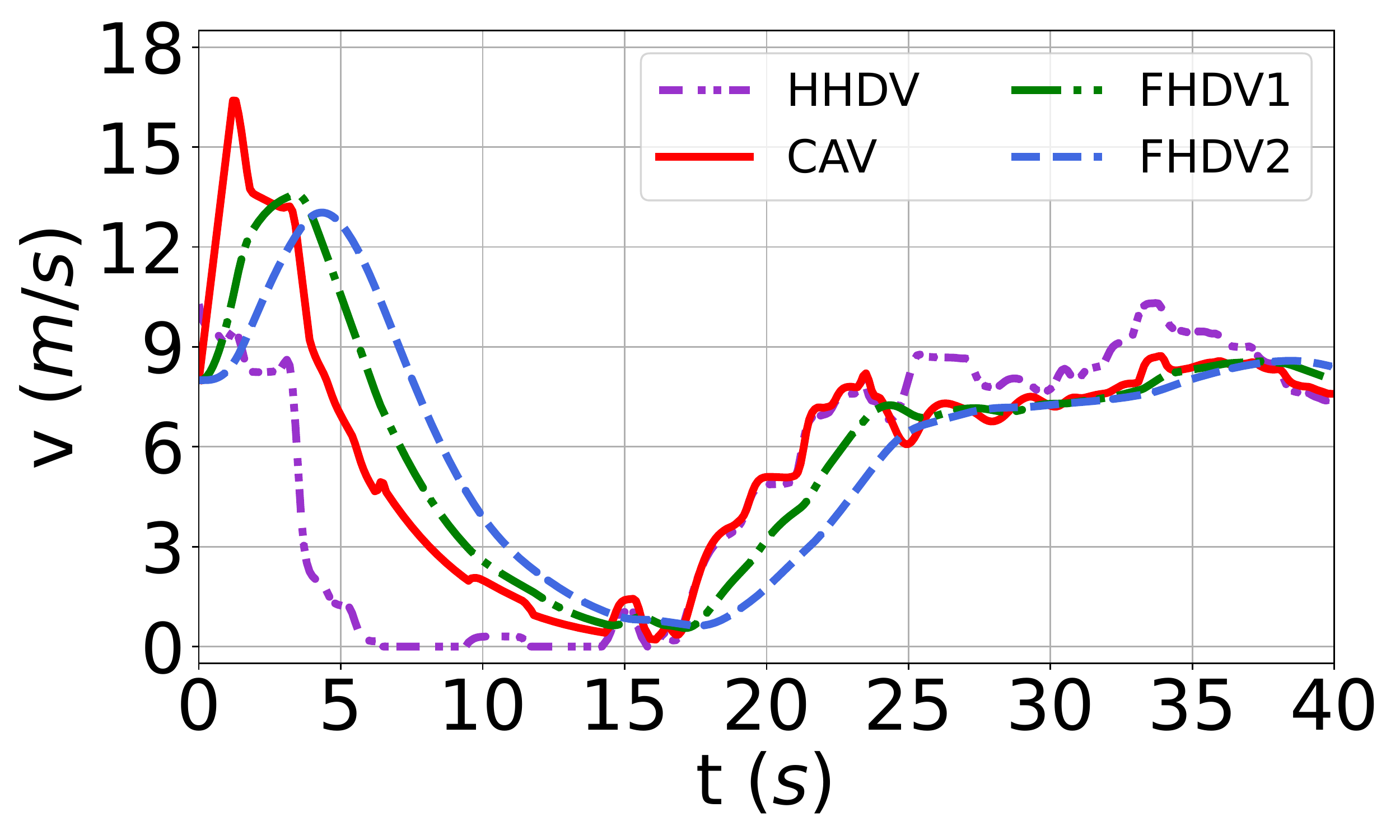}
    \includegraphics[width=0.23\textwidth]{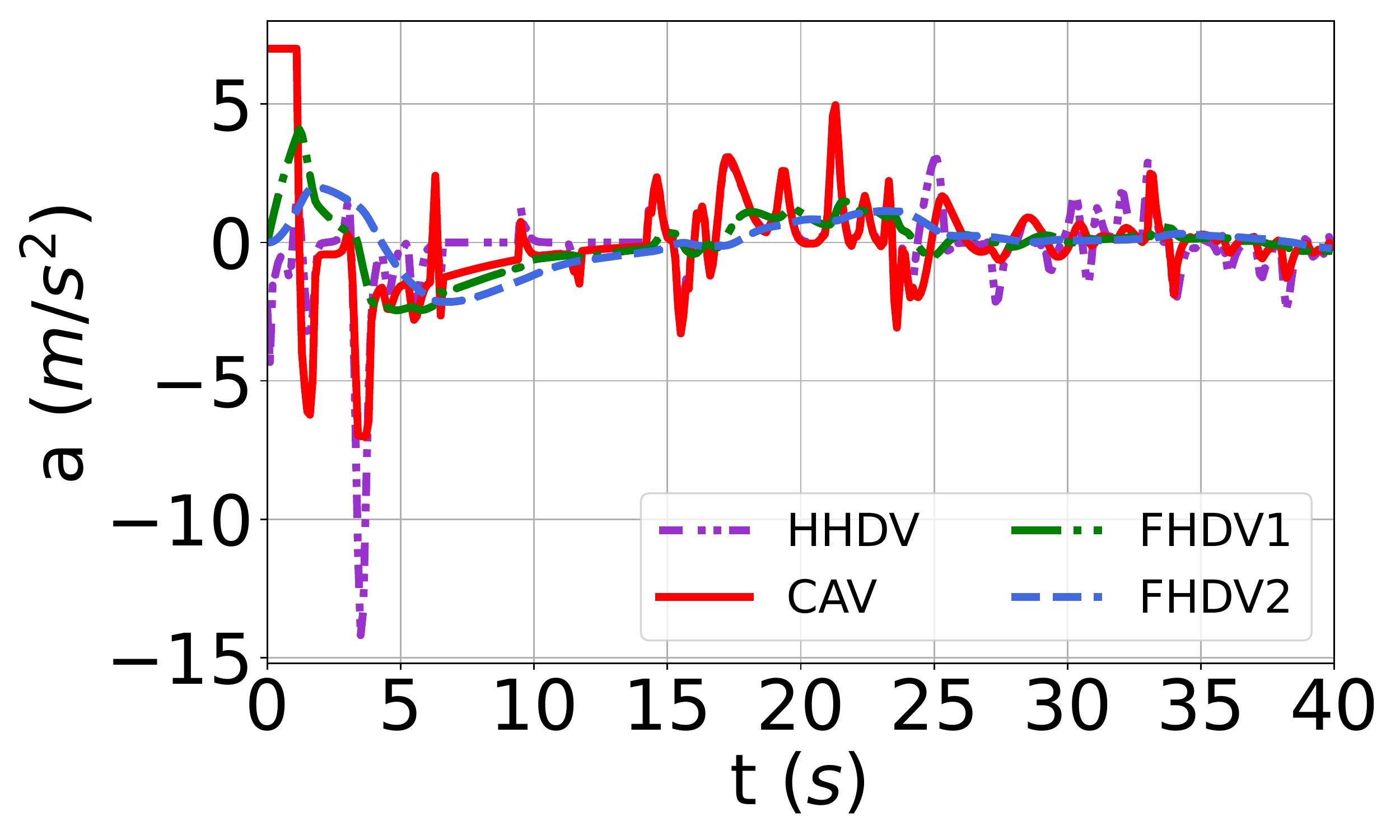}
    \includegraphics[width=0.23\textwidth]{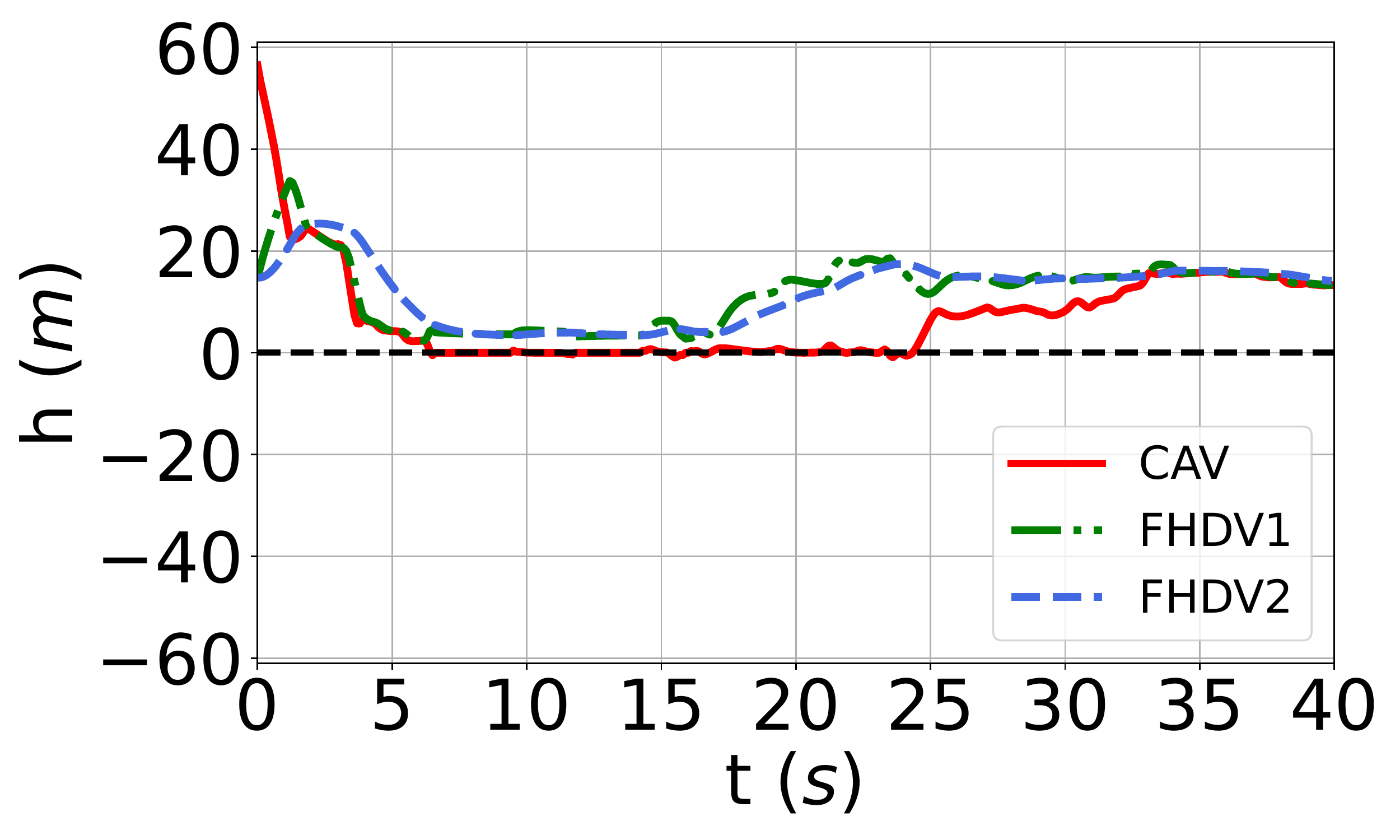}
    \caption{
    Implementation of safety-critical traffic control (STC) in a real-world traffic scenario.
    The motion of the head HDV is given by a trajectory from the NGSIM dataset, measured on highway I80 in California.
    Meanwhile the CAV and the following HDVs are simulated, using the nominal controller~\eqref{eq:nominal controller} and the proposed STC~\eqref{eq:QP}.
    Remarkably, STC (bottom) is able to maintain safety when the head vehicle performs a harsh brake that happened in real life.
    This way, STC improves safety compared to the nominal controller (top), while leveraging the beneficial properties of this controller.}
    \label{fig:sim:NGSIM trajectory}
\end{figure}

\begin{figure}[t]
    \centering
    \includegraphics[width=0.6\linewidth]{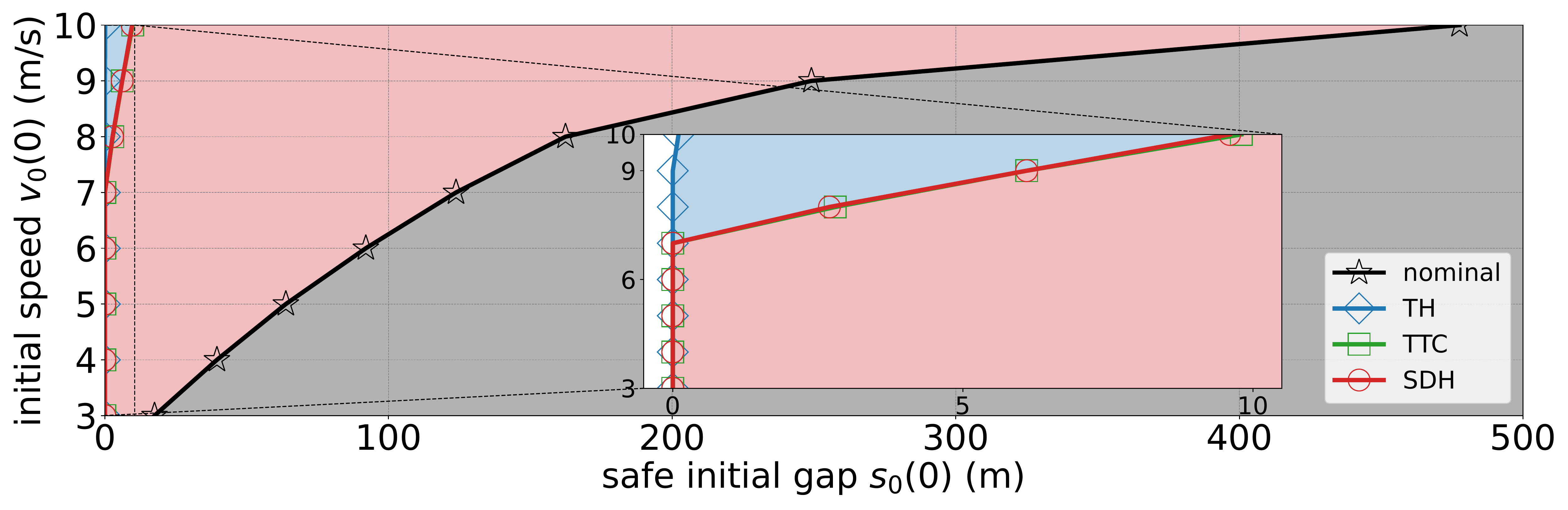}
    \caption{
    Safety regions of the nominal controller and STC with three safe spacing policies, for the real-world traffic scenario presented in Fig.~\ref{fig:sim:NGSIM trajectory}.
    If initial conditions are selected from the shaded domain (with gray, blue, green and red shadings for the cases of the nominal controller, and STC with the TH, TTC and SDH policies, respectively), then all vehicles avoid collision, even with limited acceleration capabilities.
    Clearly, the nominal controller requires large initial spacing to showcase safe behavior, while STC is able to prevent collisions for significantly smaller spacing.
    }
    \label{fig:sim:NGSIM:region}
\end{figure}

\subsection{Validation Using NGSIM Data}

Finally, we demonstrate the applicability of STC to real-world traffic scenarios by using experimental data for the trajectory of the HHDV while simulating the motions of other vehicles.

We leverage the Next Generation Simulation (NGSIM) dataset that is widely used in transportation research.
In particular, we use the reconstructed trajectory data from~\cite{montanino2015trajectory}, that contains information about the motions of vehicles on highway I80 in Emeryville, California, including their position and speed as a function of time, with a resolution of 0.1 second.
We select a vehicle (specifically, vehicle-2169) whose motion contains significant speed fluctuations, hence it poses a challenge for controlling a CAV behind it in a safety-critical fashion.
The trajectory of this vehicle, including its speed and acceleration are plotted in Fig.~\ref{fig:sim:NGSIM trajectory} by dashed purple line.
Notice the abrupt braking at around $t=4$ s, which can cause danger for the following vehicles.
We use this speed profile as the speed of the HHDV to simulate the motions of the CAV and the FHDVs.

Figure~\ref{fig:sim:NGSIM trajectory} presents simulation results that compare the nominal controller and STC (with the SDH policy).
The simulation parameters of the OVM \eqref{eq:OVM}-\eqref{eq:Vs} are given in Table~\ref{tab:parameters_NGSIM} (note that these were calibrated to the NGSIM dataset). The initial speeds of the CAV and the two FHDVs were set to $v_0(0) = v_1(0) = v_2(0) = v^\star$, and the initial spacings were chosen as $s_1(0) = s_2(0) = s^\star$ while $s_0(0)$ was set independently.
The figure clearly reveals that the proposed STC framework is able to successfully modify the nominal controller and endow it with safe behavior, while the nominal controller on its own could cause collision between the CAV and the HHDV.
Importantly, the harsh braking of the HHDV happened in real life, hence it is crucial to maintain safety guarantees while attempting to regulate the traffic flow with the motion of the CAV.

Finally, we further analyze to which extent STC improves safety.
We repeat the simulation in Fig.~\ref{fig:sim:NGSIM trajectory} for various initial conditions.
The motion of the HHDV is kept the same (given by the experimental data), and the initial velocity $v_0(0)$ and initial spacing $s_0(0)$ of the CAV are varied (while the rest of the initial conditions of FHDVs are determined by $v_0(0) = v_1(0) = v_2(0) = v^\star$ and $s_1(0) = s_2(0) = s^\star$). 
Figure~\ref{fig:sim:NGSIM:region} shows the result of executing the nominal controller and STC with the three different safe spacing policies.
For each case, thick lines with markers indicate the boundary of the shaded safety region where initial conditions are associated with collision-free motion for all vehicles.
STC significantly enlarges the safety region of the nominal controller and, as a result, improves the safety of mixed traffic.
Importantly, STC achieves this by incorporating and leveraging the nominal controller, while providing formal guarantees of safety.

\section{Conclusions}
\label{sec:conclusions}

In this paper, we proposed a safety-critical traffic control (STC) framework, in which the longitudinal motions of connected automated vehicles (CAVs) are regulated amongst connected human-driven vehicles (HDVs) in mixed-autonomy traffic flows.
The end goal of STC is to mitigate traffic congestions (i.e., achieve string stability), while guaranteeing safe, collision-free behavior.
We formulated STC as a safety filter using control barrier functions (CBFs), that endows a nominal controller capable of attaining string stability with formal safety guarantees.
To this end, we considered the safety of both the CAV and the following HDVs.
Moreover, we employed state observer-based CBFs to establish STC when the CAV does not have access to all the states of the following HDVs.
In our future work, we plan to investigate the effects of partial connectivity and address feedback, communication and actuation delays for the safety-critical control of mixed-autonomy traffic.

\bibliographystyle{plain}

\bibliography{ref.bib}

\end{document}